\documentclass[aps,prx,twocolumn,eqsecnum,superscriptaddress,10pt]{revtex4-2}

\usepackage{algorithm,algorithmic,amsmath}     

\usepackage{amssymb,graphicx,tikz,xcolor,colortbl,booktabs,multirow,siunitx,adjustbox}

\newtheorem{theorem}{Theorem}

\usepackage[justification=justified,singlelinecheck=false]{subcaption}
\usepackage[utf8]{inputenc}
\usepackage[T1]{fontenc}
\usepackage{booktabs,makecell,graphicx}    
\usepackage[abs]{overpic}
\usepackage[T1]{fontenc}
\usepackage[utf8]{inputenc}

\usepackage{graphicx,dcolumn,bm,physics,hyperref}
\usepackage[mathlines]{lineno}

\usepackage{ragged2e} 
\usepackage{caption}
\DeclareCaptionJustification{justified}{\justifying} 
\def\bea{\begin{eqnarray}}
\def\eea{\end{eqnarray}}

\usepackage[dvipsnames]{xcolor}

\begin{document}

\title{Landscape-Similarity-Guided Optimization in Divide-and-Conquer QAOA}

\author{Sokea Sang}
\affiliation{Department of AI Convergence, Pukyong National University, Busan 48513, Republic of Korea}
\author{Leanghok Hour}
\affiliation{Department of AI Convergence, Pukyong National University, Busan 48513, Republic of Korea}
\author{Sanghyeon Lee}
\affiliation{Department of AI Convergence, Pukyong National University, Busan 48513, Republic of Korea}
\author{Aniket Patra}
\affiliation{Department of Physics, Pukyong National University, Busan 48513, Republic of Korea}
\affiliation{Department of Physics, Indian Institute of Technology Kharagpur, Kharagpur 721302, West Bengal, India}
\author{Hee Chul Park}
\affiliation{Department of Physics, Pukyong National University, Busan 48513, Republic of Korea}
\author{Moon Jip Park}
\affiliation{Department of Physics, Hanyang University, Seoul 04763, Republic of Korea}
%
\author{Youngsun Han}\email{youngsun@pknu.ac.kr}
\affiliation{Department of AI Convergence, Pukyong National University, Busan 48513, Republic of Korea}

\date{\today}

\begin{abstract}

Divide-and-conquer strategies mitigate hardware constraints for the Quantum Approximate Optimization Algorithm (QAOA) on Noisy Intermediate-Scale Quantum (NISQ) devices by partitioning large interaction graphs into smaller, hardware-compatible sub-problems. However, this approach introduces a severe classical training bottleneck: a decomposition across $m$ boundary nodes generates $2^m$ distinct sub-problems that typically require independent optimization. In this work, we demonstrate that across diverse synthetic and real-world interaction graphs, the variational landscapes of these reduced QAOA instances actually exhibit a robust universality. Adapting the replica-overlap framework of spin-glass physics, we define a landscape-overlap order parameter $q$ to quantify geometric correlations between energy landscapes, revealing a sharp landscape-similarity transition as graph connectivity is tuned. 
Exploiting this, we introduce Doubly Optimized QAOA (DO-QAOA), an adaptive pipeline that collapses the sub-problems from $2^m$ distinct sub-problems into $K=\mathcal{O}(1)$ effective landscape classes. By performing optimization on a single representative sub-problem and dynamically transferring parameters to remaining sub-problems, DO-QAOA lowers runtime and quantum measurement overhead by orders of magnitude while maintaining a competitive Approximation Ratio Gap (ARG).

\end{abstract}

\maketitle

\section{Introduction}\label{sec:introduction}

\begin{table*}[ht]
\centering
\caption{Comparison of DO-QAOA with state-of-the-art variational quantum algorithms and QAOA-based methods.
Our method (DO-QAOA) retains the circuit reduction benefits of the Divide-and-Conquer (D\&C) strategy while eliminating the exponential training overhead
($\mathcal{O}(2^m \cdot N_{\text{shots}} \cdot N_{\text{iter}}) \rightarrow \mathcal{O}(K \cdot N_{\text{shots}} \cdot N_{\text{iter}})$)
via landscape-aware parameter transfer. 
For all evaluation benchmarks in this work, we empirically observe that the sub-problems collapse into a single landscape cluster ($K=1$); scenarios with $K>1$ are discussed in Section~\ref{sec:discussion}.}

\label{tab:overview_sota}
\resizebox{\textwidth}{!}{%
\begin{tabular}{lcccccc}
\toprule
\toprule
Method & Strategy & \makecell{Training\\Complexity} & \makecell{Circuit\\Depth} & \makecell{Parameter\\Space} & \makecell{Noise\\Resilience} \\
\midrule
Baseline QAOA & Global Optimization & $\mathcal{O}(N_{\text{shots}} \cdot N_{\text{iter}})$ & High & $2p$ & Low \\
HEA \cite{hea_qaoa} (Nature’17) & Hardware Efficient & $\mathcal{O}(N_{\text{shots}} \cdot N_{\text{iter}})$ & Medium & $3np$ & Medium \\
Red-QAOA \cite{red_qaoa} (ASPLOS’24) & Graph Reduction & $\mathcal{O}(N_{\text{shots}} \cdot N_{\text{iter}})$ & Reduced & $2p$ & Medium \\
Frozen Qubits \cite{FrozenQubits_Ayanzadeh2023} (ASPLOS’23) & D \& C & $\mathcal{O}(2^m \cdot N_{\text{shots}} \cdot N_{\text{iter}})$ & Low & $2p$ & High \\
\midrule
DO-QAOA (Ours) & \makecell{D \& C + \\ Transfer Learning} & $\mathcal{O}(K \cdot N_{\text{shots}} \cdot N_{\text{iter}})$ & Low & $2p$ & High \\
\bottomrule
\bottomrule
\end{tabular}%
}
\end{table*}

The Quantum Approximate Optimization Algorithm (QAOA) is a leading candidate for demonstrating near-term quantum advantage~\cite{farhi2014quantumapproximateoptimizationalgorithm,qaoa_fermionic_view_Wang2018, biology_qaoa_advantage}. However, on Noisy Intermediate-Scale Quantum (NISQ) hardware~\cite{NISQquantumcomputingin,nisq_hardware_contrain,not_all_swap_gate,sabre}, the deep circuits required for standard QAOA lead to severe error accumulation, drastically degrading performance~\cite{FrozenQubits_Ayanzadeh2023, dc-qaoa}. To mitigate these hardware constraints, divide-and-conquer strategies have emerged as a powerful paradigm, enabling the decomposition of large, highly connected graph problems into smaller, hardware-compatible sub-instances~\cite{FrozenQubits_Ayanzadeh2023, dc-qaoa}. A prominent example is the FrozenQubits technique~\cite{FrozenQubits_Ayanzadeh2023}, which partitions the interaction graph by fixing a small subset of high-degree nodes to classical states. Such structural decomposition approaches significantly reduce the required circuit depth and quantum gate overhead, allowing complex optimization tasks to be mapped onto contemporary noisy processors.

Despite its circuit-level benefits, the standard divide-and-conquer approach suffers from a fatal bottleneck. A partition with $m$ boundary cuts typically generates $2^{m}$ distinct reduced sub-problems, each with varying induced linear biases. Existing methodologies treat each instance as an entirely independent optimization task. This exponential proliferation of training sessions quickly becomes intractable even for modest $m$, negating the throughput benefits of graph partitioning and shifting the barrier from quantum fidelity to classical resource exhaustion.

Overcoming this exponential overhead requires a fundamental understanding of how local interventions such as variable freezing reshape the global variational landscape of QAOA. While prior works have extensively characterized the geometric structure of QAOA energy landscapes on parent graphs~\cite{qaoa_bounds_lotshaw2021empirical, transferability_optimal_qaoa_parameters_random_graph, ParameterTransfer_Shaydulin2023, fixed_angle_PhysRevA}, these analyses are restricted to single-instance formulations and do not address the structural consequences of graph decimation.

In particular, the behavior, correlations, and topological organization of the $2^m$ sub-problem landscapes induced by freezing $m$ qubits have, to our knowledge, not been systematically investigated. This represents a critical gap, as divide-and-conquer strategies inherently operate in this exponentially expanded instance space, where each sub-problem is typically assumed to possess an independent variational landscape.

This regime is precisely where standard divide-and-conquer approaches encounter a fundamental bottleneck. Under the assumption of landscape independence, generating $2^m$ sub-problems necessitates $2^m$ separate optimization procedures, resulting in an exponential measurement overhead. In contrast, our work provides the first empirical and theoretical evidence that these decimated sub-problems exhibit a high degree of landscape correlation and, in many cases, collapse into a small number of universal landscape classes.

Building on this insight, we leverage the emergent landscape structure to bypass redundant optimization. Rather than treating each sub-problem independently, we solve a single representative instance and transfer the optimal parameters across the entire class, thereby reducing the effective training complexity from exponential to reduce repeated optimization and yields substantial measured savings in runtime and quantum shots while maintaining a competitive Approximation Ratio Gap (ARG).

Universality in complex systems captures the idea that macroscopic structure is governed by a small set of coarse variables and is largely insensitive to microscopic details~\cite{spin_glasses_order_parameters, optimization_by_simulated_annealing, basic_notions_condensed_matter}. Recently, this viewpoint has expanded from conventional critical phenomena to the geometry of energy landscapes in optimization and machine learning, sharpening interest in how local operations or constraints can reshape global landscape structures~\cite{identifying_attacking_saddle_point, barren_plateaus_in_quantum_neural_network, theory_overparametrization_quantum_neural_network}.
Energy landscapes generated by shallow variational quantum circuits likewise often exhibit pronounced basin structures~\cite{qaoa_bounds_lotshaw2021empirical, transferability_optimal_qaoa_parameters_random_graph, ParameterTransfer_Shaydulin2023, fixed_angle_PhysRevA} and characteristic curvature patterns, motivating a natural question for divide-and-conquer QAOA: Do the $2^m$ sub-problems generated by graph partitioning share enough topological similarity to allow for universal parameter transfer?

This question becomes crucial for the divide-and-conquer QAOA on NISQ devices.
Specifically, if local constraints such as freezing qubits in divide-and-conquer strategies do not fundamentally distort the optimization landscape, one can bypass the intractable overhead of independent retraining by transferring optimal parameters across sub-problems.
\begin{figure}[t]
  \includegraphics[width=0.46\textwidth]{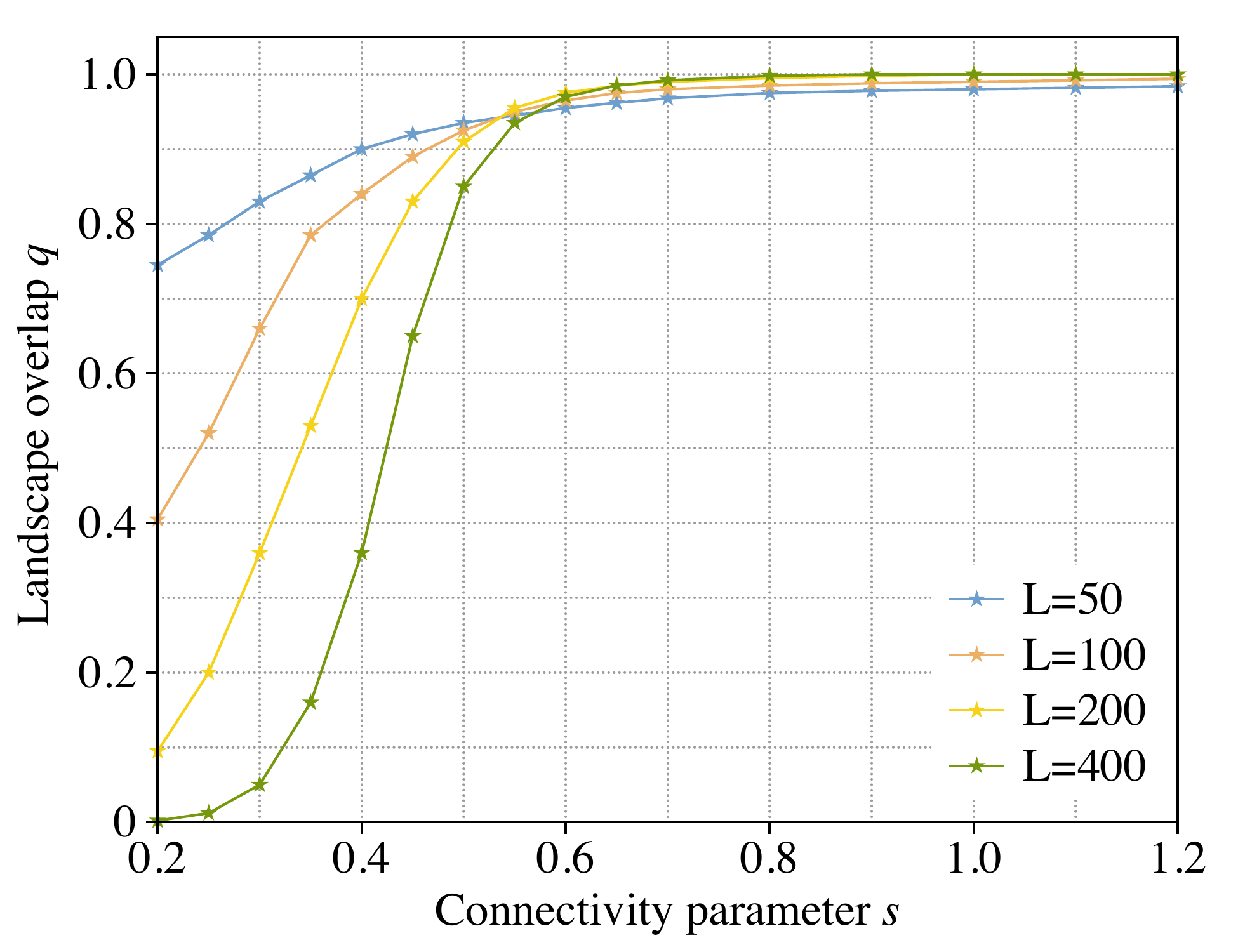}
  \caption{
    Landscape similarity phase transition. Order parameter scaling: landscape overlap $q$ versus connectivity parameter $s$ for varying system sizes. A clear phase transition is marked by the crossing of $q(s)$ curves at $s_c \approx 0.6$, indicating a transition from a fragmented landscape (where sub-problems are unique) to a self-averaging landscape (where sub-problems share a universal basin structure).}
  \label{fig:percolation_phase_boundary_1}
\end{figure}
Here, we investigate landscape similarity in divide-and-conquer QAOA, motivated by universality in complex disordered systems~\cite{ universality_in_spin_glass_model_CARMONA2006215, basic_notions_condensed_matter}. Freezing a small set of core nodes often perturbs the instance microscopically, yet leaves the macroscopic geometry of the variational energy landscape, such as the hierarchy of basins and barriers, largely intact, so that exponentially many reduced sub-problems fall into a small number of universal landscape classes. We quantify this similarity in terms of correlations in the energy landscape over parameter space and show that it can remain high across many frozen assignments.

Motivated by the observed landscape correlations introduced above, we introduce Doubly Optimized QAOA (DO-QAOA) -- a divide-and-conquer strategy that reduces redundant training across reduced instances obtained by freezing $m$ qubits, where the native $2^m$ sub-problems collapse into $K = \mathcal{O}(1)$ effective classes. Exploiting this structure, DO-QAOA performs full variational optimization on a single representative sub-problem and transfers the resulting parameters to the remaining instances.
However, freezing qubits transforms their interactions into induced local magnetic fields (linear biases) on the active subgraph . Since these biases vary across frozen configurations, they act as perturbations that can distort the energy landscape. To ensure robustness, we introduce a physics-informed Bias-Aware Transfer Rule that quantifies this distortion and selectively triggers lightweight warm-start fine-tuning only when the landscape shifts significantly.

The emergence of landscape similarity can be understood as a propagation-limited universality~\cite{fixed_param_qaoa_brandao2018, bounds_generation_correllations_lieb_robinson, farhi2020quantum}. Freezing a small core perturbs the instance locally, but the induced information can only spread over a finite range set by the QAOA depth. The relevant competition is between the circuit light-cone~\cite{finite_group_velocity_quantum_spin_system, qaoa_fermionic_view_Wang2018}, which bounds the effective correlation length of a shallow ansatz, and the graph’s connectivity-controlled propagation, which determines whether perturbations percolate through the interaction network~\cite{bounds_generation_correllations_lieb_robinson, collective_dynamic_small_world}. 

To quantify this propagation, in Fig.~\ref{fig:percolation_phase_boundary_1} we utilize a percolation-type control parameter $s$ as a proxy of connectivity decay (how quickly interactions drop off, as detailed in Section~\ref{sec:motivation}). We find that landscape similarity exhibits a genuine phase transition as the graph crosses the corresponding connectivity threshold.
Above this threshold ($s > s_c$), local perturbations remain confined within the circuit light cone, yielding a self-averaging regime where frozen sub-problems remain in the same landscape phase with aligned basins and correlated curvature. Below it ($s < s_c$), perturbations propagate globally through the interaction network, yielding a fragmented regime where decimated sub-problems develop sample-specific landscapes. 
However, crucially, we empirically observe that the geometric centers of the optimization basins do not shift significantly, suggesting a robustness that extends beyond the strict theoretical bounds.

Many practically relevant graph classes, including power-law~\cite{agler2016microbial,clauset2016colorado,power_law_classification,house2015testing,measurement_analysis_online_social_network, RevModPhys.87.925} and real-world networks, contain only a small number of high-degree nodes~\cite{agler2016microbial, meghanathan2017complex, GAMERMANN2019122204, power_law_classification, house2015testing, measurement_analysis_online_social_network, RevModPhys.87.925}. Consequently, freezing a small number of qubits suffices to capture most circuit-depth reduction benefits, while freezing additional nodes yields diminishing returns (we discuss in more detail in Section~\ref{sec:discussion} and Appendix~\ref{sec:appendix_limits}).

We evaluated DO-QAOA on more than 6{,}000 circuits spanning synthetic power-law and regular graphs~\cite{barabasi1999emergence, regular_sk_graphs, harrigan2021, solve_spin_glass} as well as real-world networks, including AIDS, Linux, and IMDb datasets~\cite{aids_riesen2008iam, linux_graph, imdb}. 
Across all benchmarks, DO-QAOA achieved an Approximation Ratio Gap (ARG) that matches or improves upon full variational optimization. In terms of efficiency, it reduced the total quantum shot count by a factor of $280\times$ to $385\times$ and wall-clock runtime by $4\times$ to $23\times$, depending on graph topology, relative to standard divide-and-conquer baselines.
We provide empirical evidence that sub-problems generated by graph partitioning exhibit high landscape correlation, allowing the collapse of the optimization space from $2^m$ decimated sub-problems to $K$ landscape classes with $K \ll 2^m$. Finally, we introduce an end-to-end pipeline that integrates graph partitioning with a landscape universality optimization strategy, compared to standard divide-and-conquer methods as summarized in Table~\ref{tab:overview_sota}.

\begin{figure*}[t]
    \centering
    
    \begin{subfigure}[b]{0.3\textwidth}
        \centering
        \begin{overpic}[width=\textwidth]{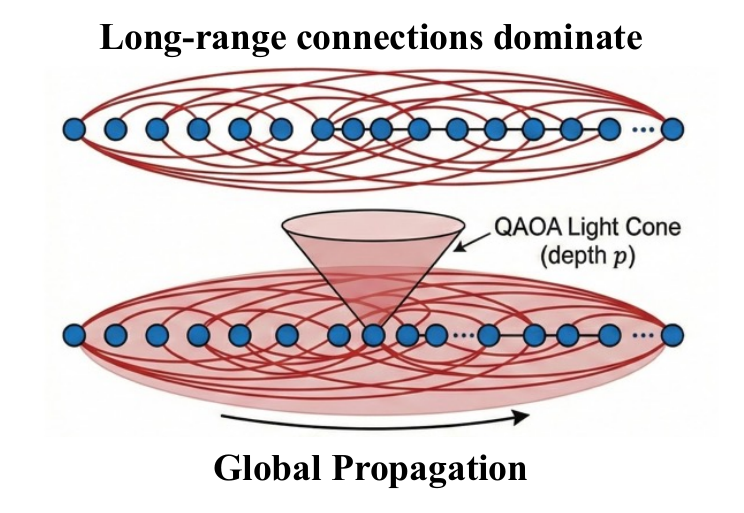}
            \put(5, 130){\textbf{(a)}}
        \end{overpic}
        \vspace{-0.8em} 
        \label{fig:low_overlab_a}
    \end{subfigure}
    \hspace{0.3em} 
    \begin{subfigure}[b]{0.3\textwidth}
        \centering
        \begin{overpic}[width=\textwidth]{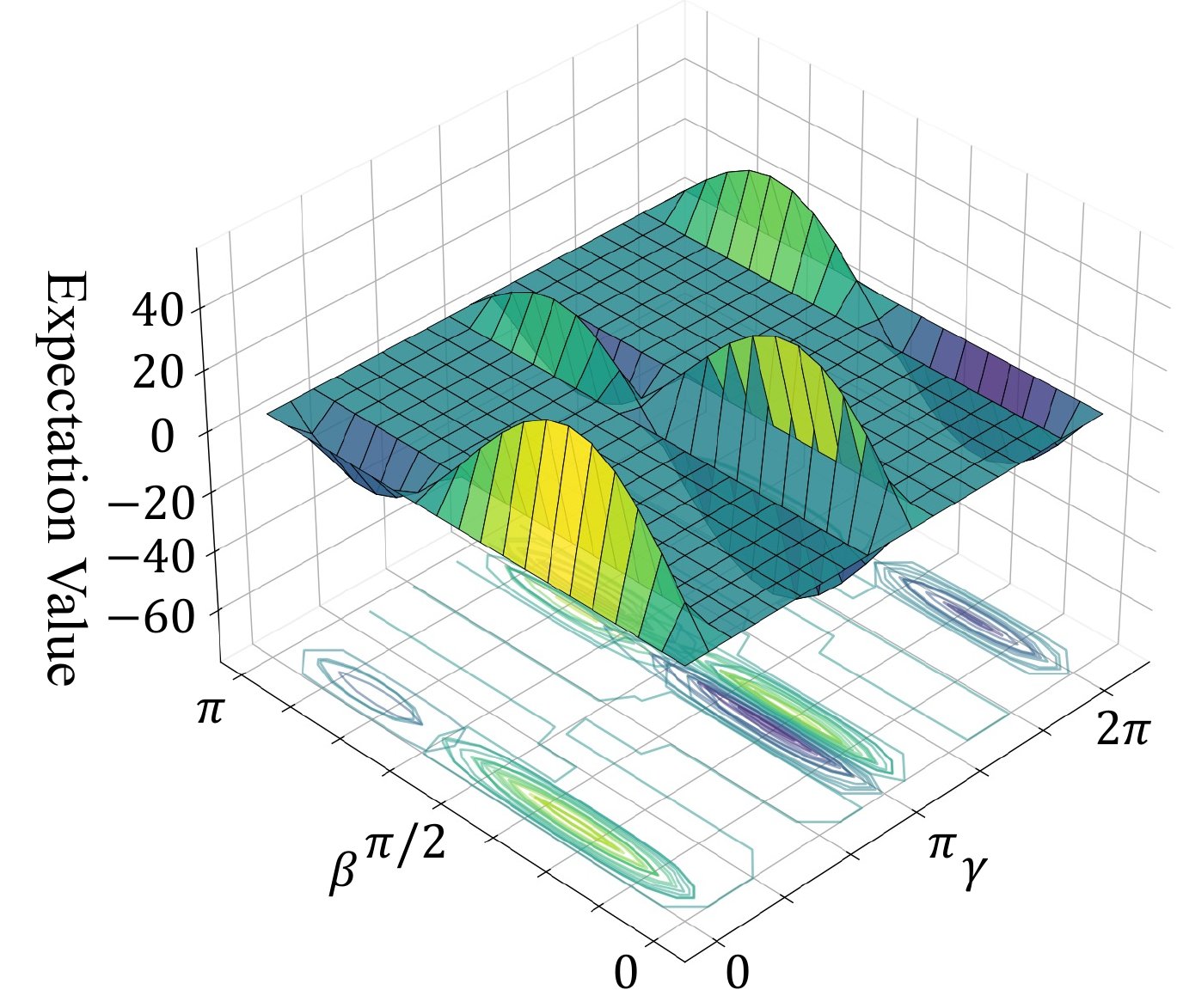}
            \put(5, 130){\textbf{(b)}}
        \end{overpic}
        \vspace{-0.8em}
        \label{fig:low_overlab_b}
    \end{subfigure}
    \hspace{0.3em}
    \begin{subfigure}[b]{0.3\textwidth}
        \centering
        \begin{overpic}[width=\textwidth]{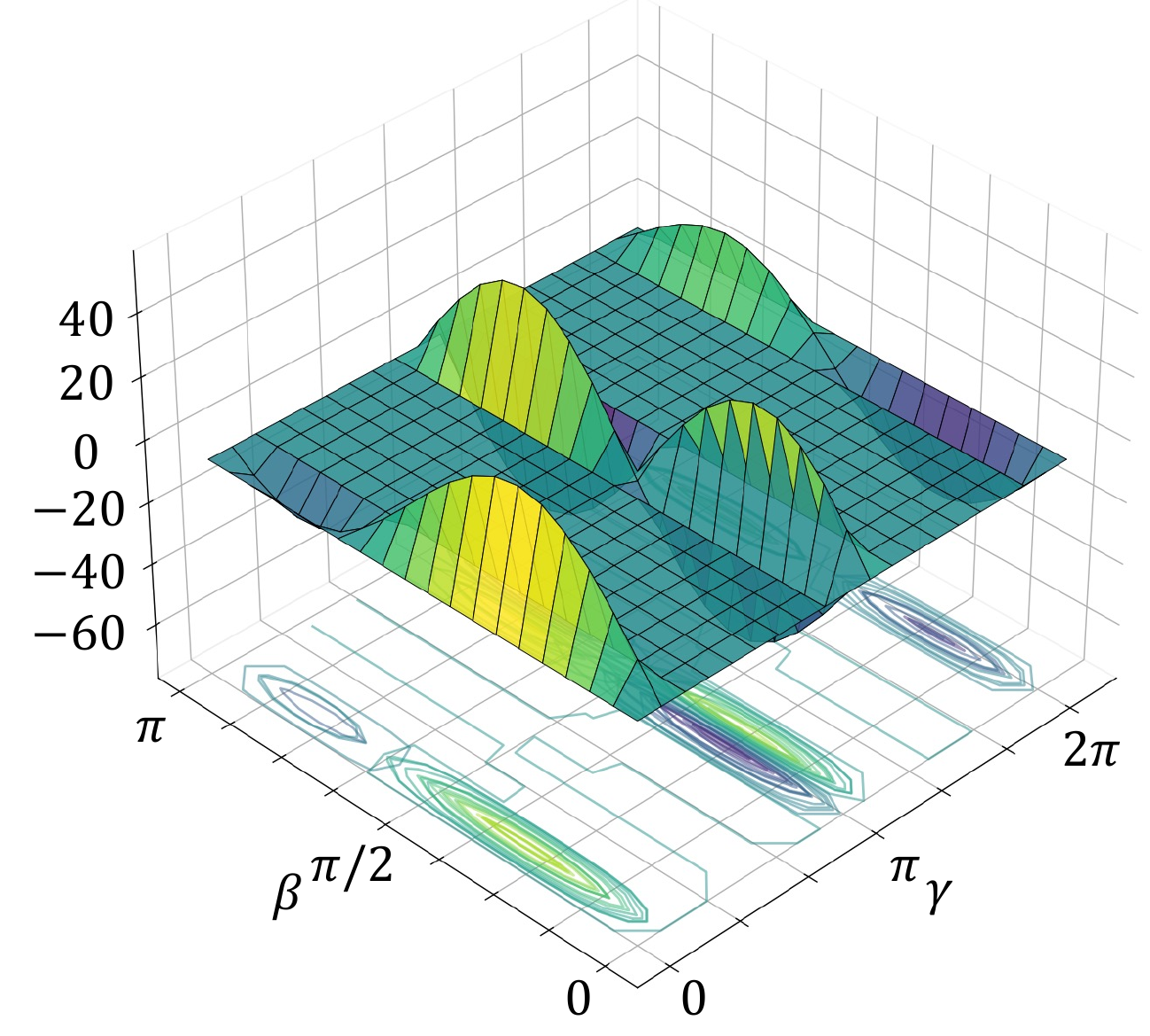}
            \put(5, 130){\textbf{(c)}}
        \end{overpic}
        \vspace{-0.8em}
        \label{fig:low_overlab_c}
    \end{subfigure}

    \begin{subfigure}[b]{0.3\textwidth}
        \centering
        \begin{overpic}[width=\textwidth]{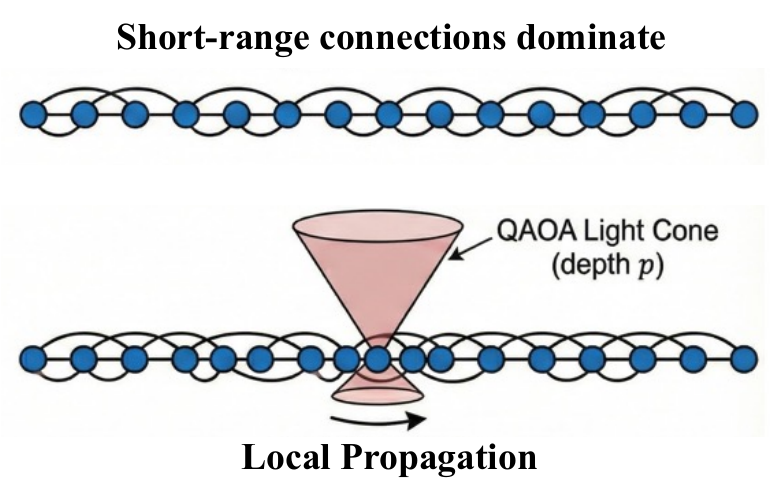}
            \put(5, 130){\textbf{(d)}}
        \end{overpic}
        \vspace{-0.8em}
        \label{fig:high_overlab_a}
    \end{subfigure}
    \hspace{0.3em}
    \begin{subfigure}[b]{0.3\textwidth}
        \centering
        \begin{overpic}[width=\textwidth]{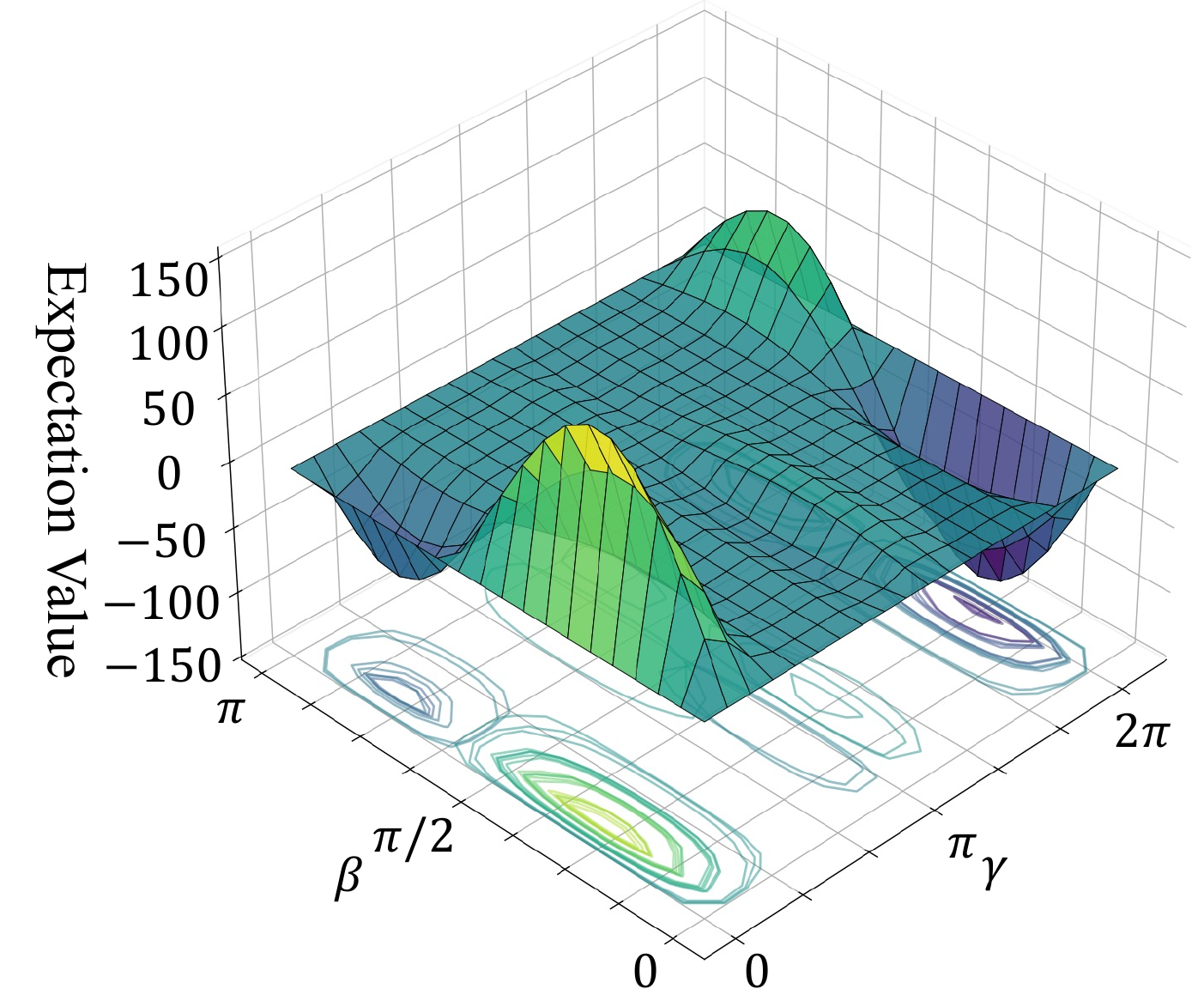}
            \put(5, 130){\textbf{(e)}}
        \end{overpic}
        \vspace{-0.8em}
        \label{fig:high_overlab_b}
    \end{subfigure}
    \hspace{0.3em}
    \begin{subfigure}[b]{0.3\textwidth}
        \centering
        \begin{overpic}[width=\textwidth]{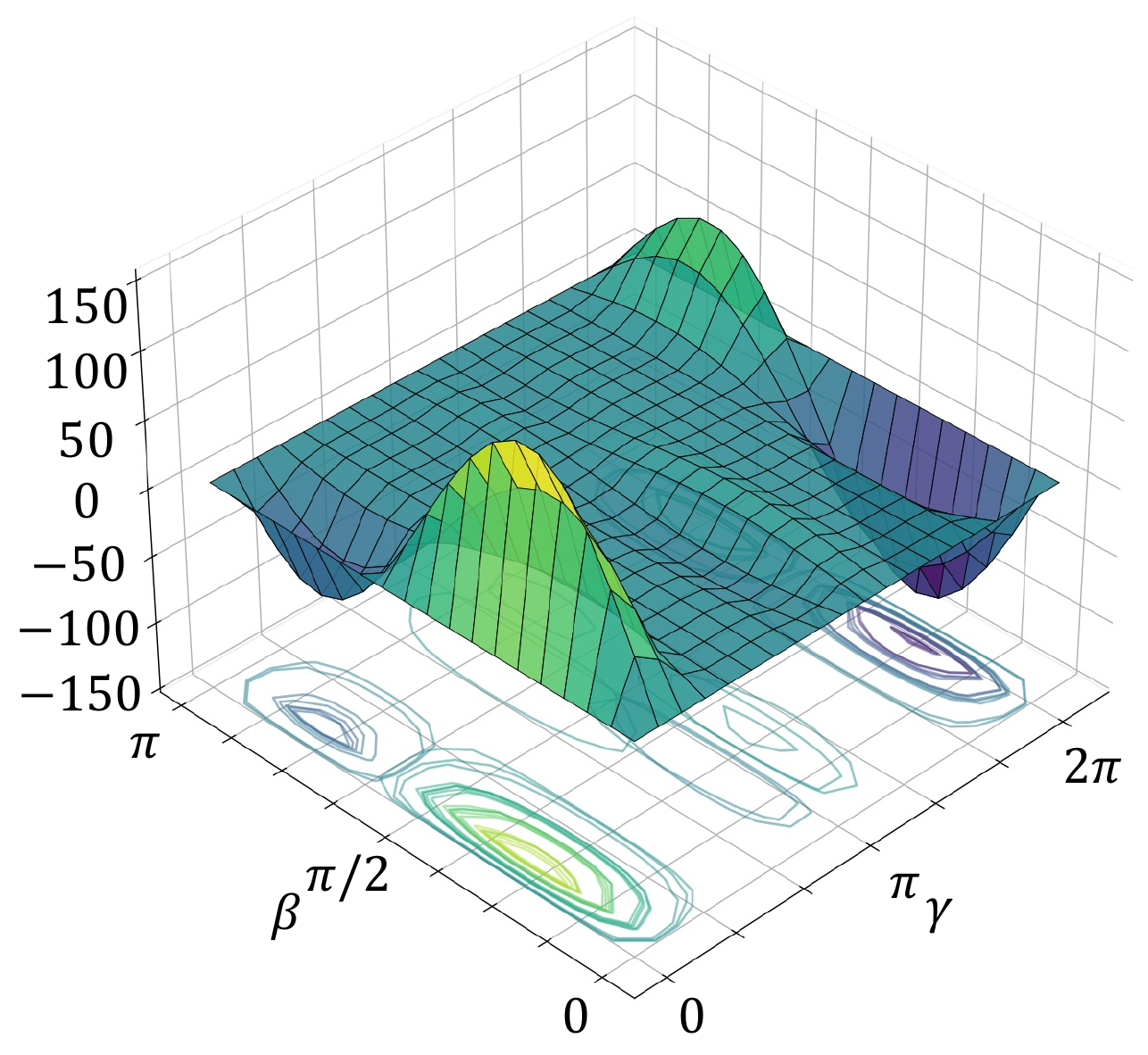}
            \put(5, 130){\textbf{(f)}}
        \end{overpic}
        \vspace{-0.8em}
        \label{fig:high_overlab_c}
    \end{subfigure}

    \vspace{0.2em}
  \caption{(a) Fragmented Phase ($s < s_c$) vs. (d) Self-Averaging Phase ($s > s_c$).
    (b--c) Energy Landscape of Fragmented Phase ($s < s_c$): Long-Range Connections Dominate (Minima are instance-dependent). 
    The vertical axis represents the expectation value $E(\bm{\gamma}, \bm{\beta})$ of the cost Hamiltonian. 
    %
    Local perturbations spread through the circuit light cone and propagate globally, causing decimated sub-problems to develop sample-specific landscapes with low overlap ($q \ll 1$). Despite this decorrelation, the dominant convex features in panels (b) and (c) still remain aligned.
    (e--f) Energy Landscape Self-Averaging Phase ($s > s_c$): Quasi-1D phase where short-range connections dominate.  Information propagation remains localized, resulting in a universal basin structure (Minima are universal) where landscapes become effectively instance-independent ($q \approx 1$).
    }
    \label{fig:percolation_phase_boundary_2}
\end{figure*}

\section{Landscape Similarity Phase Transition in QAOA}
\label{sec:motivation}

\subsection{Standard QAOA Ansatz}
\label{subsec:qaoa_decimation}

The QAOA is a variational hybrid algorithm designed to find approximate solutions to combinatorial optimization problems~\cite{quantum_wireless_scheduling, biology_qaoa_advantage, qualify_quantum_hard_industrial, multi_objective_Kotil2025}, with preliminaries provided in Appendix~\ref{sec:preliminaries}.  
The problem is encoded in a cost Hamiltonian $H_C$ acting on $N$ qubits. In this work, we focus on the general Ising model defined on an interaction graph $G=(V,E)$, given by:
\begin{equation}
H_C = \sum_{(i,j) \in E} J_{ij} Z_i Z_j + \sum_{i \in V} h_i Z_i,
\label{eq:cost_hamiltonian}
\end{equation}
where $Z_i$ denotes the Pauli-Z operator on qubit $i$, $J_{ij}$ represents the coupling strength between qubits $i$ and $j$, and $h_i$ represents the local field bias.
The algorithm prepares a parameterized ansatz state $|\psi_p({\gamma}, {\beta})\rangle$ by applying a sequence of alternating unitaries:
\begin{equation}
    |\psi_p({\gamma}, {\beta})\rangle = \prod_{k=1}^p e^{-i\beta_k H_B} e^{-i\gamma_k H_C} |+\rangle^{\otimes N},
    \label{eq:qaoa_ansatz}
\end{equation}
where $H_B = \sum_{j} X_j$ is the transverse-field mixer, and $p$ denotes the circuit depth. The classical optimizer iteratively updates the parameters $({\gamma}, {\beta})$ to minimize the expectation value $E({\gamma}, {\beta}) = \langle \psi_p | H_C | \psi_p \rangle$. 
The objective of QAOA is to find parameters $\gamma$ and $\beta$ that minimizes the expectation value $E({\gamma}, {\beta}) = \bra{\psi({\gamma}, {\beta})} H_C \ket{\psi({\gamma}, {\beta})}$. In the main text, we focus on depth $p=1$ for  visualization and the analysis that follows.

Divide-and-conquer strategies, such as \textit{FrozenQubits}~\cite{FrozenQubits_Ayanzadeh2023} (see Appendix~\ref{sub:divide_and_conquer_preliminary}), mitigate hardware noise by partitioning the graph. Freezing a subset of high-degree nodes $S$ into a configuration $\mathbf{z} \in \{+1, -1\}^{|S|}$ reduces the problem to an active subspace $R$. Crucially, this partitioning decomposes the effective Hamiltonian into two distinct parts:
\begin{equation}
    H^{(\mathbf{z})} = H_{\text{quad}}^{(R)} + H_{\text{lin}}^{(R, \mathbf{z})}.
    \label{eq:effective_hamiltonian}
\end{equation}
Here, $H_{quad}^{(R)} = \sum_{(i,j) \in E_R} Z_i Z_j$ represents the internal interactions of the active subgraph, where $E_R = \{(i,j) \in E \mid i,j \in R\}$ denotes the set of edges connecting qubits strictly within the active partition $R$. 
The variation between sub-problems is entirely contained within the linear bias term $H_{lin}^{(R,z)} = \sum_{j \in R} \tilde{h}_j^{(z)} Z_j$, where the local fields are shifted by the frozen values according to $\tilde{h}_j^{(z)} = h_j + \sum_{k \in \mathcal{N}(j) \cap S} J_{kj} z_k$. Here, $\mathcal{N}(j) = \{k \in V \mid (j,k) \in E\}$ denotes the set of immediate neighbors of node $j$ in the original graph. 
Importantly, this quadratic term $H_{quad}^{(R)}$ remains invariant across all possible frozen configurations.
Standard approaches treat these $2^m$ instances as uncorrelated optimization tasks, necessitating independent optimization loops. Consequently, the total runtime scales as $\mathcal{O}(2^m \cdot N_{\text{shots}} \cdot N_{\text{iter}})$, shifting the barrier from quantum fidelity to classical computational intractability. Even a modest partition of $m=10$ leads to 1024 separate training sessions, negating the throughput benefits of parallelization.

\subsection{Landscape similarity transition}

The variational energy $E({\gamma}, {\beta})$ generated by the quantum circuit is expressed as a weighted average of the energy distribution in the spin space as,
\begin{equation}\label{eq:energy_distribution}
    E({\gamma}, {\beta})= \sum_{z \in \{-1,1\}^N}|\langle z \mid \psi({\gamma}, {\beta}) \rangle|^2 H_C(z),
\end{equation}
where $H_C(z)=\langle z|H_C| z\rangle$ represents the spin Configuration landscape in the bitstring space. Physically, Eq.~\ref{eq:energy_distribution} can be interpreted as a convolution process, where the QAOA wavefunction acts as a probabilistic kernel that smooths the discrete spin landscape into a differentiable energy surface $({\gamma}, {\beta})$. Tuning the variational parameters is equivalent to reshaping the probability distribution to shift the focus of the kernel toward specific low-energy regions of the spin space.

We quantify landscape universality by the order parameter $q(s)$, defined as the mean pairwise overlap among $M$ frozen replicas $\mathcal Z=\{z^{(1)},\dots,z^{(M)}\}$ sampled on a graph drawn with connectivity parameter $s$,
\begin{equation}
q(s)=\frac{2}{M(M-1)}\sum_{k=1}^{M}\sum_{l=k+1}^{M} S_{kl}.
\end{equation}
Here $S_{kl}$ is the cosine similarity between standardized energy surfaces,
\begin{equation}
\begin{aligned}
S_{kl}=\frac{\langle \mathrm{vec}(\hat E_k),\,\mathrm{vec}(\hat E_l)\rangle}
{\|\mathrm{vec}(\hat E_k)\|\,\|\mathrm{vec}(\hat E_l)\|}, \\
\quad
\hat E_k(\gamma,\beta)\!=\!\frac{E_k(\gamma,\beta)-\langle E_k\rangle}{\mathrm{std}(E_k)},
\end{aligned}
\end{equation}
where $\langle\cdot\rangle$ denotes an average over the discretized $(\gamma,\beta)$ grid. Concretely, we evaluate $E_k$ on $D$ grid points, flatten the values into a vector $E_k\in\mathbb{R}^D$, and standardize it as $\hat E_k=(E_k-\mu_k)/\sigma_k$, with $\mu_k$ and $\sigma_k$ the mean and standard deviation over the grid. The dependence on $s$ enters through graph topology: couplings $J_{ij}\neq 0$ are assigned via Bernoulli trials with success probability $p(r_{ij})$ from Eq.~\ref{eq:lrp}, which decays with distance as a power law controlled by $s$. Since $E_k(\gamma,\beta)$ is determined by these interactions, the overlap statistics $\{S_{kl}\}$, and hence $q(s)$, are physically governed by the underlying connectivity $s$.

\begin{figure*}[t] 
    \centering 
    \begin{subfigure}[b]{0.46\textwidth}
        \centering
        \begin{overpic}[width=\textwidth]{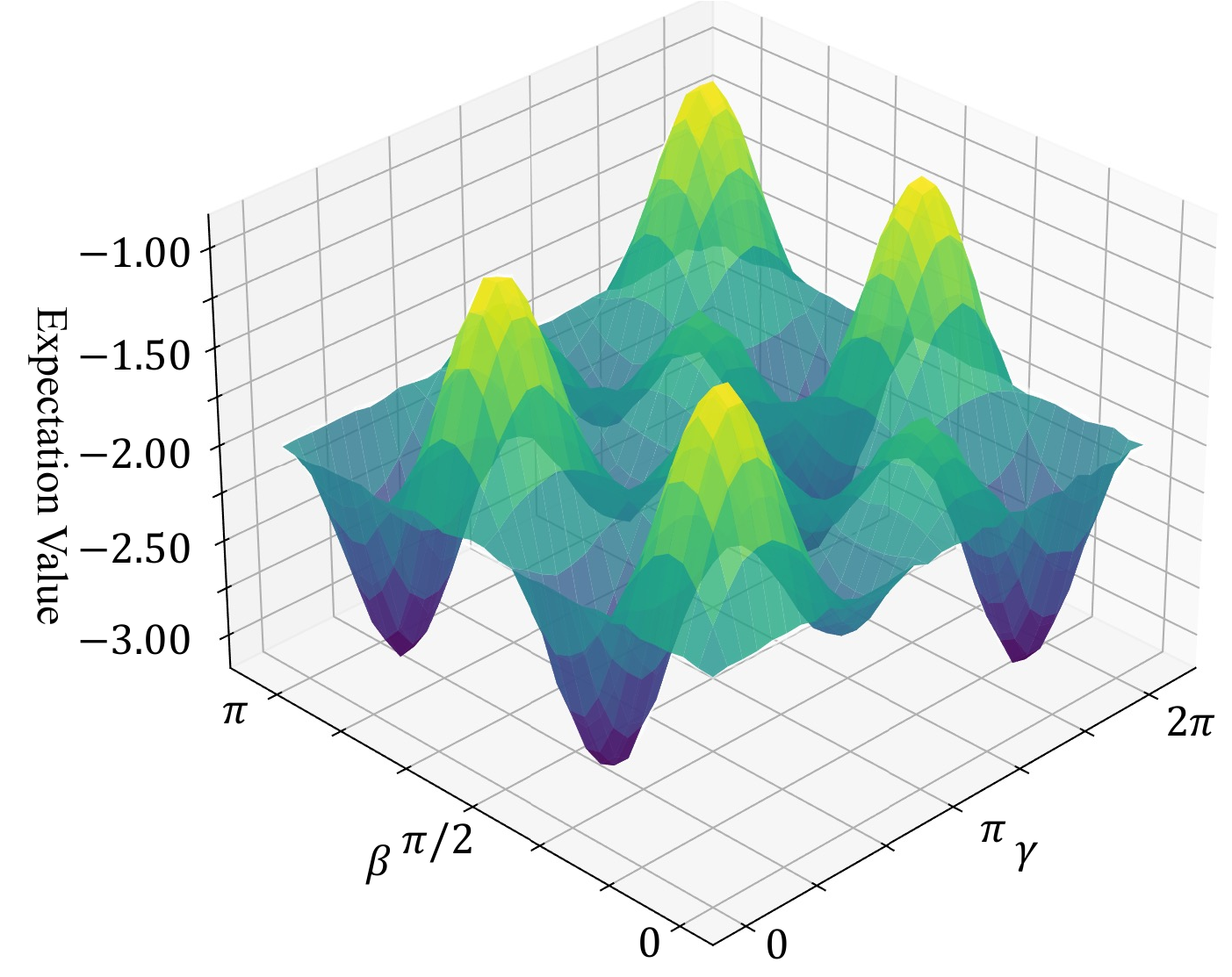}
            \put(2, 160){\textbf{(a)}}
        \end{overpic}
        \label{fig:ideal_landscap}
    \end{subfigure}
     \hfill 
     \begin{subfigure}[b]{0.46\textwidth}
        \centering
        \begin{overpic}[width=\textwidth]{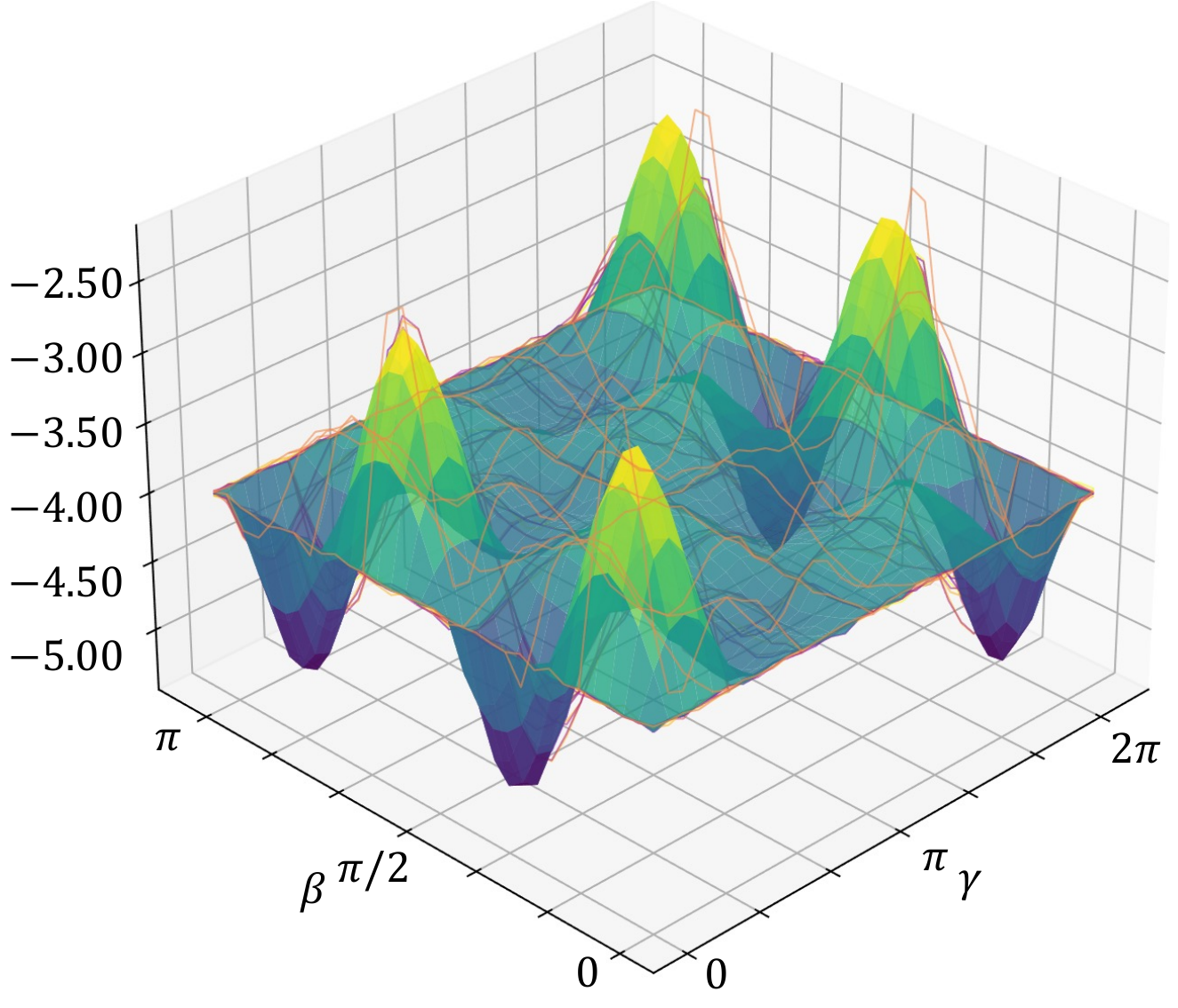}
            \put(0, 160){\textbf{(b)}}
        \end{overpic}
        \label{fig:distort_landscap}
    \end{subfigure}
    \caption{Divide-and-conquer approach and energy landscapes. The energy landscapes of $2^3$ sub-problems (overlaid) show that the geometric features (minima/maxima) align closely, despite the disparate linear biases induced by different frozen bitstrings. (a) In the ideal landscape (No Coefficients), the hypothesis holds, so the eight landscapes are mathematically nearly identical. As a result, the eight wireframes overlap perfectly, making it appear as though only a single landscape is plotted. (b) The induced linear-term coefficients distort the landscape, causing the eight wireframes to separate or “fuzz out,” while still remaining similar to the selected representative sub-problem.}
\label{fig:landscapes}
\end{figure*}

We consider the power-law long-range percolation model~\cite{long_rang_percolation}, where each pair of vertices in the one-dimensional lattice with distance $r$ is connected with probability 
\bea\label{eq:lrp}
p(r)=1-\exp(-\, r^{-s}).
\eea
Here, $s$ is a structural control parameter for  graph ensembles that defines the decay of interaction range. In other words, this decay exponent $s$ sets the range of nonlocal connections. The long-range percolation model exhibits the percolation transitions. For the graph diameter $D_L$\footnote{The graph diameter is formally defined as the maximum shortest-path distance (geodesic distance) between any pair of nodes in the network: $D_L \equiv \max_{i,j \in V} d(i,j)$, where $d(i,j)$ denotes the minimum number of edges required to connect node $i$ to node $j$.} on a system of size $L$
the diameter remains bounded, $D_L\sim \mathcal{O}(1)$ for $s<1$. For $1<s<2$,$D_L$ shows polylogarithmic growth, $D_L\sim \log L^{\Delta(s)}$. For $s>2$, the diameter grows linearly with system size, $D_L\sim L$.
In this sense, $s$ acts as a structural control parameter. Since a depth-$p$ QAOA circuit only correlates degrees of freedom within a $p$-hop light cone, changing $s$ effectively renormalizes the local neighborhood geometry experienced by each cost term. 

Fig.~\ref{fig:percolation_phase_boundary_1} shows the landscape-overlap order parameter $q$ as a function of the connectivity parameter $s$. Using the exact variational energy $E(\gamma,\beta)$~\cite{qaoa_fermionic_view_Wang2018}, we can analyze systems up to $L=400$. For each $s$, we sample an ensemble of graphs from Eq.~\ref{eq:lrp}, which fixes the interaction terms in the cost Hamiltonian, and compute $q$ by averaging pairwise landscape overlaps over instances for several system sizes $L$. We find that $q(s)$ increases monotonically with $s$ and approaches unity in the quasi-local regime, indicating an increasingly replica-independent landscape topography. A sharp transition is signaled by the crossing of $q(s)$ curves at $s_c \approx 0.6$, separating two scaling regimes. For $s<s_c$, $q$ decreases with $L$ (non-self-averaging): different frozen replicas yield appreciably distinct landscape shapes, although the basin centers of the optimization minima remain localized in a similar $(\gamma,\beta)$ vicinity (Fig.~\ref{fig:percolation_phase_boundary_2}(b-c)). For $s>s_c$, $q$ increases with $L$ and rapidly tends to $1$ (self-averaging), where the overlap distribution concentrates near $S_{kl}\simeq 1$ and a single common landscape class dominates across replicas (Fig.~\ref{fig:percolation_phase_boundary_2}(e-f)).

We also provide extensive numerical evidence in Appendix~\ref{app:higher_depths} demonstrating that this self-averaging behavior and the corresponding landscape similarity phase transition persist at higher circuit depths ($p \ge 2$). 

\begin{table}[h!]
\centering
\caption{Empirical verification of the Landscape Similarity Hypothesis. Metrics (averaged over all pairwise sub-problems) show that even with induced linear coefficients (\textit{With Coeffs}), the correlation remains near 1.0, indicating the landscape shape is preserved. Here $m$ is the number of frozen qubits, while MSE denotes the mean squared error in evaluating the Correlation, defined as the average geometric fidelity $r$ between two energy landscapes corresponding to distinct sub-problems. The $L_\infty$ Distance measures the average pointwise deviation between two energy surfaces. For a visual comparison, see also Fig.\ \ref{fig:pcontour_results}.}
\label{tab:landscape_verify}
\resizebox{\columnwidth}{!}{%
\begin{tabular}{llccc}
\toprule
\toprule
$m$ & Condition & MSE (Raw) & $L_\infty$ Distance & Correlation \\
\midrule
1 & No Coeffs & 1.17e-03 & 0.1396 & 0.99928 \\
 & With Coeffs & 1.20e-03 & 0.1339 & 0.99916 \\
\midrule
2 & No Coeffs & 8.01e-04 & 0.1013 & 0.99921 \\
 & With Coeffs & 1.76e-01 & 1.0427 & 0.79639 \\
\midrule
3 & No Coeffs & 2.28e-04 & 0.0527 & 0.99940 \\
 & With Coeffs & 8.53e-02 & 0.7709 & 0.80185 \\
\bottomrule
\bottomrule
\end{tabular}%
}
\end{table}

\subsection{Justification for the Landscape Similarity Hypothesis}
\label{subsec:empirical_validation}

We hypothesize that the QAOA energy landscape is primarily determined by the graph 
connectivity (the quadratic terms), with the linear biases acting 
as minor perturbations. This hypothesis is supported by both theoretical bounds 
(Appendix~\ref{app:theorectical_landscape}) and empirical evidence in 
Section~\ref{sub:emperical_evaluation}. To demonstrate this relation, we generated an 
ensemble of 10 random graphs using NetworkX~\cite{network_x}, each consisting of $n = 10$ nodes. 
Specifically, the graphs are drawn from the Erd\H{o}s--R\'{e}nyi $G(n, p)$ 
model~\cite{erdHos1960evolution} with edge probability $p = 0.5$, which produces a moderately 
dense random topology with an expected degree of $\bar{k} = (n-1)p = 4.5$. It is important 
to note that this graph ensemble does not reside in the self-averaging regime ($s > s_c$) 
described in Section~\ref{subsec:empirical_validation}; rather, the moderate density and 
stochastic long-range connectivity of Erd\H{o}s--R\'{e}nyi graphs place them closer to the 
fragmented regime ($s < s_c$), where sub-problems are expected to exhibit sample-specific 
landscape variations. This regime therefore constitutes a more stringent and conservative 
test of the Landscape Similarity Hypothesis than the quasi-1D structured graphs considered 
in the phase transition analysis. For every subgraph, we evaluated the 1-layer QAOA energy 
landscape using a comprehensive grid search with a resolution of $32 \times 32$. This 
approach yielded 1,024 energy data points per instance, enabling a detailed visualization 
of the landscape's critical structural properties. These graphs provide a versatile testing 
platform for our experiments. Fig.~\ref{fig:landscapes} visually confirms this hypothesis.
We plot the energy landscapes for distinct sub-problems generated from the same parent 
graph. Despite the varying induced fields, the locations of peaks, valleys, 
and local minima remain strikingly consistent. Consequently, the optimal parameters 
$(\gamma^*, \beta^*)$ for one sub-problem should be highly predictive of the optima of the remaining sub-problems.

To quantitatively validate the landscape similarity predicted by our hypothesis, we 
evaluate the pairwise correspondence between the representative and target sub-problems 
using three complementary metrics, as summarized in Table~\ref{tab:landscape_verify}. 
We first calculate the Pearson correlation coefficient ($r$) to quantify the geometric 
fidelity of the energy manifolds:
\begin{equation}
    r = \frac{\sum_{k=1}^{M} (E_A^{(k)} - \bar{E}_A)(E_B^{(k)} - \bar{E}_B)}
    {\sqrt{\sum_{k=1}^{M} (E_A^{(k)} - \bar{E}_A)^2} 
    \sqrt{\sum_{k=1}^{M} (E_B^{(k)} - \bar{E}_B)^2}}.
    \label{eq:pearson_r}
\end{equation}

A value of $r$ approaching unity indicates that the landscape topography, specifically 
the locations of local optima and basin curvature, is preserved across sub-problems.

As shown in Table~\ref{tab:landscape_verify}, the empirical results strongly support our 
hypothesis. Freezing the highest-degree node ($m = 1$) results in a correlation exceeding 
$0.999$ and a negligible Mean Squared Error (MSE) ($\sim 10^{-3}$) even in the presence 
of induced linear coefficients. This confirms that the invariant quadratic term 
($H_{\mathrm{quad}}^{(R)}$) dominates the landscape structure. Furthermore, while the 
induced linear coefficients introduce stronger perturbations when more qubits are frozen 
($m = 2, 3$) and reduce the similarity of the full landscapes to a moderate level 
(typically $r \approx 0.8$), the quadratic backbone itself remains nearly unchanged, 
exhibiting near-perfect similarity ($r > 0.999$) when the quadratic component is isolated. 
Remarkably, this near-unity correlation persists even though the Erd\H{o}s--R\'{e}nyi 
$G(n, p)$ ensemble used here does not reside in the self-averaging phase ($s > s_c$), 
providing strong empirical evidence that the invariant quadratic backbone 
$H_{\mathrm{quad}}^{(R)}$ dominates the landscape geometry across a broader range of 
graph topologies than the structured regimes analyzed theoretically.

Reinforcing these empirical findings, our theoretical argument of 
Appendix~\ref{app:theorectical_landscape} establishes a bound on the maximum pointwise 
deviation ($L_\infty$ distance) between the energy surfaces. This bound confirms that 
the linear perturbations shift the energy surface without significantly distorting its 
fundamental shape. Collectively, these metrics demonstrate that the optimization landscapes 
share a universal structure, thereby justifying the direct transfer of variational parameters.

\begin{figure*}[t]
  \centering
  \includegraphics[width=0.99\textwidth]{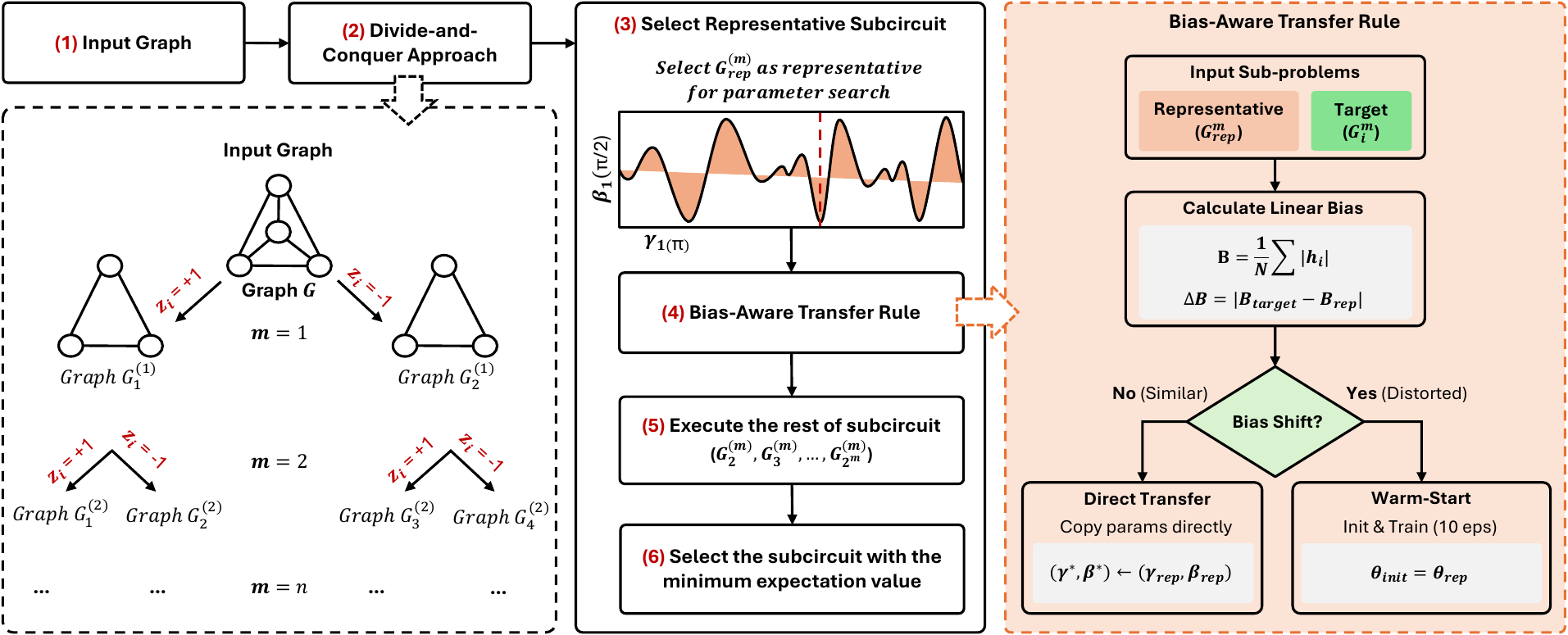}
  \caption{Overview of the DO-QAOA framework. The process begins with partitioning the Input Graph, selecting a Representative Subcircuit for training, and using the Bias-Aware Transfer Rule to efficiently transfer parameters to the remaining $2^m - 1$ sub-problems.}
  \label{fig:method_overview}
\end{figure*}

\section{DO-QAOA Method}
\label{sec:method}

Based on this observation, we introduce a strategy to reduce the optimization cost. Instead of performing $2^m$ redundant searches, we can collapse the optimization space from $\mathcal{O}(2^m)$ to $\mathcal{O}(K)$, where $K$ is the number of distinct landscape clusters (typically $K=1$).

The core idea of DO-QAOA is to perform full variational optimization on a single representative sub-problem to extract the optimal parameters ${\bm{\theta}}^*_\mathrm{rep}$. These parameters are then propagated to the remaining instances.

Building on the landscape universality principle established above, we propose Doubly Optimized QAOA (DO-QAOA), a framework that minimizes the classical and quantum training overhead of divide-and-conquer strategies. Fig.~\ref{fig:method_overview} illustrates the complete pipeline, which consists of two phases and six integrated steps.

\subsection{Phase I: Graph Partitioning and Representative Selection}
The process commences with an arbitrary input graph $G$ (Step 1 of Fig.\ \ref{fig:method_overview}). To mitigate the hardware constraints of NISQ devices, we employ a divide-and-conquer strategy (Step 2 of Fig.\ \ref{fig:method_overview}) that recursively partitions the graph by freezing $m$ highest-degree nodes into classical states $z_i \in \{+1, -1\}$. This decomposition generates a set of $2^m$ distinct sub-problems, denoted as $\{G_1^{(m)}, \dots, G_{2^m}^{(m)}\}$, which share the same underlying connectivity but differ in their induced linear magnetic fields. In the practical execution pipeline, we pick a specific $m$ and stick to it for that entire run. We select nodes strictly according to degree centrality. We sort all nodes by degree and pick the top $m$ nodes. If multiple nodes share the same degree, we typically break ties by their node index (or a stable sort order) to ensure the experiment is reproducible.

Our pipeline selects the first generated sub-problem as the representative instance $G_{rep}^{(m)}$ (Step~3 of Fig.~\ref{fig:method_overview}); for $m=1$, the example is the $+1$ frozen configuration. We optimize this instance to obtain $\theta_{rep}^* = (\gamma_{rep}^*,\beta_{rep}^*)$ and use these parameters for transfer to the other reduced instances. This is an implementation convention; equivalent performance for every possible representative is not assumed.

In the typical operating regime of our method, only a small number of high-degree (hotspot) qubits are frozen (default design uses $m = 1, 2, 3$), as is common for power-law and real-world graphs. 
Empirically, we observe that freezing additional qubits beyond this regime yields diminishing improvements in solution quality, and therefore does not justify the added computational cost (discussed in more detail in Appendix~\ref{sec:appendix_limits}).

In addition, the optimization cost of this representative training step can be further reduced through informed parameter initialization. As detailed in Appendix~\ref{app:initialization}, we employ an initialization strategy that significantly accelerates convergence compared to random initialization, thereby shortening the training loop without compromising solution quality.

\begin{figure*}[thbp]
    \centering
    
    \begin{subfigure}[b]{0.30\textwidth}
        \centering
        \begin{overpic}[width=\textwidth]{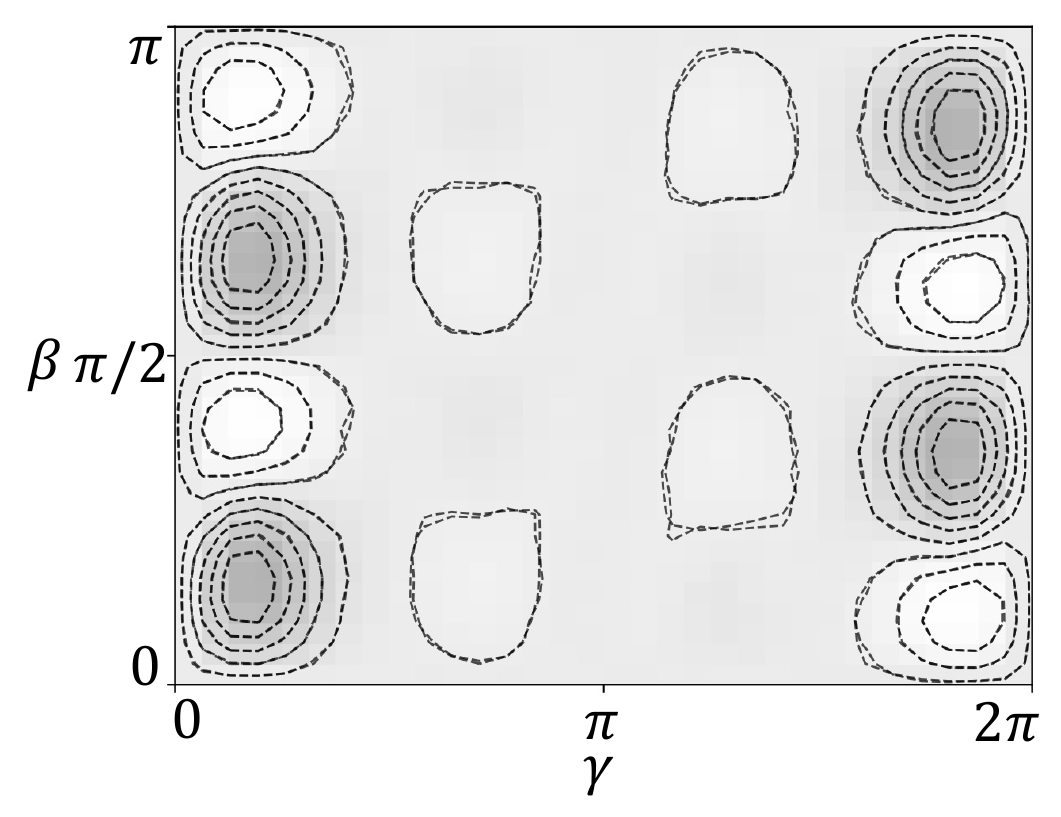}
            \put(0, 110){\textbf{(a)}}
        \end{overpic}
        \label{fig:pcontour_m1_a}
    \end{subfigure}
    \hspace{-0.3em}
    \begin{subfigure}[b]{0.30\textwidth}
        \centering
        \begin{overpic}[width=\textwidth]{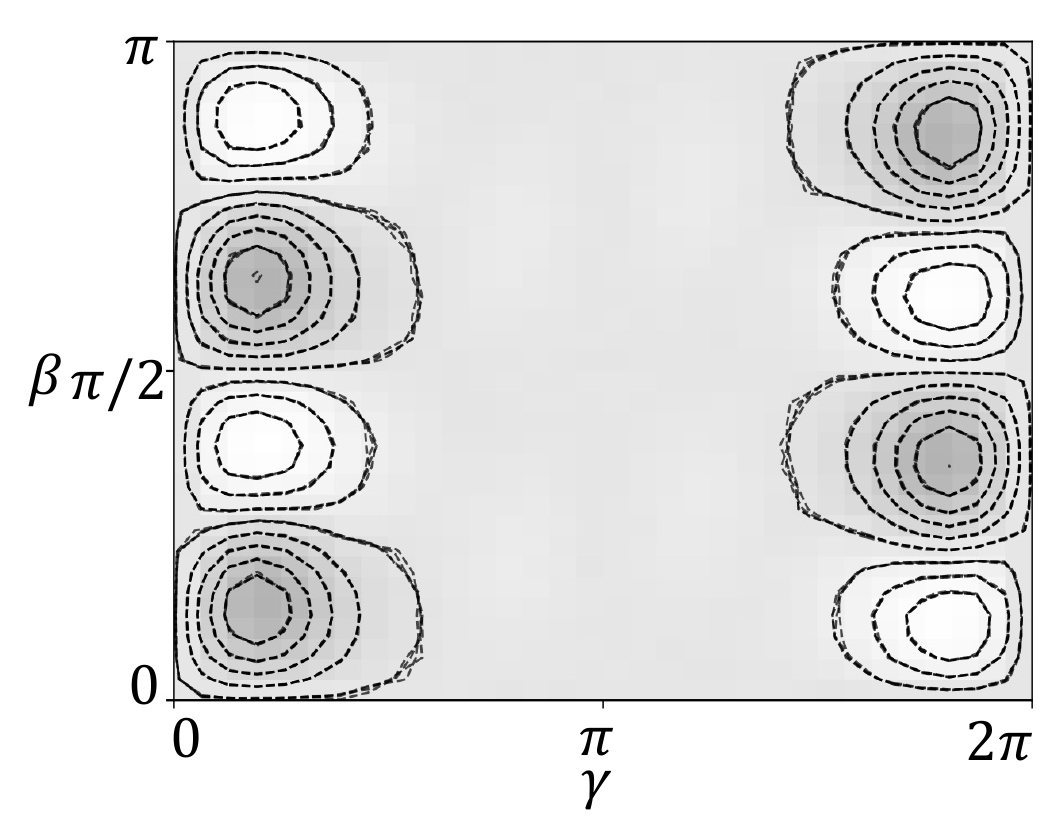}
            \put(0, 110){\textbf{(b)}}
        \end{overpic}
        \label{fig:pcontour_m1_b}
    \end{subfigure}
    \hspace{-0.3em}
    \begin{subfigure}[b]{0.30\textwidth}
        \centering
        \begin{overpic}[width=\textwidth]{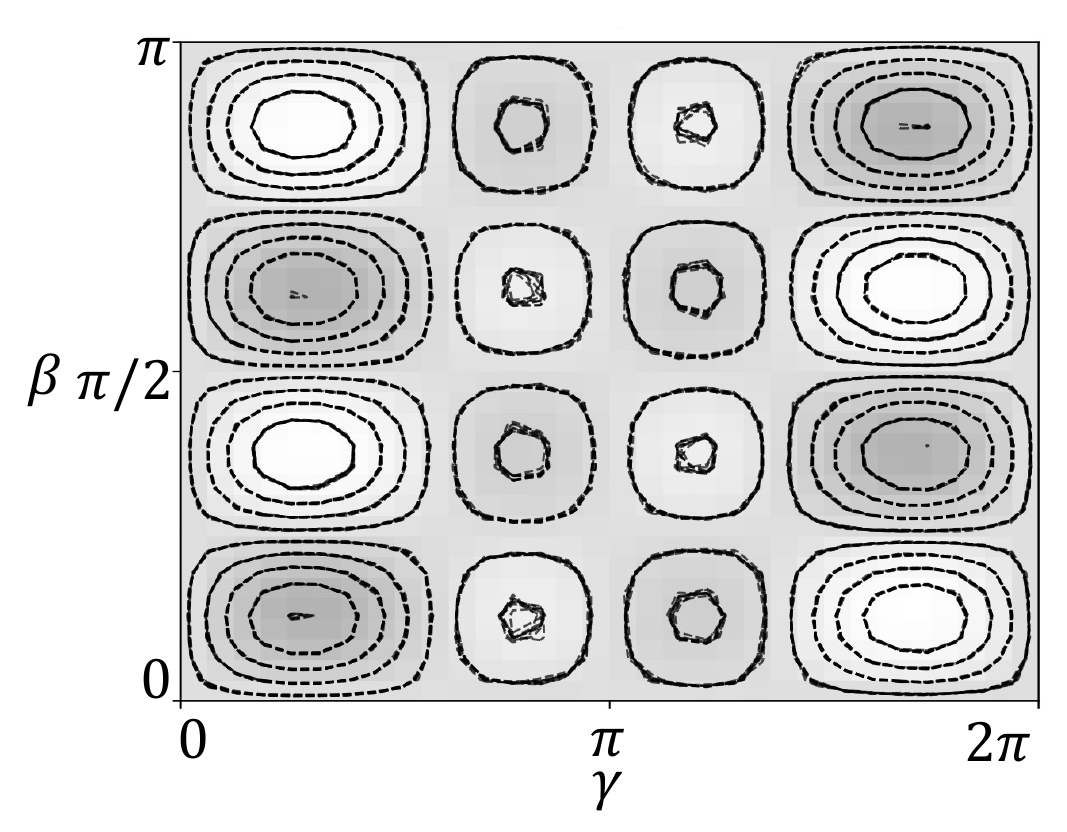}
            \put(0, 110){\textbf{(c)}}
        \end{overpic}
        \label{fig:pcontour_m2_a}
    \end{subfigure}

    \begin{subfigure}[b]{0.30\textwidth}
        \centering
        \begin{overpic}[width=\textwidth]{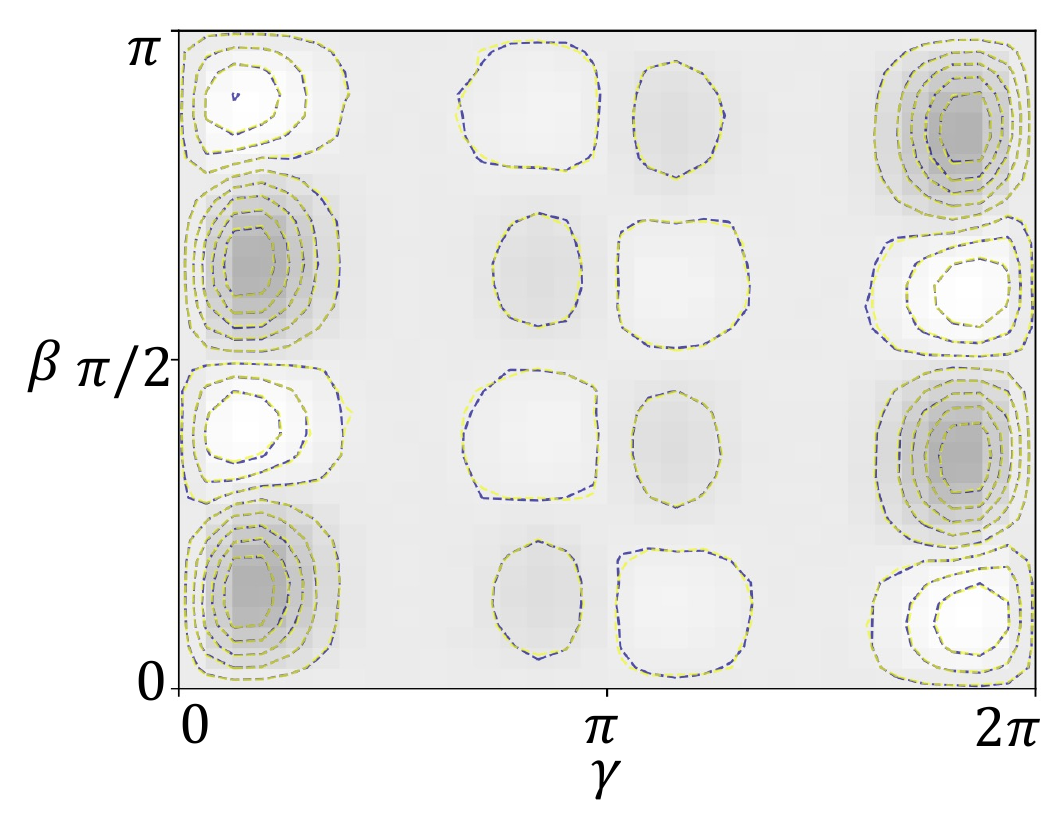}
            \put(0, 110){\textbf{(d)}}
        \end{overpic}
        \label{fig:pcontour_m2_b}
    \end{subfigure}
    \hspace{-0.3em}
    \begin{subfigure}[b]{0.30\textwidth}
        \centering
        \begin{overpic}[width=\textwidth]{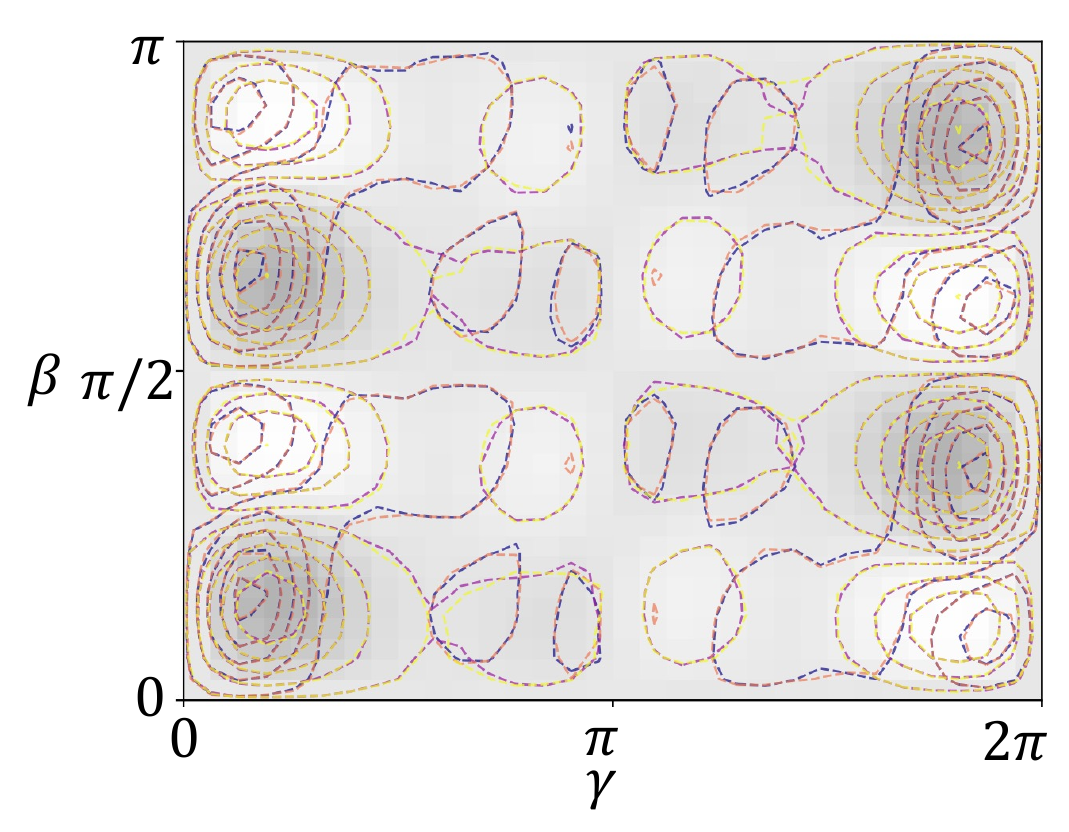}
            \put(0, 110){\textbf{(e)}}
        \end{overpic}
        \label{fig:pcontour_m3_a}
    \end{subfigure}
    \hspace{-0.3em}
    \begin{subfigure}[b]{0.30\textwidth}
        \centering
        \begin{overpic}[width=\textwidth]{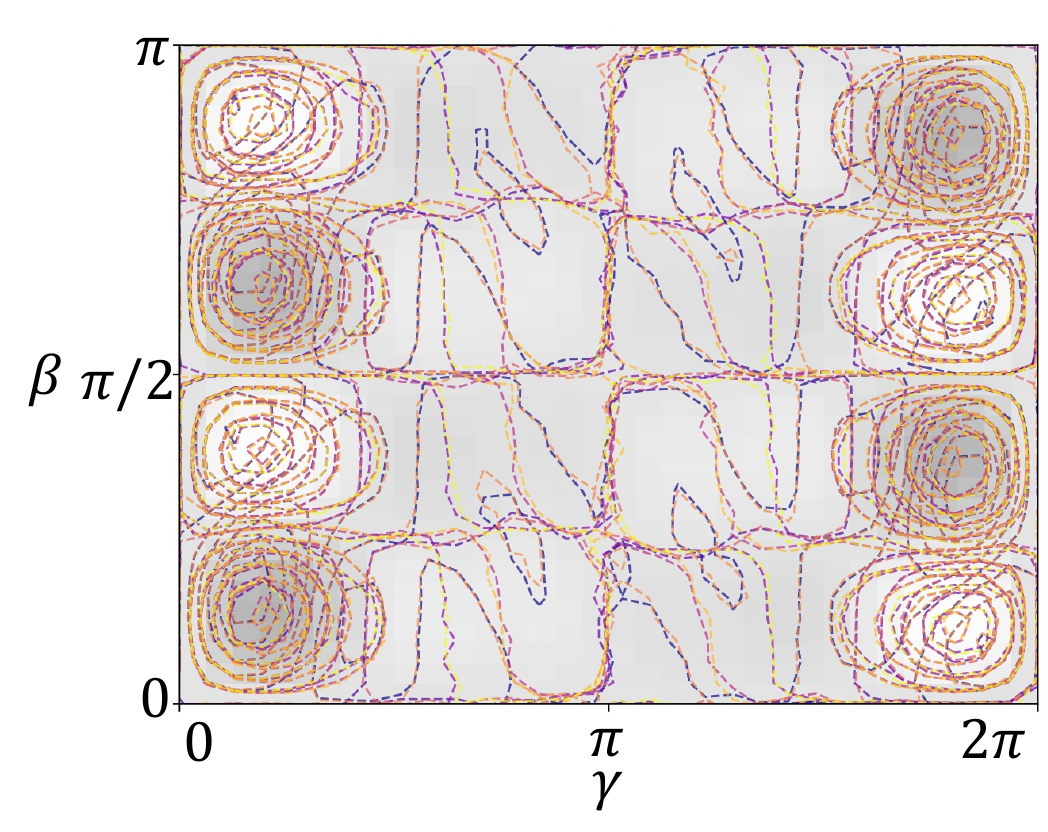}
            \put(0, 110){\textbf{(f)}}
        \end{overpic}
        \label{fig:pcontour_m3_b}
    \end{subfigure}

    \caption{Reference and distorted QAOA energy landscapes with 1-layer. Panels (a--c) show the ideal (reference) energy landscapes without induced coefficients for $m=1,2,3$. Panels (d–f) show overlaid energy contours of sub-problems generated by freezing $m$ nodes, illustrating how the landscape is distorted. However, basin stability is maintained: the optimization minima (light regions) remain localized within the same vicinity of $(\gamma, \beta)$.}
    \label{fig:pcontour_results}
\end{figure*}

\subsection{Phase II: Parameter Transfer and Execution}
Once the representative parameters are obtained, they are propagated to the remaining sub-problems via the Bias-Aware Transfer Rule (Step 4 of Fig.\ \ref{fig:method_overview}). This mechanism determines whether the landscape of a target sub-problem $G_i^{(m)}$ is sufficiently similar to the representative to warrant direct parameter transfer.

We first calculate the aggregate magnitude of the induced local magnetic fields, $B = \frac{1}{N} \sum |h_k|$, for both the representative and the target graphs. We then evaluate the bias distortion $\Delta B = |B_{\text{target}} - B_{\text{rep}}|$. If this distortion is below a critical threshold (experimentally set at 0.3), the landscapes are deemed congruent, and we apply Direct Transfer, simply copying $\bm{\theta}^*_\mathrm{rep}$ to the target. Conversely, if $\Delta B > 0.3$, this indicates a significant difference in the induced linear bias, which suggests a shift in the ground state. 
(we discuss the threshold selection in more detail in Appendix~\ref{sec:appendix_sensitivity}). 
In this scenario, we employ a Warm-Start strategy, initializing the target with $\bm{\theta}^*_\mathrm{rep}$, followed by a brief fine-tuning stage (e.g., 10 epochs) to adapt to the new landscape.

With the parameters established for all instances, we execute the remaining quantum sub-circuits (Step 5 of Fig.\ \ref{fig:method_overview}) to evaluate their expectation values. Finally, we aggregate the results and select the configuration that yields the minimum global energy (Step 6 of Fig.\ \ref{fig:method_overview}) as an approximate solution to the original problem.

\section{Evaluation}
\label{sec:evaluation}
In this section, we empirically validate the theoretical framework of DO-QAOA, transitioning from fundamental landscape analysis to practical performance benchmarking under realistic noise models. Our evaluation follows a systematic progression designed to isolate the physical mechanisms of graph decimation from the complexities of specific problem instances. We begin in Section~\ref{sub:emperical_evaluation} by visualizing the
%
persistence of landscape geometry under iterative graph decimation, providing direct evidence of the landscape universality predicted by our hypothesis. Building on this physical verification, we evaluate the performance of DO-QAOA under realistic noisy conditions (see Appendix~\ref{app:exp_setup} for details of the experimental setup).
To emphasize the central contributions of this work, we focus on two key performance metrics:
(i) solution quality measured by the ARG, and
(ii) training complexity quantified by the total number of quantum shots.
Additional hardware-level metrics and per-dataset analyses are provided in the Appendix~\ref{app:per_dataset_analysis}.

\subsection{Visual Validation of Landscape Similarity}\label{sub:emperical_evaluation}

To test the Landscape Similarity Hypothesis (Section~\ref{sec:motivation}) in a controlled regime, we first evaluate DO-QAOA on 10 NetworkX random graphs. 
In this simplified setting, we can isolate the effects of graph decimation (node freezing) on the energy landscape without the confounding factors of complex real-world graphs.

Fig.~\ref{fig:pcontour_results} illustrates the evolution of the QAOA energy landscape under successive decimation steps ($m=1, 2, 3$). Panels (a--c) show the ``Ideal'' landscape of the reference sub-problem (where induced fields are artificially removed), representing the renormalization fixed point. Panels (d--f) displays the ``Distorted'' landscape, where the contours of multiple sub-problems are overlaid. The spread or ``fuzziness'' of these colored contours visualizes the linear perturbations induced by the frozen qubits. These also agree with the data reported in Table \ref{tab:landscape_verify}. 

For a single decimation step ($m=1$, Fig.~\ref{fig:pcontour_results}(d)), we observe striking landscape universality. The contours of the distinct sub-problems overlap nearly perfectly with the reference basins, indicating that the induced linear fields are weak relative to the dominant quadratic topology. This confirms that for shallow cuts, the sub-problems belong to the same universality class as the parent Hamiltonian, enabling direct parameter transfer with high fidelity.
As the decimation depth increases to $m=2$ and $m=3$ in Fig.~\ref{fig:pcontour_results} (e) and (f), respectively, 
we observe that the energy landscapes exhibit increasingly pronounced fluctuations. As the effective connectivity changes, the induced fields become stronger, causing the contours in Panels (e--f) to separate and the basins to deform. However, crucially, the locations of the global minima (indicated by the light regions) remain topologically stable. While the landscape curvature (`width' of the valleys) fluctuates, the geometric centers of the optimization basins do not shift significantly.

\begin{figure}[thbp]
    \centering
    
    \begin{subfigure}[b]{\linewidth}
        \centering
        \includegraphics[width=0.99\linewidth]{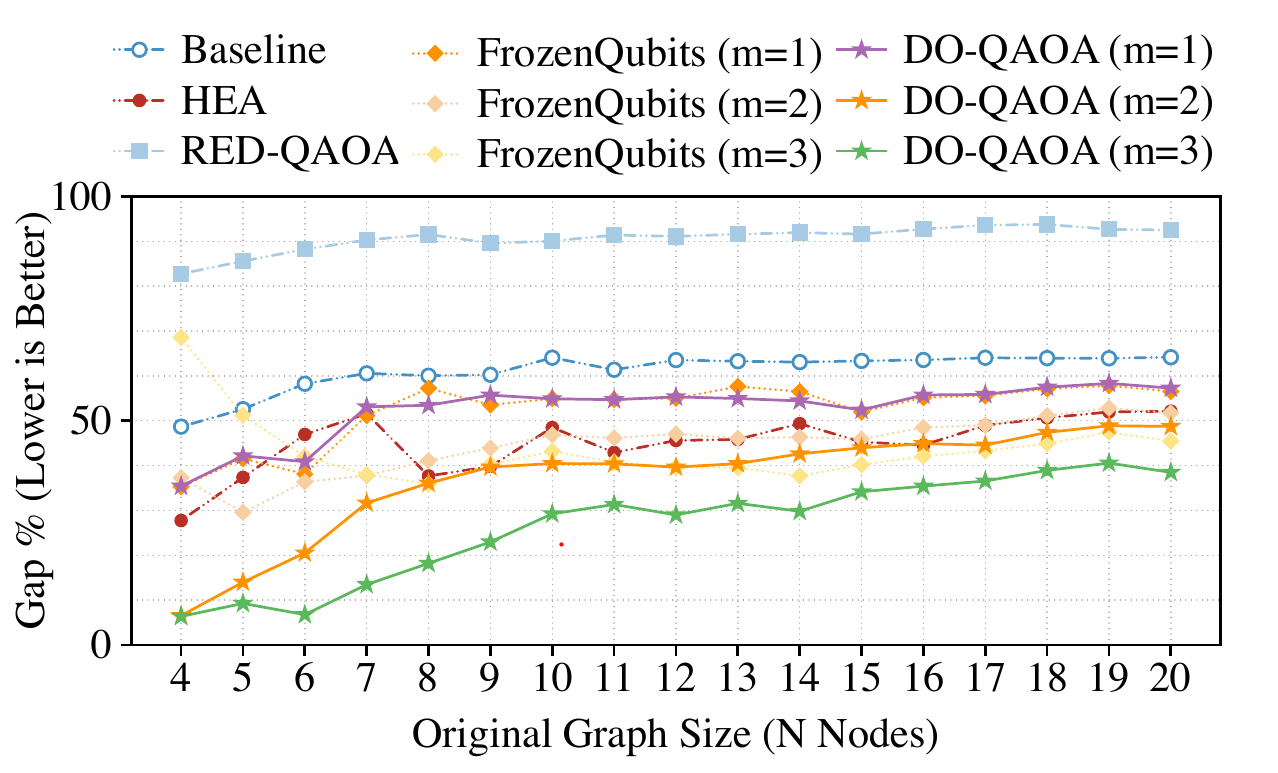}
    \end{subfigure}


    
    \caption{Scaling analysis on Power-Law graphs using ARG}
    \label{fig:powerlaw_results}
\end{figure}

\begin{figure*}[thbp]
    \centering
    \includegraphics[width=0.8\linewidth]{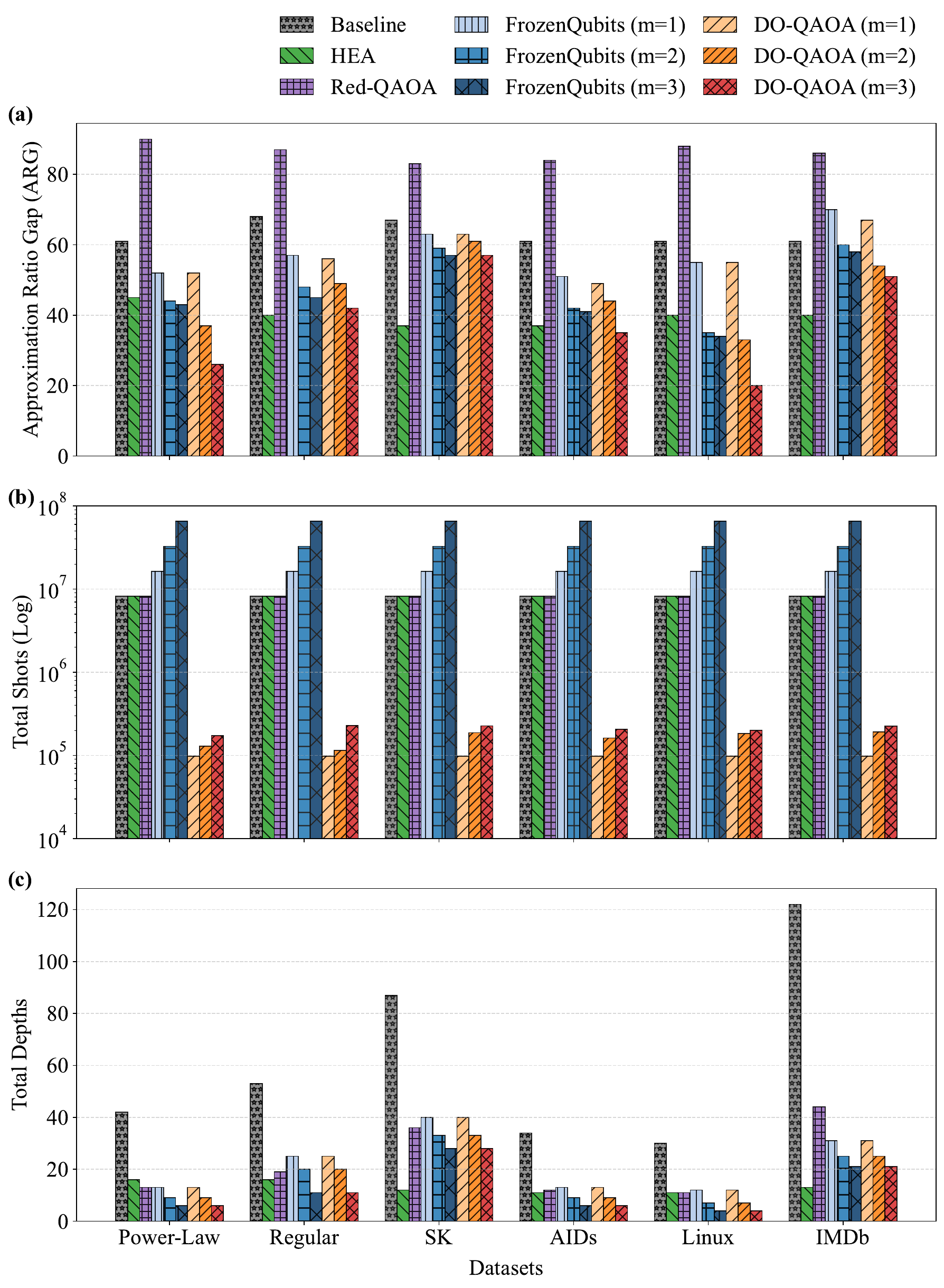}
    
    \caption{Average end-to-end performance metrics across all benchmarks. (a) Average Number of ARG, (b) Total Shots Across All Benchmarks, and (c) Total Circuit Depths Across All Benchmarks $= p\left(\textrm{Cost-Layer Depth} + \textrm{Mixer-Layer Depth} \right) \approx p\left(\textrm{Graph Degree} + 1\right)$. The comparison highlights the trade-off between solution quality (ARG) and computational cost (total shots). While baseline methods incur exponential overhead for large graphs, DO-QAOA maintains competitive approximation ratios while reducing the shot count by orders of magnitude, effectively collapsing the cost scaling. A comparison of the ARG values  across different graphs [see panel (a)] reveals that DO-QAOA yields marginally poorer solution quality for the fully connected SK model and for the IMDb dataset, where long-range connections are prevalent.}
    \label{fig:end_end_results}
\end{figure*}

\subsection{Generalization to Real-World Datasets}\label{evaluation_arg}
Having established the physical mechanism of landscape similarity on random graphs, we now quantify the solution quality of DO-QAOA on standard synthetic benchmarks.

To quantify the performance degradation relative to the optimal solution, we evaluate algorithmic fidelity using the Approximate Ratio Gap (ARG), following prior works~\cite{arg_1, arg_2, herrman2022multi}. The ARG is defined as
\begin{equation}
    \mathrm{ARG} = 100 \times \left| \frac{E_{\min} - \langle H_C \rangle}{E_{\min}} \right|,
    \label{eq:arg}
\end{equation}
where $\langle H_C \rangle$ denotes the expectation value of the cost Hamiltonian obtained from the QAOA circuit, and $E_{\min}$ is the exact ground-state energy computed via classical brute-force enumeration over all $2^n$ computational basis states. The ARG takes values in $[0, \infty)$, with lower values indicating closer agreement with the optimal solution.

We first evaluate the solution quality of DO-QAOA using the ARG, defined in Eq.~(\ref{eq:arg}).
Fig.~\ref{fig:end_end_results}(a) summarizes the ARG results across all benchmark classes, including synthetic graphs [power-law, 3-regular, and fully connected (Sherrington–Kirkpatrick (SK) model)] and real-world datasets (AIDS, Linux, and IMDb).
Across all graph types, baseline QAOA exhibits large approximation gaps under realistic noise, reflecting the degradation caused by deep circuits and accumulated gate errors (detail in Appendix~\ref{app:ablation_study_increase_p}).
Hardware-Efficient Ansatz (HEA) improves robustness but plateaus at moderate ARG values, while Red-QAOA consistently underperforms due to excessive structural reduction.

Divide-and-conquer approaches based on qubit freezing improve approximation quality as the number of frozen qubits $m$ increases.
In particular, FrozenQubits achieves progressively lower ARG values by reducing circuit depth and suppressing noise.
DO-QAOA consistently matches or outperforms FrozenQubits for the same value of $m$ across all benchmarks.
Notably, for power-law(Fig.~\ref{fig:powerlaw_results}), DO-QAOA achieves substantial accuracy improvements, reducing the ARG by up to $40\%$ relative to FrozenQubits at $m=3$.
These results demonstrate that landscape-aware parameter transfer preserves and often enhances the solution quality while eliminating redundant optimization loops.

\subsection{Computational Complexity and Scaling}

We also evaluate the computational efficiency of DO-QAOA in terms of total quantum shots.
These metrics capture the dominant costs in hybrid quantum-classical workflows, particularly on cloud-based NISQ platforms.

As shown in Fig.~\ref{fig:end_end_results}(b), standard divide-and-conquer methods incur an exponential increase in computational cost with the number of frozen qubits.
FrozenQubits requires $2^{m}$ independent variational optimization loops, leading to total shot counts exceeding $65 \times 10^{6}$ for $m=3$.

In contrast, DO-QAOA collapses this exponential scaling to constant cost by optimizing only a representative sub-problem.
Across all benchmarks, DO-QAOA reduces the total number of quantum shots by two to three orders of magnitude relative to FrozenQubits.
For example, on power-law graphs with $m=3$, DO-QAOA requires only $0.17 \times 10^{6}$ shots compared to $65.5 \times 10^{6}$ shots for FrozenQubits, while achieving superior approximation quality.

These results confirm that DO-QAOA fundamentally alters the cost structure of divide-and-conquer QAOA, making it computationally practical for near-term quantum hardware.

\subsection{Summary of Results}

Across all evaluated benchmarks, DO-QAOA simultaneously improves solution quality and dramatically reduces training cost, effectively converting computationally intractable partitioning depths into solvable tasks.
For instance, in the Power-Law case with a partition depth of $m=3$, the standard FrozenQubits approach necessitates over 65 million quantum shots, a prohibitively expensive requirement for cloud-based NISQ backends. In stark contrast, DO-QAOA solves the same instance with superior accuracy using fewer than 0.2 million shots, demonstrating its unique capability to handle complex partitions where baseline methods fail due to resource exhaustion.
By exploiting landscape similarity among sub-problems, DO-QAOA removes the exponential overhead inherent in divide-and-conquer QAOA without introducing additional circuit depth or hardware complexity [Fig.~\ref{fig:end_end_results}(c)].
These results confirm that DO-QAOA fundamentally alters the cost structure of variational quantum optimization while remaining compatible with near-term devices.

Here, we note that the Long-Range Percolation (LRP) model defined on a structured lattice with its varying connectivity parameter $s$ provides a powerful heuristic tool to understand the performance of DO-QAOA on unstructured real-world graphs. In our experimental results, we observe that the best results are observed on sparse, locally connected graphs such as 3-Regular graphs, Power-Law networks, and the Linux/AIDS datasets. In these instances, the effective graph diameter is large relative to the QAOA depth $p$. This corresponds to the $s > s_c$ regime, where correlations remain confined within a local light cone, ensuring that landscape similarity holds and parameter transfer is effective.

In contrast, DO-QAOA produces slightly lower-quality results (although it still substantially outperforms competing methods) [see Fig.~\ref{fig:end_end_results}(a)] for the Sherrington-Kirkpatrick (SK) model and the IMDb dataset. The SK model is fully connected, meaning that every qubit interacts directly with every other qubit. Similarly, the IMDb dataset is characterized by high node degrees (popular actors connecting to many others). Accordingly, we anticipate that the DO-QAOA performance on these graphs can be understood in terms of the physics of the fragmented landscape regime with $s < s_{c}$. Surprisingly, however, DO-QAOA still exhibits strong performance for these problems at the finite system sizes accessible in our study. We anticipate that DO-QAOA will exhibit further performance gains for problems with sparse or modular structures, characteristic of many practical optimization tasks. While DO-QAOA substantially outperforms other QAOA algorithms for dense, globally connected problems at moderate system sizes, a proper analysis of the thermodynamic limit may necessitate different strategies. 

\section{Discussion}\label{sec:discussion}
The results in Section~\ref{sec:evaluation} and Appendix~\ref{app:per_dataset_analysis} show that representative parameter transfer can reduce repeated training in divide-and-conquer QAOA. In the tested pipeline, one representative is fully optimized and the remaining instances receive transferred parameters, with refinement when triggered by the bias-aware rule. The measured benefit depends on transfer quality and on the training and evaluation budgets. It is therefore a practical resource reduction in the tested regimes, rather than a general change from exponential to constant total complexity.
DO-QAOA retains the reduced-circuit structure of the underlying divide-and-conquer method. Its additional steps seek to reuse useful parameters while limiting target-specific retraining. Screening, refinement, target execution, and aggregation remain part of the computational workload.

By recognizing that the dominant curvature of the landscape is governed by invariant quadratic interaction terms, DO-QAOA enables parameter reuse across sub-problems without compromising solution quality. The bias-aware transfer mechanism ensures that deviations induced by frozen qubits are corrected only when necessary, thereby avoiding redundant full retraining. As a result, DO-QAOA preserves the circuit-level advantages of divide-and-conquer methods while collapsing the training complexity from exponential to near-constant in practice.

A notable empirical observation from our evaluation is the presence of diminishing returns as the number of frozen qubits increases (Appendix~\ref{sec:appendix_limits}). For small values of \(m\), freezing additional high-degree nodes yields consistent improvements by suppressing noise and reducing circuit depth. However, beyond this regime ($m = 3$), further freezing provides limited benefit, and additional cuts no longer meaningfully improve approximation quality. Instead, they primarily increase the overhead of classical bookkeeping.

These observations motivate a practical best practice of $m \le 3$ for most realistic problem instances considered in this work. At this operating point, DO-QAOA achieves a favorable balance between circuit simplification and landscape stability, delivering substantial improvements in approximation quality while maintaining minimal quantum and classical overhead. Importantly, this recommendation is empirical rather than prescriptive; it reflects the structure of the benchmark graphs studied here and the noise characteristics of current NISQ devices.

The present implementation uses one representative. Where landscape correlations weaken, transfer may become less effective and more target-specific optimization may be necessary. An extension could group the reduced instances and train one representative per group. With $K$ representatives, this would require $K$ full training runs plus clustering, screening, refinement, and evaluation costs. We have not established a system-size-independent bound on $K$ or benchmarked this extension; its benefit depends on these additional costs and the quality of the resulting groups.

\section{Conclusion}\label{conclusion}


We introduced DO-QAOA, a divide-and-conquer framework that reduces repeated variational optimization by transferring parameters from one trained representative to related sub-problems. In the tested regimes, this approach yields substantial empirical savings in quantum shots and runtime. These savings concern the evaluated pipeline and do not imply constant total cost or an asymptotic exponential speedup in problem size.

Finally, our findings highlight the importance of understanding variational energy landscapes in divide-and-conquer QAOA. Landscape-aware optimization strategies, such as DO-QAOA, provide a pathway to scalable hybrid quantum-classical algorithms under realistic hardware constraints.

\section*{Acknowledgement}
This work was partly supported by Institute for Information \& communications Technology Planning \& Evaluation (IITP) grant funded by the Korea government (MSIT) (No. 2020-0-00014, A Technology Development of Quantum OS for Fault-tolerant Logical Qubit Computing Environment) and Creation of the quantum information science R\&D ecosystem(based on human resources) through the National Research Foundation of Korea(NRF) funded by the Korean government (Ministry of Science and ICT(MSIT)) (No. RS-2023-00256050).

\appendix
\begin{figure*}[t] 
    \centering 
    \begin{subfigure}[b]{0.30\textwidth}
        \centering
        \begin{overpic}[width=\textwidth]{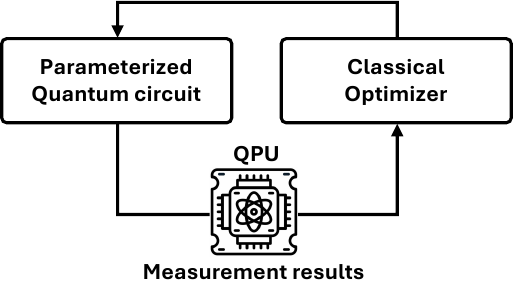}
            \put(2, 90){\textbf{(a)}}
        \end{overpic}
        \label{fig:overall_vqa}
    \end{subfigure}
    \hfill 
    \begin{subfigure}[b]{0.60\textwidth}
        \centering
        \begin{overpic}[width=\textwidth]{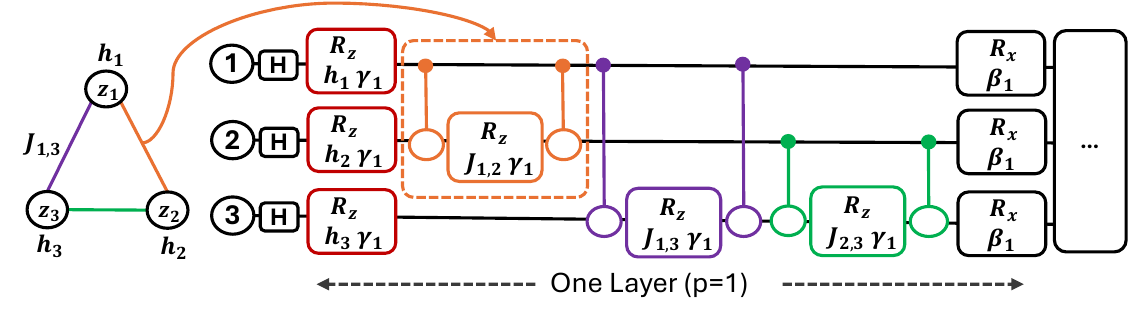}
            \put(2, 90){\textbf{(b)}}
        \end{overpic}
        \label{fig:graph_to_qaoa}
    \end{subfigure}
    \begin{subfigure}[b]{0.35\textwidth}
        \centering
        \begin{overpic}[width=\textwidth]{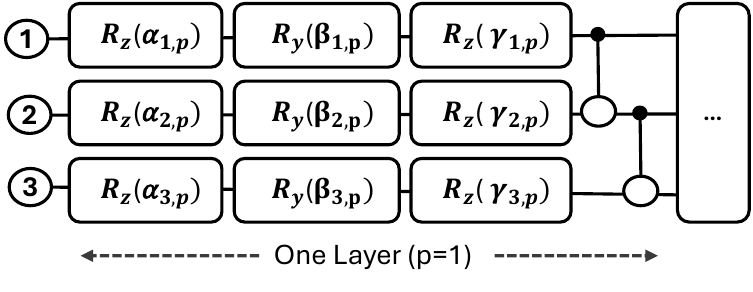}
            \put(2, 80){\textbf{(c)}}
        \end{overpic}
        \label{fig:hea_circuit}
    \end{subfigure}
     \hfill 
     \begin{subfigure}[b]{0.60\textwidth}
        \centering
        \begin{overpic}[width=\textwidth]{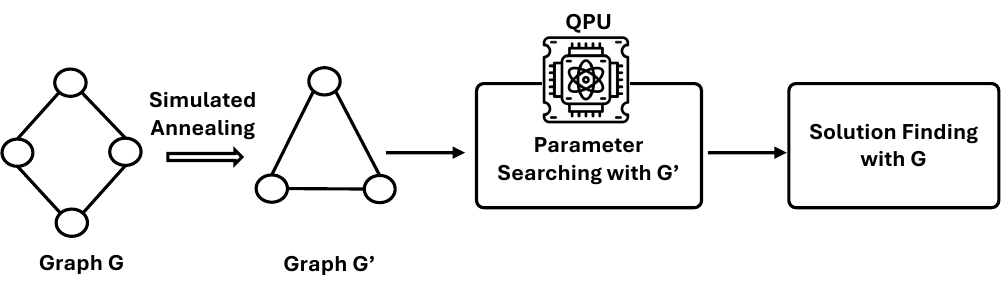}
            \put(2, 80){\textbf{(d)}}
        \end{overpic}
        \label{fig:red_qaoa}
    \end{subfigure}
    \caption{Overview of the Variational Quantum Algorithm. (a) The hybrid classical-quantum feedback loop. (b) Mapping a problem graph to a parameterized quantum circuit. (c) The Hardware Efficient Ansatz~\cite{hea_qaoa} (HEA) adapts to native gate sets to mitigate noise. (d) Graph reduction techniques~\cite{red_qaoa} (Red-QAOA) simplify the problem prior to circuit compilation.}
\label{fig:workflow_vqa}
\end{figure*}

\section{Preliminaries}
\label{sec:preliminaries}
In this section, we outline the fundamental building blocks of variational quantum algorithms and rigorously define the divide-and-conquer strategy. We introduce notation for the Ising Hamiltonian and explicitly derive how qubit freezing induces local magnetic fields, a mechanism central to our landscape similarity hypothesis.

\begin{figure*}[t] 
    \centering 
    \begin{subfigure}[b]{0.95\linewidth}
        \centering
        \begin{overpic}[width=\linewidth]{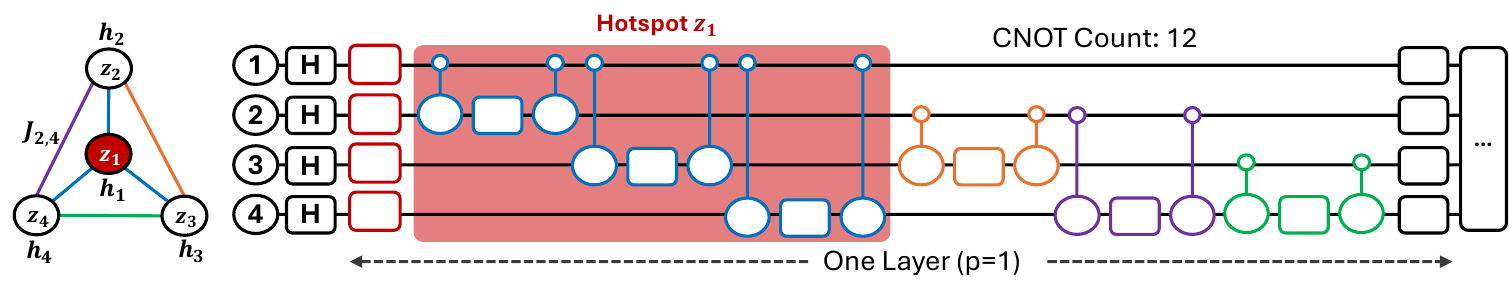}
            \put(-3, 90){\textbf{(a)}}
        \end{overpic}
        \phantomcaption
        \label{fig:identify_hotspot}
    \end{subfigure}
    \begin{subfigure}[b]{0.95\linewidth}
        \centering
        \begin{overpic}[width=\linewidth]{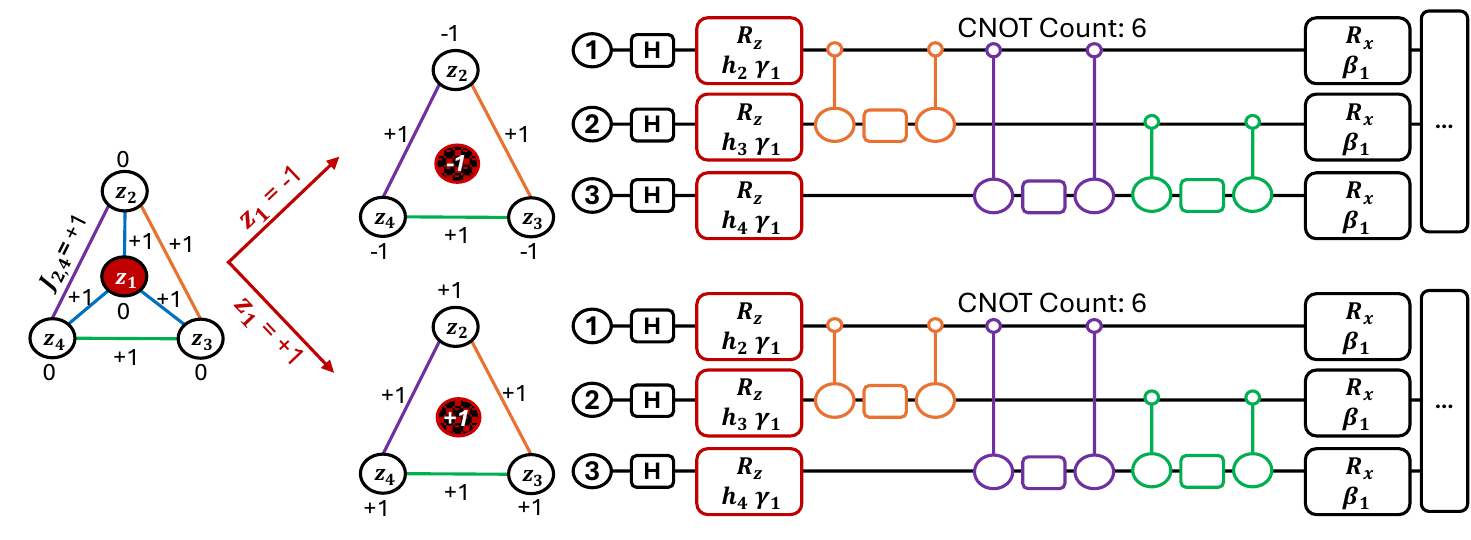}
            \put(-3, 130){\textbf{(b)}}
        \end{overpic}
        \phantomcaption
        \label{fig:divide_conquer}
    \end{subfigure}
    \caption{Workflow of the divide-and-conquer approach for QAOA optimization. (a) High-degree nodes (hotspots, red) are identified. (b) Freezing the hotspot $z_1$ to $+1$ or $-1$ decomposes the interaction graph into independent sub-problems. Note that the choice of $z_1$ alters the linear coefficients ($h_i$) of the remaining nodes, creating distinct energy landscapes.}
    
\label{fig:divide_conquer}
\end{figure*}

\subsection{Variational Quantum Algorithms and QAOA}\label{sub:vqa_qaoa_preliminary}

The Quantum Approximate Optimization Algorithm (QAOA) is designed to find approximate solutions to combinatorial optimization problems defined on a graph $G (V, E)$. The objective is typically encoded in a cost Hamiltonian $H_C$, which is diagonal in the computational basis. 

In its most general form, the problem is represented as an Ising Hamiltonian:
\begin{equation}
    H_{\text{Ising}} = \sum_{i} h_i Z_i + \sum_{i<j} J_{ij} Z_i Z_j + C,
    \label{eq:ising_general}
\end{equation}
where $Z_i$ is the Pauli-Z operator acting on qubit $i$, $h_i$ represents the local magnetic field (bias), $J_{ij}$ denotes the interaction strength (coupling) between qubits $i$ and $j$, and $C$ is a constant energy offset.

For the specific case of the MaxCut problem, we map the graph edges to interactions with $J_{ij} = 0.5$ and $h_i = 0$. The Hamiltonian is thus given by:
\begin{equation}
    H_C = \frac{1}{2} \sum_{(i,j) \in E} (I - Z_i Z_j).
\end{equation}
Solving MaxCut corresponds to finding the ground state that minimizes this energy.

The algorithm proceeds by preparing a parameterized ansatz state $\ket{\psi(\vec{\gamma}, \vec{\beta})}$ by applying alternating layers of the cost unitary $U_C(\gamma) = e^{-i \gamma H_C}$ and a mixing unitary $U_B(\beta) = e^{-i \beta H_B}$, where $H_B = \sum_i X_i$. For a depth-$p$ circuit, the state is:
\begin{equation}
    \ket{\psi(\vec{\gamma}, \vec{\beta})} = \prod_{k=1}^p U_B(\beta_k) U_C(\gamma_k) \ket{+}^{\otimes n}.
\end{equation}
A classical optimizer then iteratively minimizes the expectation value $E(\vec{\gamma}, \vec{\beta}) = \bra{\psi} H_C \ket{\psi}$, as illustrated in Fig.~\ref{fig:workflow_vqa}(a).

\subsection{Noise and Hardware Efficiency}\label{sub:noise_and_hardware_efficient}

In the NISQ era, the depth of quantum circuits is strictly limited by coherence times and gate error rates. As shown in Fig.~\ref{fig:workflow_vqa}(c), Hardware Efficient Ansätze (HEA)~\cite{hea_qaoa} attempt to mitigate this by utilizing native gate sets and minimizing SWAP operations~\cite{nisq_hardware_contrain,sabre,not_all_swap_gate}. However, standard QAOA circuits often require extensive connectivity as shown in Fig.~\ref{fig:workflow_vqa}(b), which does not match the hardware topology, necessitating deep compilation paths that accumulate noise~\cite{sabre,not_all_swap_gate,nisq_hardware_contrain}. This motivates the need for graph reduction (Fig.~\ref{fig:workflow_vqa}(d)) or partitioning strategies to fit significant problems onto noisy, limited-connectivity devices.

\subsection{The Divide-and-Conquer Strategy}\label{sub:divide_and_conquer_preliminary}

To address both connectivity and depth constraints (See~\ref{sub:noise_and_hardware_efficient}), the \textit{FrozenQubits} approach \cite{FrozenQubits_Ayanzadeh2023} employs a divide-and-conquer strategy based on graph partitioning. 

Let $S \subset V$ be a set of $m$ ``hotspot'' nodes with high degree centrality (Fig.~\ref{fig:divide_conquer}(a)). By ``freezing'' these nodes into classical states $z_k \in \{+1, -1\}$ for all $k \in S$, we remove them from the quantum circuit. This partitioning effectively splits the original graph into smaller, independent sub-graphs that can be solved in parallel on smaller quantum processors (or sequentially on one device) as shown in Fig.~\ref{fig:divide_conquer}(b).

Crucially, as per the general Ising formulation in Eq.~(\ref{eq:ising_general}), freezing a node modifies the Hamiltonian of the remaining active nodes. The interaction between a frozen node $k$ (with value $z_k$) and an active node $j$ transforms from a two-body interaction $J_{kj} Z_k Z_j$ into a one-body linear term:
\begin{equation}
    J_{kj} Z_k Z_j \xrightarrow{Z_k \to z_k} (J_{kj} z_k) Z_j = h_j^{\mathrm{induced}} Z_j.
\end{equation}
Thus, the Hamiltonian for each sub-problem becomes:
\begin{equation}
    H_{\text{sub}} = \sum_{(i,j) \in E'} J_{ij} Z_i Z_j + \sum_{j \in V'} \left( \underbrace{\sum_{k \in S} J_{kj} z_k}_{h_j^{\mathrm{induced}}} \right) Z_j.
    \label{eq:induced_field}
\end{equation}
The term in the parentheses represents the induced linear bias ($h_j$) resulting from the specific configuration of the frozen qubits, as illustrated in Fig.~\ref{fig:divide_conquer}(b). While MaxCut usually assumes $h_i=0$, the divide-and-conquer process \textit{dynamically generates} these linear terms.

The fundamental bottleneck of this approach is the classical overhead. Since there are $m$ frozen qubits, there are $2^m$ possible classical configurations (e.g., $00\dots0$ to $11\dots1$). Existing methods treat these as $2^m$ unrelated optimization problems, requiring $2^m$ independent training loops. As depicted in Fig.~\ref{fig:divide_conquer}, even a small cut of $m=10$ results in 1024 separate training sessions, creating the exponential barrier we aim to break.

\section{Theoretical Bounds on Landscape Similarity}\label{app:theorectical_landscape}

To rigorously justify the parameter transfer mechanism, we quantify the deviation between the energy landscapes of any two sub-problems generated by the divide-and-conquer strategy (Fig.~\ref{fig:divide_conquer}). We show that the landscapes differ only by a bounded linear perturbation, independent of the circuit depth $p$.

\subsection{Hamiltonian Decomposition}
Let the original problem Hamiltonian be $H$. Freezing a subset of qubits $S$ into configuration $z \in \{ \pm 1 \}^{|S|}$ yields a reduced Hamiltonian $H^{(z)}$ acting on the remaining qubits $R$:
\begin{equation}\label{eq:hamiltonian_decompose}
    H^{(z)} = H_{\text{quad}}^{(R)} + H_{\text{lin}}^{(R, z)} + C^{(z)},
\end{equation}
where $H_{\text{quad}}^{(R)}$ contains the quadratic $Z_i Z_j$ terms (invariant across all sub-problems), $H_{\text{lin}}^{(R, z)} = \sum_{i \in R} h_i^{(z)} Z_i$ contains the induced linear fields, and $C^{(z)}$ is a constant energy offset.

\subsection{The Landscape Stability Theorem}
We define the \textit{Variational Landscape} $E^{(z)}({\gamma}, {\beta}) = \mel{\psi({\gamma}, {\beta})}{H^{(z)}}{\psi({\gamma}, {\beta})}$. 

\begin{theorem}[Landscape Stability]\label{thm:stability}
For any two sub-problems with frozen configurations $z$ and $z'$, the pointwise distance between their shifted energy landscapes is bounded by the coupling strength between the frozen and active partitions:
\begin{equation}
\begin{split}
    L_{\infty} = \left| E^{(z^\prime)}({\gamma}, {\beta}) - E^{(z)}({\gamma}, {\beta}) \right| \leq \sum_{i \in R} \left| h_i^{(z)} - h_i^{(z')} \right|,
\end{split}
\end{equation}
for all parameters $({\gamma}, {\beta})$ and any circuit depth $p$.
\end{theorem}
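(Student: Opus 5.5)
The plan is to read the statement with the convention already built into its definition: the variational landscape $E^{(z)}(\gamma,\beta)=\mel{\psi(\gamma,\beta)}{H^{(z)}}{\psi(\gamma,\beta)}$ is evaluated on a \emph{common} ansatz state $\ket{\psi(\gamma,\beta)}$. The notation carries no $z$ index on the state, so both sub-problems are probed at the same point of the same state family. ``Shifted'' landscapes are taken to mean that the constant offsets $C^{(z)}$ in Eq.~(\ref{eq:hamiltonian_decompose}) have been removed. Under this reading the proof is a short operator-norm argument and never uses the structure of the circuit, which is why it holds for every depth $p$.

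The first step is to subtract the two decompositions (\ref{eq:hamiltonian_decompose}) for $z$ and $z'$. The quadratic backbone $H_{\text{quad}}^{(R)}$ is identical for all frozen configurations, since only the linear fields $\tilde h_j^{(z)}=h_j+\sum_{k\in\mathcal N(j)\cap S}J_{kj}z_k$ change. It therefore cancels, and after dropping the constants we are left with
\begin{equation}
H^{(z')}-H^{(z)}=\sum_{i\in R}\bigl(h_i^{(z')}-h_i^{(z)}\bigr)Z_i .
\end{equation}
The second step uses linearity of the expectation value to write
\begin{equation}
E^{(z')}-E^{(z)}=\sum_{i\in R}\bigl(h_i^{(z')}-h_i^{(z)}\bigr)\langle Z_i\rangle_{\psi}.
\end{equation}
The third step applies the triangle inequality together with $|\langle Z_i\rangle_\psi|\le\|Z_i\|=1$, valid for any normalized state. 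This gives the claimed bound $\sum_{i\in R}|h_i^{(z)}-h_i^{(z')}|$ uniformly in $(\gamma,\beta)$ and in $p$.

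The main obstacle is the modelling assumption behind the common state. In the actual FrozenQubits circuits, the cost unitary $e^{-i\gamma_k H^{(z)}}$ itself depends on $z$, so the two landscapes are evaluated on different states $\psi^{(z)}$ and $\psi^{(z')}$. In that setting the clean bound above does not follow, and a correction term is needed.

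I would derive that correction in three moves.
\begin{itemize}
\item Split the difference as $\langle H^{(z')}\rangle_{\psi^{(z')}}-\langle H^{(z')}\rangle_{\psi^{(z)}}$ plus the common-state term already bounded above.
\item Control the first piece by $2\|H^{(z')}\|\,\|\psi^{(z')}-\psi^{(z)}\|$.
\item Bound the state difference with a telescoping (hybrid) argument over the $p$ layers, using $\|e^{-i\gamma A}-e^{-i\gamma B}\|\le|\gamma|\,\|A-B\|$. The $z$-independent mixer layers contribute nothing, and each cost layer contributes at most $|\gamma_k|\sum_i|\Delta h_i|$.
\end{itemize}
This produces the additional term $2\|H^{(z')}\|\sum_k|\gamma_k|\sum_i|\Delta h_i|$.

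I would therefore present the theorem as stated under the common-state convention, and record this correction alongside it. The correction is still linear in $\sum_i|\Delta h_i|$, but it is no longer independent of $p$ and $\gamma$.
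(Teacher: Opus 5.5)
Your core argument is the paper's proof sketch: the invariant $H_{\text{quad}}^{(R)}$ cancels, and the expectation of the diagonal operator $\Delta H=\sum_{i\in R}(h_i^{(z')}-h_i^{(z)})Z_i$ in any normalized state is bounded by $\|\Delta H\|=\sum_{i\in R}|h_i^{(z)}-h_i^{(z')}|$. The paper gets this number from the spectral radius of $\Delta H$, and you get it termwise from $|\langle Z_i\rangle|\le 1$; the two give the same bound. Your caveat is also correct, and the paper does not address it: its $E^{(z)}$ is built on a $z$-independent state, but the actual sub-problem circuits use $z$-dependent cost layers, and then the stated bound can fail. For instance, take qubit $1$ coupled with unit strength to two frozen qubits set to $(+,+)$ and $(+,-)$, so that $H^{(z)}=3Z_1Z_2+2Z_1$ and $H^{(z')}=3Z_1Z_2$. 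With each sub-problem using its own cost layer at $p=1$ and $(\gamma,\beta)=(\pi/4,\pi/8)$, one gets $E^{(z)}=0$ and $E^{(z')}=-3$, which exceeds the claimed bound of $2$. So your telescoping correction is the right amendment; it grows with $\sum_k|\gamma_k|$ and is therefore not depth-independent.
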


\textit{Proof Sketch.} The difference in the non-constant Hamiltonian components is strictly a linear operator $\Delta H = \sum_{i \in R} (h_i^{(z)} - h_i^{(z')}) Z_i$. By the linearity of expectation values and the operator norm inequality $|\ev{O}{\psi}| \leq \| O \|$, the difference is bounded by the spectral radius of $\Delta H$. Since $\Delta H$ is diagonal in the computational basis, its maximum eigenvalue is simply the sum of the absolute coefficient differences.

\paragraph{Heuristic Comparison of Freezing and Interaction Scales.}
To discuss the effect of freezing, we compare a bound on the induced linear perturbation with the spectral range of the remaining quadratic Hamiltonian. This comparison is a diagnostic scale estimate, not a sufficient condition for preserving the QAOA landscape or a universal upper bound on $m$.

First, we define the \emph{quadratic energy scale} $\Delta E_{\mathrm{quad}}$
as the spectral range of $H_{\mathrm{quad}}^{(R)}$ on the active subgraph,
\begin{equation}
\label{eq:DeltaEquad}
\begin{aligned}
\Delta E_{\mathrm{quad}}
\equiv
\lambda_{\max}\!\left(H_{\mathrm{quad}}^{(R)}\right)
- \lambda_{\min}\!\left(H_{\mathrm{quad}}^{(R)}\right) \\
= \mathcal{O}\!\left(\sum_{(i,j)\in E_R} |J_{ij}|\right),
\end{aligned}
\end{equation}
where the equivalence follows from the diagonality of $H_{\mathrm{quad}}^{(R)}$
in the computational basis. Physically, $\Delta E_{\mathrm{quad}}$ sets the
natural energy unit against which linear perturbations are measured: it
is the full vertical extent of the \emph{unperturbed} spin-configuration
landscape $H_{C}(z)$ on $R$. The QAOA expectation value $E(\gamma,\beta)$
in Eq.~\ref{eq:energy_distribution} is a convex combination of these spin energies, so
$\Delta E_{\mathrm{quad}}$ likewise bounds the dynamic range of the
variational landscape generated by the quadratic backbone alone.

Second, we bound the \emph{induced linear perturbation}. Flipping any
single frozen qubit $z_k \to -z_k$ shifts $h_i$ by at most $2|J_{ki}|$
for each active neighbor $i \in \mathcal{N}(k)\cap R$. For two
configurations $z, z'$ differing on all $m$ frozen sites in the worst
case, the cumulative shift is bounded by the cut weight between $S$
and $R$ (The bound thus corresponds to the total weight of edges crossing the cut $(S, R)$).:
\begin{equation}
\begin{aligned}
\sum_{i\in R}\left|h_i^{(z)} - h_i^{(z')}\right|
\;\leq\;
2 \sum_{k\in S}\sum_{i\in\mathcal{N}(k)\cap R}|J_{ki}| \\
\;\leq\;
2\,m\,\langle k_S\rangle\,J_{\max},
\label{eq:linear_bound}
\end{aligned}
\end{equation}
where $\langle k_S\rangle$ is the average degree of the frozen nodes
(counting only edges crossing into $R$) and the factor of $2$ arises
from the $\pm 1 \to \mp 1$ swing. This is the cut-size argument made
explicit.

Third, we consider the heuristic scale comparison
\begin{equation}
\underbrace{2\,m\,\langle k_S\rangle\,J_{\max}}_{\text{linear perturbation}}
\;<\;
\underbrace{\Delta E_{\mathrm{quad}}}_{\text{quadratic backbone}}.
\label{eq:similarity_condition}
\end{equation}
When this inequality holds, the linear term acts as a near-rigid offset
that translates the landscape without reordering its basin structure
(consistent with Theorem~\ref{thm:stability}, where the $L_\infty$ bound is independent
of $p$); once the two scales become comparable, induced biases can
reorder energy levels and the topological similarity between
sub-problems is no longer guaranteed.

Rearranging Eq.~\eqref{eq:similarity_condition} yields the upper bound
on the number of frozen qubits,
\begin{equation}
m \;<\; \frac{\Delta E_{\mathrm{quad}}}{2\,\langle k_S\rangle\,J_{\max}}.
\label{eq:m_bound}
\end{equation}
Because $\Delta E_{\mathrm{quad}}$, $\langle k_S\rangle$, and $J_{\max}$
are \emph{intensive} properties of the local graph structure and
interaction strength rather than extensive quantities scaling with $N$,
the right-hand side is system-size independent. We therefore conclude
$m = \mathcal{O}(1)$, providing a theoretical foundation for the
empirical observation (Appendix~\ref{sec:appendix_limits}) that freezing beyond $m \leq 3$
degrades landscape similarity and yields diminishing returns. Note
that for densely connected graphs such as the SK model,
$\langle k_S\rangle = \mathcal{O}(N)$ and the bound on $m$ tightens
accordingly, consistent with the reduced landscape similarity observed
empirically for SK in Section~\ref{sec:evaluation} (Fig.~\ref{fig:end_end_results}).

\subsection{Implications for Optimization}
Consider the case where a single ``hotspot'' qubit $k$ is toggled ($z_k \to -z_k$). The bound simplifies to:
\begin{equation}
    \text{Deviation} \leq 2 \sum_{i \in \mathcal{N}(k)} |J_{ki}|,
\end{equation}
where $\mathcal{N}(k)$ are the neighbors of qubit $k$. This implies:
\begin{enumerate}
    \item Depth Independence: The bound relies solely on the spectral norm of the Hamiltonian difference, rendering it independent of the QAOA depth $p$.
    \item Topology Dependence: For graphs with weak couplings (small $J$) or sparse connectivity (small $|\mathcal{N}(k)|$), the landscapes are effectively parallel sheets. This justifies the $\mathcal{O}(1)$ transfer of optimal parameters $\left(\gamma^*, \beta^*\right)$.
\end{enumerate}

\begin{figure*}[t]
\centering
\includegraphics[width=\linewidth]{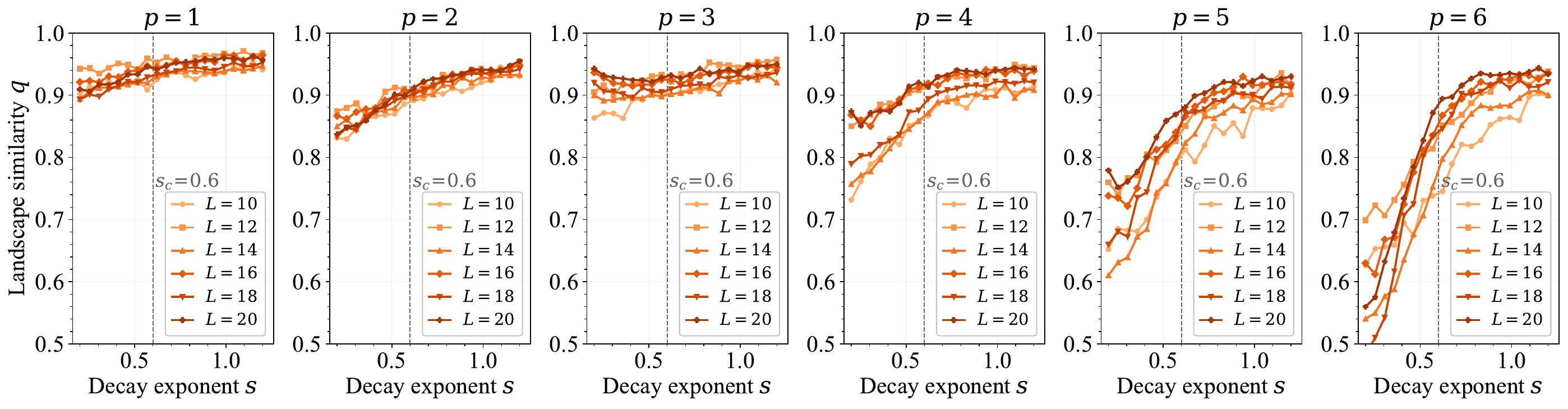}
\caption{Landscape-overlap order parameter $q$ versus decay 
exponent $s$ for QAOA depths $p = 1, 2, 3, 4, 5, 6$, evaluated on system 
sizes $L \in \{10, \ldots, 20\}$. The dashed line marks the 
critical value $s_c \approx 0.6$ identified from the $p = 1$ for system up to $L = 400$
analysis (Fig.~\ref{fig:percolation_phase_boundary_1}). For $s > s_c$, the curves collapse onto a 
common quasi-local plateau, indicating self-averaging behavior 
across replicas. For $s < s_c$, the curves separate by $L$, 
revealing the finite-size precursor of the fragmented phase. The 
size dependence in the fragmented regime strengthens with $p$, 
consistent with the enlarged QAOA light cone exposing each cost 
term to a wider effective neighborhood. The location of the 
crossover remains anchored near $s_c \approx 0.6$ across all 
$p$, indicating that the transition is governed by graph 
geometry rather than circuit depth. 
}
\label{fig:higher_p}
\end{figure*}

\begin{figure*}[t]
\centering
\includegraphics[width=\linewidth]{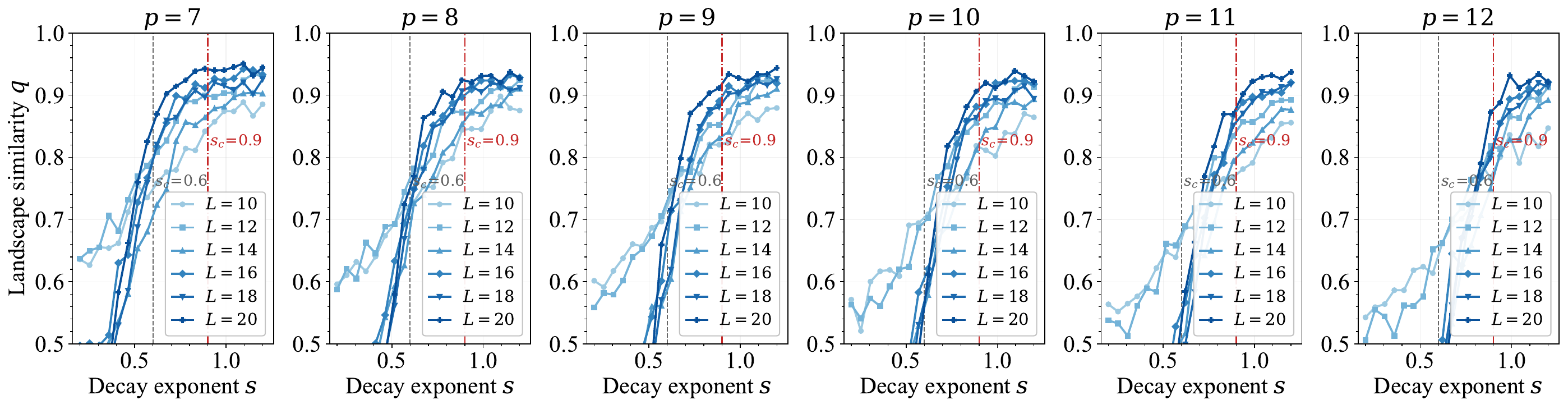}
\caption{Landscape-overlap order parameter $q$ versus decay exponent $s$
for QAOA depths $p = 7, 8, 9, 10, 11, 12$, evaluated on system sizes
$L \in \{10, \ldots, 20\}$.  Two reference lines are shown: the critical
value $s_c \approx 0.6$ (gray dashed) established at $p = 1$ in the
thermodynamic limit (Fig.~\ref{fig:percolation_phase_boundary_1}), and a
second marker at $s_c \approx 0.9$ (red dashed) indicating the effective
crossover region at the largest depths studied.  Compared to $p \leq 6$
(Fig.~\ref{fig:higher_p}), the finite-size spread between curves of
different $L$ grows substantially, and the region where curves begin to
collapse shifts progressively toward higher $s$.  In the strongly local
regime ($s \gtrsim 1.0$), overlaps across all system sizes converge to
$q \approx 0.92$--$0.95$, confirming that self-averaging persists when
graph connectivity decays sufficiently fast.  The growing drift with $p$ is
consistent with the enlarging QAOA light cone reaching a non-negligible
fraction of the finite system, as detailed in
Appendix~\ref{app:higher_depths}.}
\label{fig:higher_p_7_12}
\end{figure*}

\section{Landscape Similarity at Higher Depths ($p \ge 2$)}
\label{app:higher_depths}

\begin{figure}[h]
    \centering
    \includegraphics[width=\linewidth]{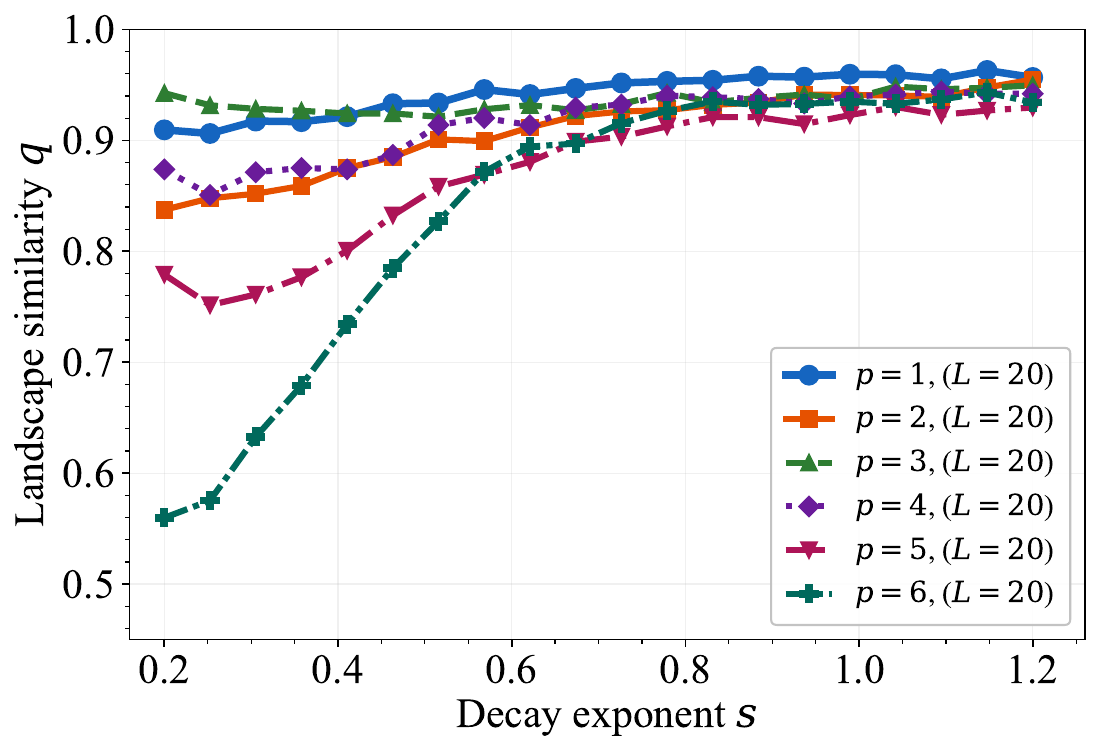}
    \caption{Landscape overlap $q$ versus connectivity parameter $s$ for higher circuit depths ($p=2$ to $p=6$) for $L=20$. 
    Although increasing circuit depth enhances the expressibility of the variational ansatz, the overlap remains consistently high across sub-problems, indicating persistent preservation of the dominant optimization geometry and basin structure even at larger depths.}
    \label{fig:high_p_overlap}
\end{figure}

\begin{figure}[h]
    \centering
    \includegraphics[width=\linewidth]{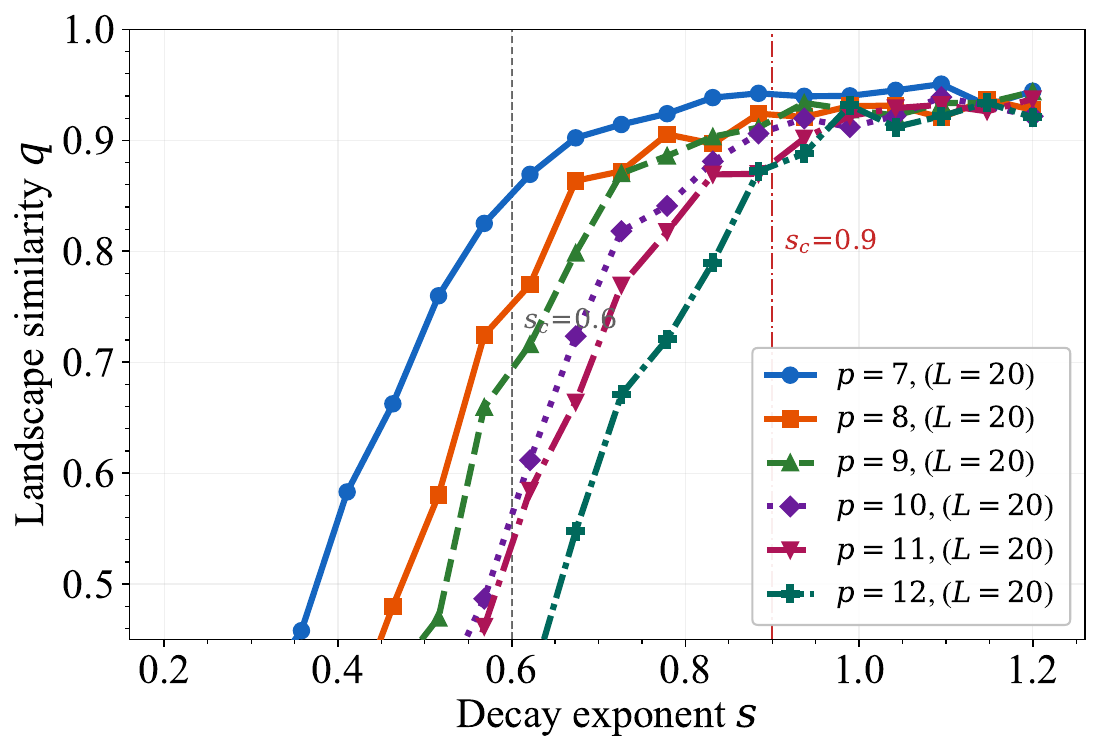}
    \caption{Landscape overlap $q$ versus connectivity parameter $s$ for
    circuit depths $p = 7$ to $p = 12$ at fixed $L = 20$, extending
    Fig.~\ref{fig:high_p_overlap} to higher depths.  Unlike the $p \leq 6$
    data, where the crossover is anchored near $s_c \approx 0.6$ (gray
    dashed), the curves at $p \geq 7$ shift progressively toward higher $s$.
    By $p = 12$, the onset of the self-averaging plateau occurs near
    $s \approx 0.9$ (red dashed).  Nevertheless, the plateau value of $q$
    at large $s$ ($\gtrsim 1.0$) remains high ($q \approx 0.92$--$0.95$)
    for all depths, confirming that landscape self-averaging persists in the
    strongly local regime.}
    \label{fig:high_p_overlap_7_12}
\end{figure}

\begin{figure*}[t]
    \centering
  \includegraphics[width=\textwidth]{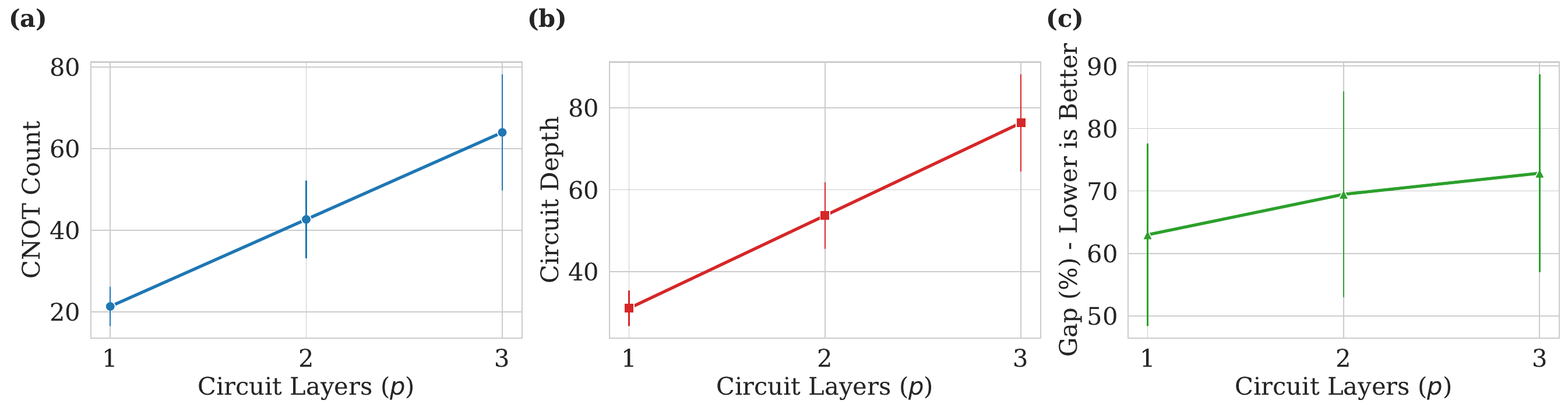}
    \caption{Study on QAOA Depth ($p$). (a) The CNOT count and (b) Circuit depth scale linearly with $p$, significantly increasing the error rate. (c) The results confirm that for current NISQ devices, lower depth ($p=1$) provides a superior trade-off between noise resilience and optimization quality.}
    \label{fig:ablation_layer}
\end{figure*}

To assess the robustness of the Landscape Similarity Hypothesis 
beyond the shallow-circuit regime analyzed in the main text, we 
investigate how the landscape-overlap order parameter $q(s)$ 
behaves as the QAOA depth $p$ is increased. As $p$ grows, the 
QAOA light cone enlarges and the variational ansatz becomes 
increasingly expressive, potentially amplifying the sensitivity 
of the landscape to the linear perturbations induced by qubit 
freezing.

\subsection{Computational scope and methodology}

In the main text (Fig.~\ref{fig:percolation_phase_boundary_1}), we established the landscape-similarity 
phase transition at $p = 1$ using exact variational-energy 
evaluations~\cite{qaoa_fermionic_view_Wang2018} up to $L = 400$. At higher depths 
$p \geq 2$, however, the same exact evaluation is no longer 
tractable: deeper QAOA circuits generate increasingly entangled 
many-body states, and faithful classical simulation requires 
storing and evolving the full $2^{n_{\mathrm{active}}}$ amplitude 
vector. The memory and runtime cost therefore grows exponentially 
with the active-system size, restricting full state-vector 
simulation to modest $L$. Also with increase $p$, however, an exhaustive search of the QAOA parameters becomes inefficient due to the curse of dimensionality. If we discretize so that each parameter can take on $m$ values, the exhaustive search of the optimum takes exponential steps in $p$ as $m^{2p}$~\cite{qaoa_fermionic_view_Wang2018}.

Accordingly, we evaluate $q(s)$ for $p \in \{ 1, 2,3,4, 5, 6\}$ on system 
sizes $L \in \{10, 12, 14, 16, 18, 20\}$, with $20$ values of the 
decay exponent $s \in [0.2, 1.2]$ per configuration. 
Crucially, by 
re-evaluating $p = 1$ on the same restricted $L$ range used at 
higher depths, we establish a direct, apples-to-apples 
comparison that controls for finite-size effects and isolates 
the role of circuit depth.
Although the 
accessible range of $L$ is an order of magnitude smaller than in 
the $p = 1$ analysis (Fig.~\ref{fig:percolation_phase_boundary_1}), the trends across $p$ and $L$ are 
sufficiently consistent to characterize the finite-size signature 
of the transition and confirm the persistence of landscape 
similarity in the quasi-local regime ($s > s_c$).

To probe whether the observed behavior continues at yet larger
depths, we further extend the evaluation to $p \in \{7, 8, 9, 10, 11, 12\}$
on the same system-size range $L \in \{10, \ldots, 20\}$.  At these depths
the exponential cost of state-vector simulation is particularly severe, so
no additional increase in $L$ is attempted.  The results for this extended
range are presented separately in Section~\ref{subsec:p_geq_7} and
Figs.~\ref{fig:higher_p_7_12}--\ref{fig:high_p_overlap_7_12}.

\subsection{Finite-size signature of the transition at $1 \leq p \leq 6$}

Figure~\ref{fig:higher_p} shows $q(s)$ for 
$p = 1, \ldots, 6$ on the common system-size range 
$L \in [10, 20]$. Three observations are robust across all panels 
and consistent with the $p = 1$ phase-transition behavior 
identified in the main text.

\paragraph{(i) Quasi-local regime ($s > s_c$).} For $s$ above the 
critical value $s_c \approx 0.6$ identified at $p = 1$, the 
overlap $q$ converges across different $L$ to values near 
$0.9$--$0.95$ and exhibits only weak size dependence. This 
self-averaging behavior indicates that decimated sub-problems 
share a common landscape class even when the ansatz is more 
expressive, and directly supports the validity of parameter 
transfer at $p \geq 2$.

\paragraph{(ii) Fragmented regime ($s < s_c$).} 
Below $s_c$, the 
curves separate by system size, with smaller $L$ exhibiting 
larger overlap than larger $L$. This fanning is the finite-size 
precursor of the non-self-averaging behavior established at 
$p = 1$ in the thermodynamic limit (Fig.~\ref{fig:percolation_phase_boundary_1}), where $q$ decreases 
monotonically with $L$ in the long-range regime.

\paragraph{(iii) The transition signature strengthens with depth.} 
The size dependence below 
$s_c$ becomes progressively more pronounced as $p$ increases: at 
$p = 2$ the curves remain relatively bunched, while at $p = 4$ 
the spread between $L = 10$ and $L = 20$ exceeds $0.10$ near 
$s = 0.2$. This is physically expected, since the QAOA light 
cone grows with $p$ and exposes each cost term to a wider 
neighborhood, amplifying the impact of long-range connections 
characteristic of the fragmented regime. Importantly, the 
crossover region remains anchored near $s_c \approx 0.6$ for all 
$p$ investigated, indicating that the location of the transition 
is set by the underlying graph geometry rather than by circuit 
depth.

\subsection{Behavior at $p \geq 7$: finite-size drift of the crossover}
\label{subsec:p_geq_7}

Extending the evaluation to $p \in \{7, \ldots, 12\}$ reveals a
qualitatively new feature absent from the $p \leq 6$ data: the
finite-size crossover region shifts progressively toward higher values of
$s$ as $p$ increases.
Figs.~\ref{fig:higher_p_7_12} and \ref{fig:high_p_overlap_7_12} document
this behavior in the panel and overlay representations, respectively.  Four
observations are noteworthy.

\paragraph{(i) Progressive rightward drift of the crossover.}
For $p = 7$, the overlap curves still exhibit a crossover close to
$s_c \approx 0.6$, consistent with the $p \leq 6$ data.  As $p$ increases
to $8$, $9$, and beyond, the crossover shifts monotonically toward higher
$s$: by $p = 12$, the self-averaging plateau is not reached until
$s \approx 0.9$.  This drift is clearly visible as a rightward displacement
of the transition region in the overlay (Fig.~\ref{fig:high_p_overlap_7_12}).

\paragraph{(ii) Self-averaging plateau persists at large $s$.}
Despite the drift, the overlap recovers to $q \approx 0.92$--$0.95$ for
sufficiently large decay exponents ($s \gtrsim 1.0$) at all depths $p \leq
12$.  This confirms that the self-averaging behavior and the reliability
of parameter transfer that rests on it is preserved whenever the graph
connectivity decays rapidly enough.  The plateau value decreases only
weakly with $p$, consistent with the moderate distortions in landscape
curvature induced by deeper circuits.

\paragraph{(iii) Finite-size saturation of the QAOA light cone.}
The drift can be understood through the interplay between the QAOA
light-cone radius and the finite system size $L$.  At depth $p$, the
expectation value of each cost term depends on graph structure within a
ball of radius $p$ edges~\cite{finite_group_velocity_quantum_spin_system, qaoa_fermionic_view_Wang2018}.  For the system sizes accessible to
exact state-vector simulation ($L \leq 20$), the diameter of the
interaction graph is of order $\mathcal{O}(\log L)$ for graphs with the
power-law connectivity decay considered here.  At $p \geq 7$, the
light-cone radius therefore spans a substantial fraction of the system; at
$p \geq 10$, it becomes comparable to or larger than the graph diameter.
As a result, even at intermediate values of $s$ where the
thermodynamic-limit theory predicts quasi-local behavior, the finite
circuit can propagate perturbations from frozen qubits across much of the
active subgraph, amplifying the apparent sensitivity to long-range
connectivity and suppressing landscape similarity relative to the
infinite-volume prediction.

\paragraph{(iv) Interpretation: finite-size artifact, not a shift in $s_c$.}
We therefore interpret the crossover drift as a finite-size artifact rather
than a depth-dependent shift of the true transition point.  This
interpretation is consistent with the Landscape Stability Theorem
(Appendix~\ref{thm:stability}), which establishes an $L_\infty$ bound on
landscape deviation that is independent of $p$.  In the thermodynamic limit
($L \to \infty$), the phase boundary is expected to remain anchored near
$s_c \approx 0.6$ for any finite $p$, as the light-cone radius becomes
negligible relative to the system size.  Resolving this finite-size drift
unambiguously would require extending the simulation to $L = \mathcal{O}(10^2)$,
which lies beyond the reach of exact state-vector simulation methods.

\subsection{Implications for parameter transfer}

The persistence of high landscape overlap in the quasi-local 
regime ($s > s_c$) provides numerical evidence that the 
dominant optimization structure remains governed by the 
invariant quadratic backbone $H_{\mathrm{quad}}^{(R)}$, even as 
the ansatz expressibility grows. 

To quantify this effect, we compute the pairwise landscape overlap $q$ between sub-problems for varying circuit depths $p$. Fig.~\ref{fig:high_p_overlap} summarizes the resulting overlap statistics. 
Interestingly, we observe that the landscape overlap remains consistently high even at larger depths for $L=20$. Although shallow circuits ($p=2,3$) exhibit the strongest overlap values, the correlation does not collapse at larger depths. Instead, the optimization landscapes continue to preserve a substantial degree of geometric similarity across sub-problems.

Although deeper circuits induce 
moderate local distortions in landscape curvature, the geometric 
centers of the optimization basins remain localized within 
similar regions of $(\gamma, \beta)$ space across all $p$ 
investigated. This explains why parameter transfer remains 
effective beyond shallow circuits and supports the applicability 
of the DO-QAOA pipeline at $p \geq 2$ for graphs in the 
quasi-local regime.

We acknowledge that fully resolving the phase transition at
$p \geq 2$ in the thermodynamic-limit sense, by extending to
$L = \mathcal{O}(10^2)$ as done for $p = 1$, lies beyond the
reach of exact state-vector simulation.  Approximate techniques
such as tensor-network methods or matrix-product-state
simulations could in principle access these regimes and
constitute a natural direction for follow-up work.

The finite-size data presented here are nonetheless sufficient to
establish three conclusions.  First, the qualitative phase-transition
signature, a crossover from a self-averaging plateau to a fragmented,
$L$-dependent regime persists at all depths $p \leq 12$ studied.
Second, for $p \leq 6$, the crossover location is consistent with
$s_c \approx 0.6$ within the numerical precision afforded by the
accessible system sizes, indicating that the transition is primarily
governed by graph geometry.  Third, for $p \geq 7$, the finite-size
crossover drifts toward higher $s$ on the modest systems accessible to
exact simulation; we attribute this to light-cone saturation rather than a
genuine shift of $s_c$ in the thermodynamic limit (Section~\ref{subsec:p_geq_7}).
Practically, this means that at depths $p \geq 7$, reliable parameter
transfer requires graphs with stronger locality ($s \gtrsim 0.7$--$0.9$
depending on depth), a condition that is consistent with the power-law and
quasi-local graph families on which DO-QAOA is evaluated.  Together, these
findings confirm the applicability of the parameter-transfer heuristic
across all depths relevant to near-term NISQ hardware.

\section{Impact of Circuit Depth in NISQ Regime}\label{app:ablation_study_increase_p}

\begin{figure*}[thbp]
  \centering
    \includegraphics[width=\textwidth]{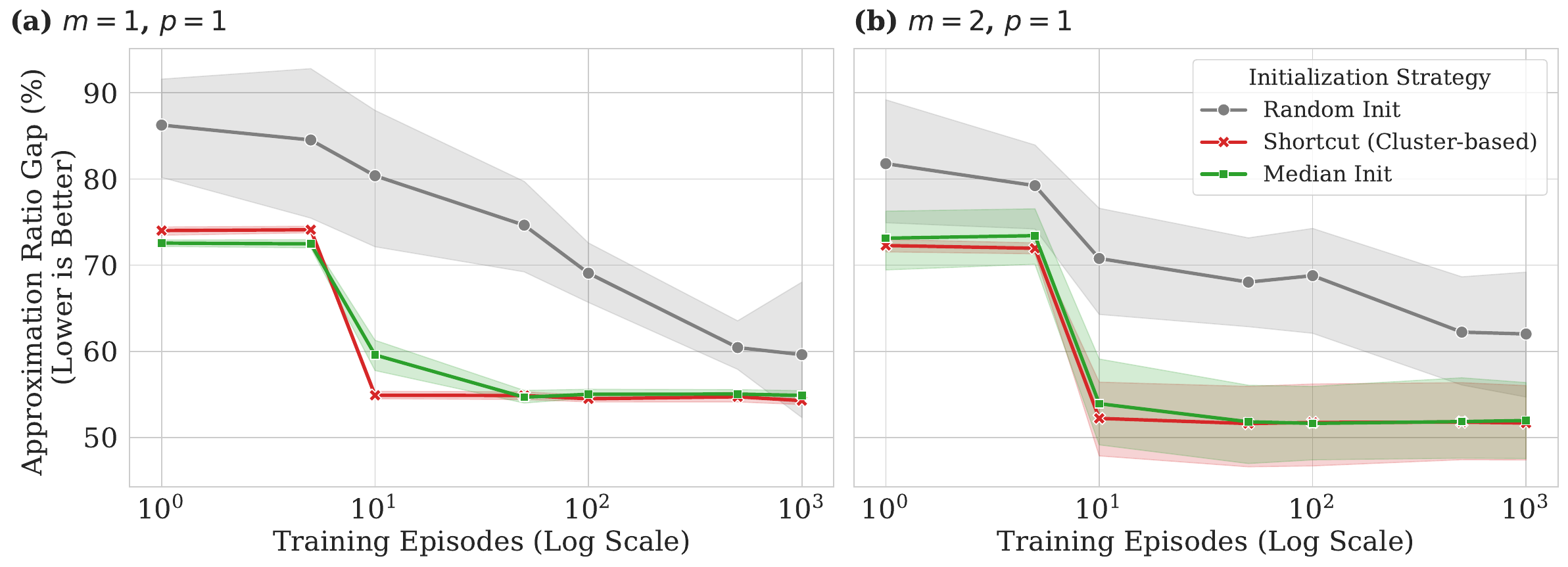}
    \includegraphics[width=\textwidth]{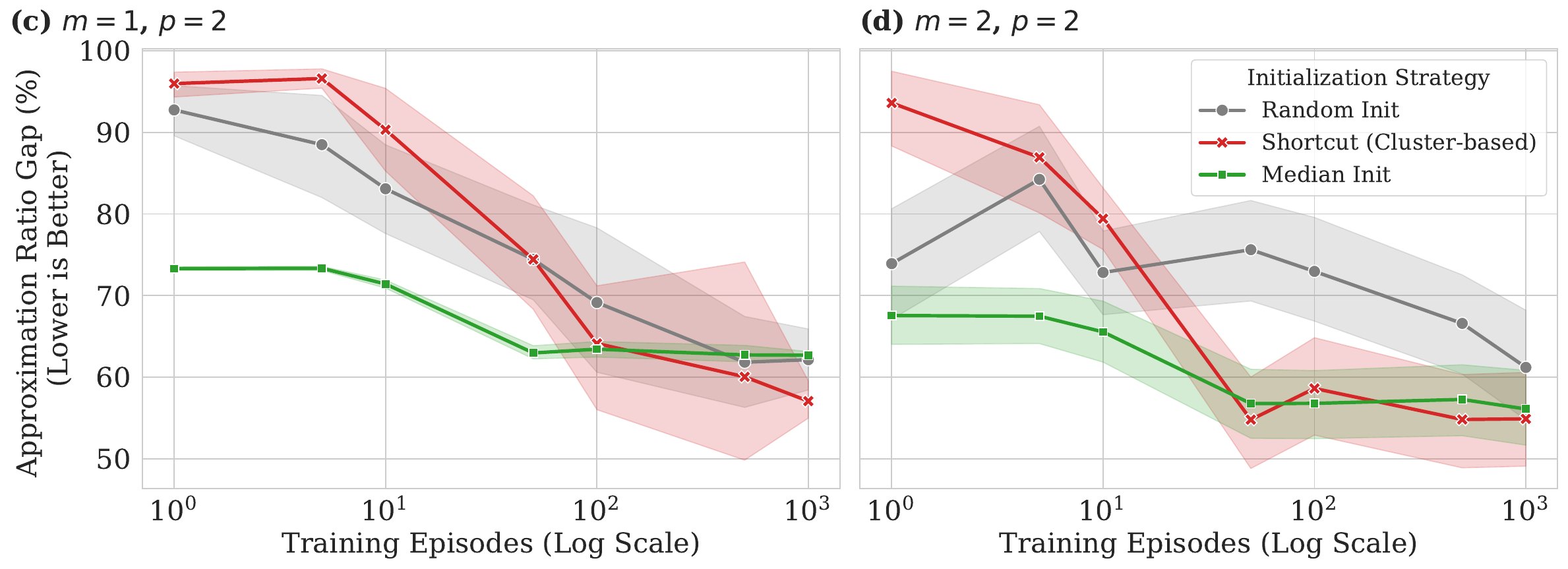}
    \includegraphics[width=\textwidth]{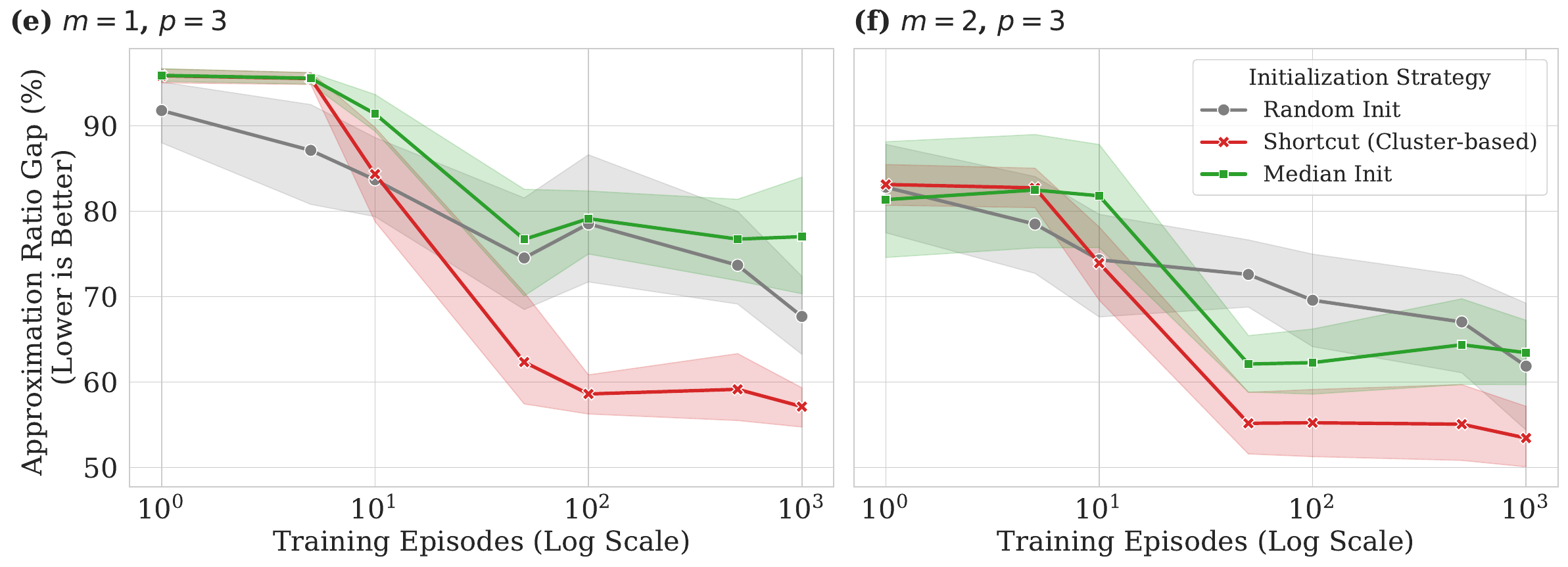}
  \caption{Impact of initialization strategies on optimization convergence across varying problem configurations defined by the number of frozen nodes ($m$) and QAOA depth ($p$). Here, $m$ represents the number of high-degree nodes frozen during the graph partitioning phase, and $p$ denotes the number of QAOA ansatz layers. The results show that the ``Shortcut'' (cluster-based) initialization (red) converges faster and achieves better solution quality (lower ARG) than Median initialization (green) and Random initialization (grey), demonstrating robustness across different partition sizes ($m$) and circuit depths ($p$)}
  \label{fig:convergence_comparison}
\end{figure*}
To empirically validate the decision to focus on the shallow-circuit regime ($p=1$), we conducted an ablation study that analyzes the trade-off between theoretical expressibility and noise-induced degradation as the circuit depth increases.

We evaluated the performance of the QAOA ansatz on a set of random regular graphs with varying circuit depths $p \in \{1, 2, 3\}$. The experiments were conducted using a noise model derived from the \texttt{FakeBrisbane} backend to simulate realistic NISQ conditions~\cite{ibm_fake_provider}.

As illustrated in Figure \ref{fig:ablation_layer}, increasing the number of layers $p$ results in a linear increase in both CNOT count and total circuit depth.

Resource Scaling: For $p=1$, the circuit remains compact with manageable two-qubit gate overhead. However, at $p=3$, the CNOT count triples, and the circuit depth exceeds the coherence time of the simulated qubits for larger graphs.
 
Noise Domination: While theory suggests that $p \to \infty$ allows for adiabatic evolution to the true ground state ~\cite{farhi2014quantumapproximateoptimizationalgorithm, Wurtz_2022}, our noisy simulations show a divergence. The ARG does not improve monotonically with $p$; instead, the noise accumulation from the additional gates outweighs the benefits of increased ansatz expressibility.

The study confirms that finding the lowest energy state becomes increasingly difficult at greater depths because the ``noise floor'' rises faster than the theoretical overlap improves. Consequently, our DO-QAOA methodology and the associated theoretical bounds are preferred optimized for and analyzed in the $p=1$ regime to maximize fidelity on near-term hardware.

\begin{figure*}[t]
    \centering
    \includegraphics[width=\textwidth]{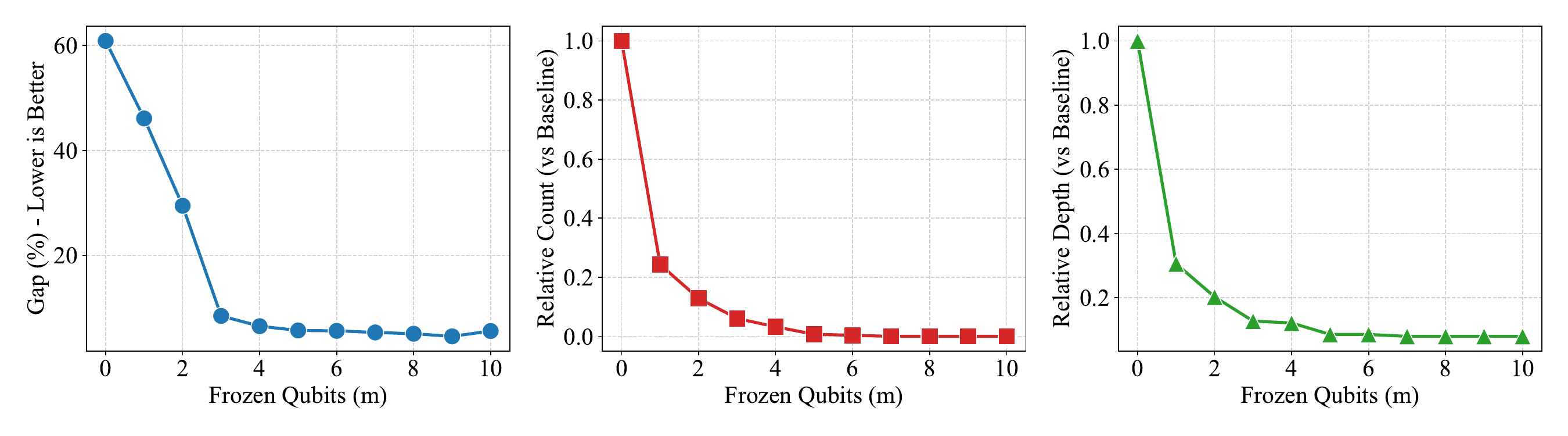} 
    \caption{Analysis of Diminishing Returns ($N=15$). The ARG improves rapidly for small $m$ but saturates beyond $m > 3$, indicating diminishing returns. While quantum resources (CNOTs and Depth) continue to decrease linearly, the sub-problems eventually become trivial, providing no further optimization benefit.}
    \label{fig:diminishing_returns_panel}
\end{figure*}

\section{Impact of Initialization Strategy on Optimization Convergence}
\label{app:initialization}

In the representative optimization step (Step 3) in Fig.~\ref{fig:method_overview}, the choice of initial variational parameters $\bm{\theta}_\mathrm{init}$ significantly influences the convergence speed and the final quality of the solution for the representative subcircuit $G_\mathrm{rep}^{(m)}$. Standard optimization approaches often employ random initialization, which can lead to getting stuck in local minima or requiring extensive training epochs to converge, especially in the rugged landscapes characteristic of NISQ problems.

However, our landscape analysis reveals that the optimal parameters for our problem instances are not uniformly distributed but rather cluster around specific regions in the energy landscape. To exploit this, we compare three distinct initialization strategies:
(1) Random Initialization: Parameters are sampled uniformly from the domain $[0, 2\pi]$. This serves as the baseline.
(2) Median Initialization: This method utilizes the heuristic provided by ~\cite{qaoa_kit, ParameterTransfer_Shaydulin2023, fixed_angle_PhysRevA}, which initializes parameters based on the median of known optimal values for similar graph classes for Max-cut problems.
(3) Shortcut Initialization (Ours): Based on our empirical observations, we identify a cluster of optimal parameters near specific angles. Specifically, we initialize parameters near $\gamma_1 \approx -\pi/6$ and $\beta_1 \approx -\pi/8$. Note that these values are specific to the Ising Hamiltonian formulation used in our study, which differs from the standard Max-Cut objective definition where angles might appear shifted or scaled.
Fig.~\ref{fig:convergence_comparison} illustrates the impact of these strategies on the convergence of the Approximate Ratio Gap (ARG). The results demonstrate that searching for optimal parameters within a cluster around these specific angles allows the optimizer to converge significantly faster and more accurately than with random initialization. The ``Shortcut'' method consistently achieves a lower ARG in fewer training episodes, validating the benefit of physics-informed initialization for the representative subcircuit.

Additionally, our analysis highlights the trade-off between circuit depth and noise resilience. While theoretically increasing the number of layers ($p$) improves the ansatz's expressibility, our results indicate that $p=1$ remains the most effective choice for NISQ implementations. As observed in Fig.~\ref{fig:convergence_comparison}, deeper circuits ($p=2, p=3$) do not yield significant improvements in the final ARG compared to $p=1$. Instead, the increased gate count and circuit depth introduce additional noise channels that can degrade solution quality and hinder convergence. Thus, maintaining a shallow depth ($p=1$) strikes a critical balance, ensuring robust performance by minimizing noise accumulation while sufficiently capturing the problem's energy landscape.

\section{Practical Limits and Diminishing Returns of Qubit Freezing}
\label{sec:appendix_limits}

\begin{figure}[t]
    \centering
    \includegraphics[width=\linewidth]{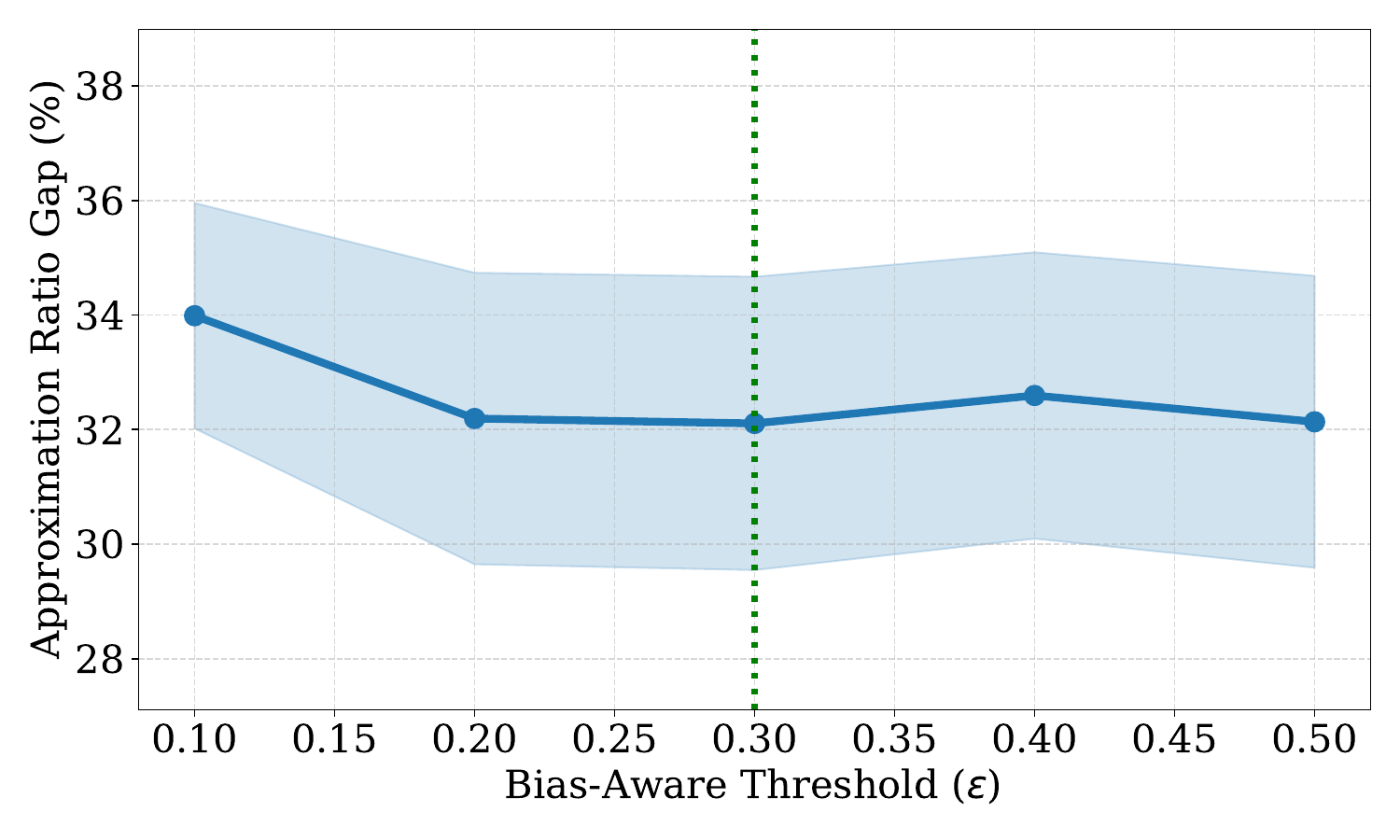}
    \caption{Sensitivity Analysis of Transfer Threshold ($\epsilon$). Evaluated on 10 random Power-Law graphs ($N=15, m=3$). The Approximation Ratio Gap (blue) remains stable across the range $[0.1, 0.5]$. The green dotted line indicates the chosen default ($\epsilon=0.3$)}
    \label{fig:sensitivity_analysis}
\end{figure}

To rigorously define the operational limits of the Divide-and-Conquer (D\&C) strategy, we extended our experimental analysis beyond the standard regime ($m \le 3$) to aggressive decimation levels up to $m=10$. This study was conducted on Power-Law graphs with $N=15$ nodes, involving over 20,000 circuit evaluations. 

As illustrated in Fig.~\ref{fig:diminishing_returns_panel}, the ARG exhibits a distinct saturation behavior. 
Rapid Improvement ($m \le 3$): For small values of $m$, the ARG drops significantly (from baseline($m=0$) $\approx 60\%$ to $\approx 45\%$ at $m=1$ and $\approx 9\%$ at $m=3$). In this regime, freezing high-degree "hotspot" nodes effectively reduces circuit depth, directly mitigating noise-induced errors.
Diminishing Returns ($m \ge 4$): Beyond $m=4$, the improvement in solution quality plateaus. The marginal gain in accuracy for each additional frozen qubit becomes negligible. This indicates that once the dominant topological complexity is resolved, further fragmentation of the graph yields sub-problems that are already sufficiently simple for the optimizer, rendering additional cuts redundant.

\section{Sensitivity Analysis of the Bias-Aware Transfer Rule Threshold}
\label{sec:appendix_sensitivity}

The Bias-Aware Transfer Rule is governed by a critical hyperparameter, the bias distortion threshold $\epsilon$, which determines whether to perform a direct parameter transfer or trigger a warm-start optimization. To justify our default choice of $\epsilon = 0.3$ and assess the algorithm's robustness, we conducted a sensitivity analysis on Power-Law graphs ($N=15$) at a decimation depth of $m=3$. We varied $\epsilon$ from $0.1$ to $0.5$ across 10 distinct graph instances to ensure statistical significance.

Fig.~\ref{fig:sensitivity_analysis} presents the aggregated results for ARG and Total Quantum Shots. Stability of Accuracy: The average ARG remains remarkably stable ($\approx 32\% \pm 12\%$) across the entire sweep range. Varying $\epsilon$ does not lead to significant fluctuations in solution quality, indicating that the algorithm effectively identifies the correct transfer strategy regardless of minor threshold adjustments.
The stability observed in Fig.~\ref{fig:sensitivity_analysis} provides strong empirical support for the Landscape Similarity Hypothesis. Even at $m=3$, the landscape geometries of the sub-problems remain sufficiently correlated.

We selected $\epsilon = 0.3$ (marked by the green dotted line) as the default operating point because it lies in the center of this stable regime. This choice balances the risk of: (1) Over-sensitivity ($\epsilon < 0.2$): Being too strict might trigger unnecessary warm-start loops for sub-problems that are actually similar enough for direct transfer, potentially increasing runtime without improving accuracy. (2) Under-sensitivity ($\epsilon > 0.4$): Being too loose might blindly transfer parameters to sub-problems with significant landscape distortions, potentially degrading the approximation ratio.
Thus, $\epsilon = 0.3$ serves as a robust, distinct-from-zero heuristic that generalizes well across different graph topologies without requiring instance-specific fine-tuning.

\section{Additional Experimental Details}\label{app:additional_compar}
\subsection{Experimental Setup}\label{app:exp_setup}

\textbf{Benchmarks.}
We evaluated our method on over 6{,}000 circuits across a diverse set of graph benchmarks to assess robustness on synthetic and real-world problem instances. Synthetic benchmarks include power-law graphs generated using the Barabási-Albert preferential attachment model, as well as random regular graphs and instances of the Sherrington-Kirkpatrick (SK) model~\cite {albert2005scale, barabasi1999emergence, barabasi2000scale, super_blockers, sparsity_aware, wang2019complex, zadorozhnyi2012structural, zbinden2020embedding}. 

\begin{figure*}[!htbp]
    \centering
    \begin{subfigure}[b]{0.9\textwidth}
        \centering
        \includegraphics[width=\textwidth]{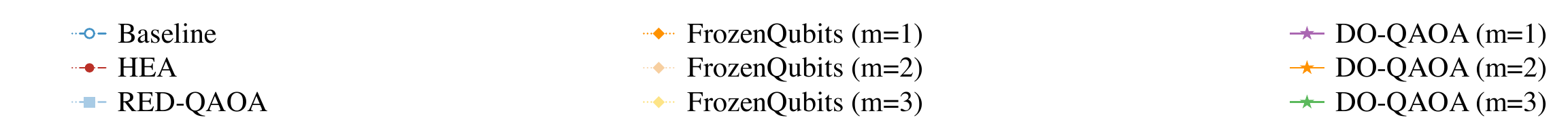}
    \end{subfigure}

    \vspace{1em} 

    \begin{subfigure}[b]{0.48\textwidth}
        \centering
        \begin{overpic}[width=\textwidth]{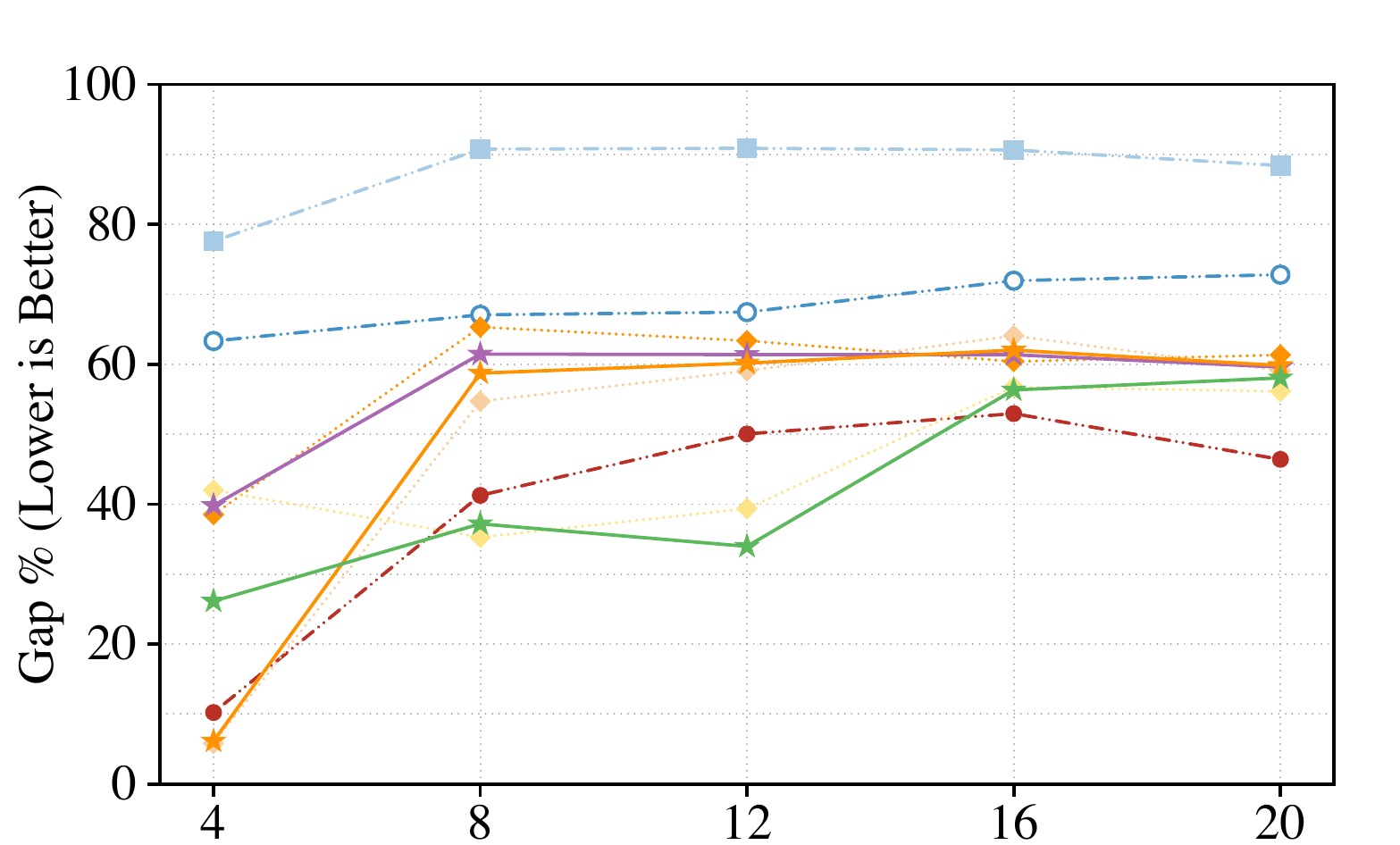}
            \put(0, 140){\textbf{(a)} 3-regular}
        \end{overpic}
        \label{fig:3_regular_results_arg}
    \end{subfigure}
    \hfill
    \begin{subfigure}[b]{0.48\textwidth}
        \centering
        \begin{overpic}[width=\textwidth]{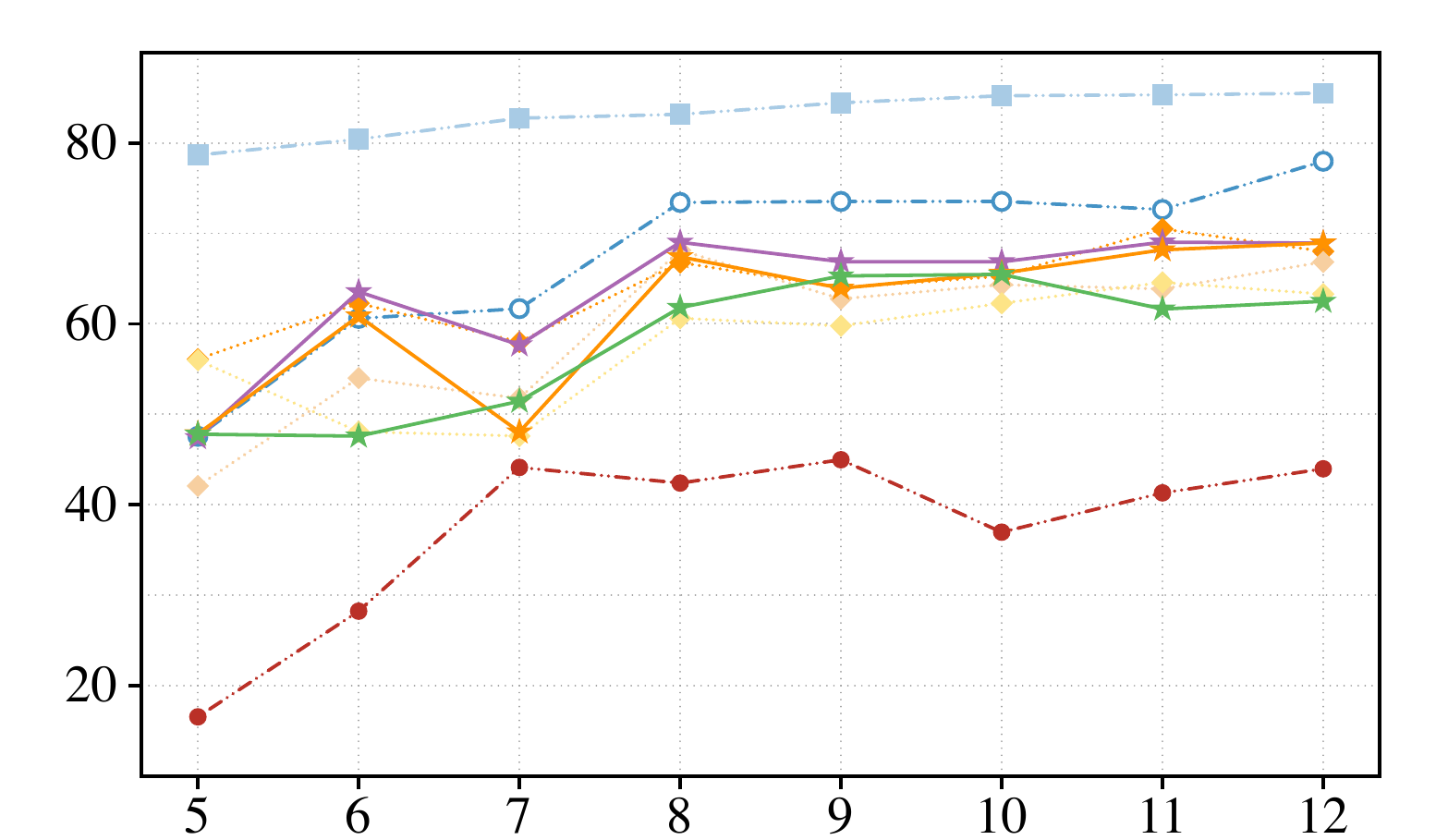}
            \put(0, 140){\textbf{(b)} SK}
        \end{overpic}
        \label{fig:sk_results_arg}
    \end{subfigure}

    \vspace{1em} 

    \begin{subfigure}[b]{0.48\textwidth}
        \centering
        \begin{overpic}[width=\textwidth]{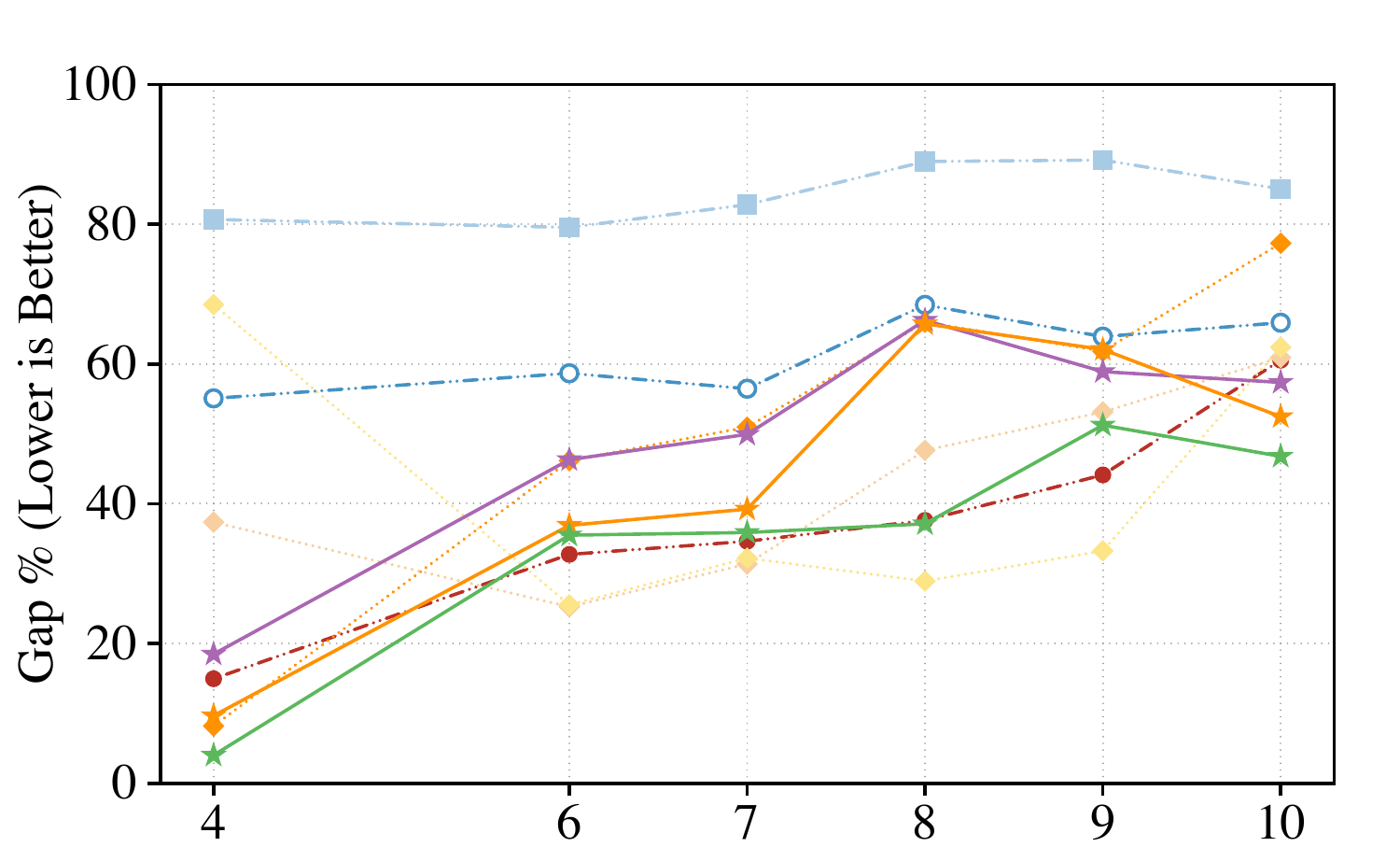}
            \put(0, 140){\textbf{(c)} AIDS}
        \end{overpic}
        \label{fig:aids_results_arg}
    \end{subfigure}
    \hfill
    \begin{subfigure}[b]{0.48\textwidth}
        \centering
        \begin{overpic}[width=\textwidth]{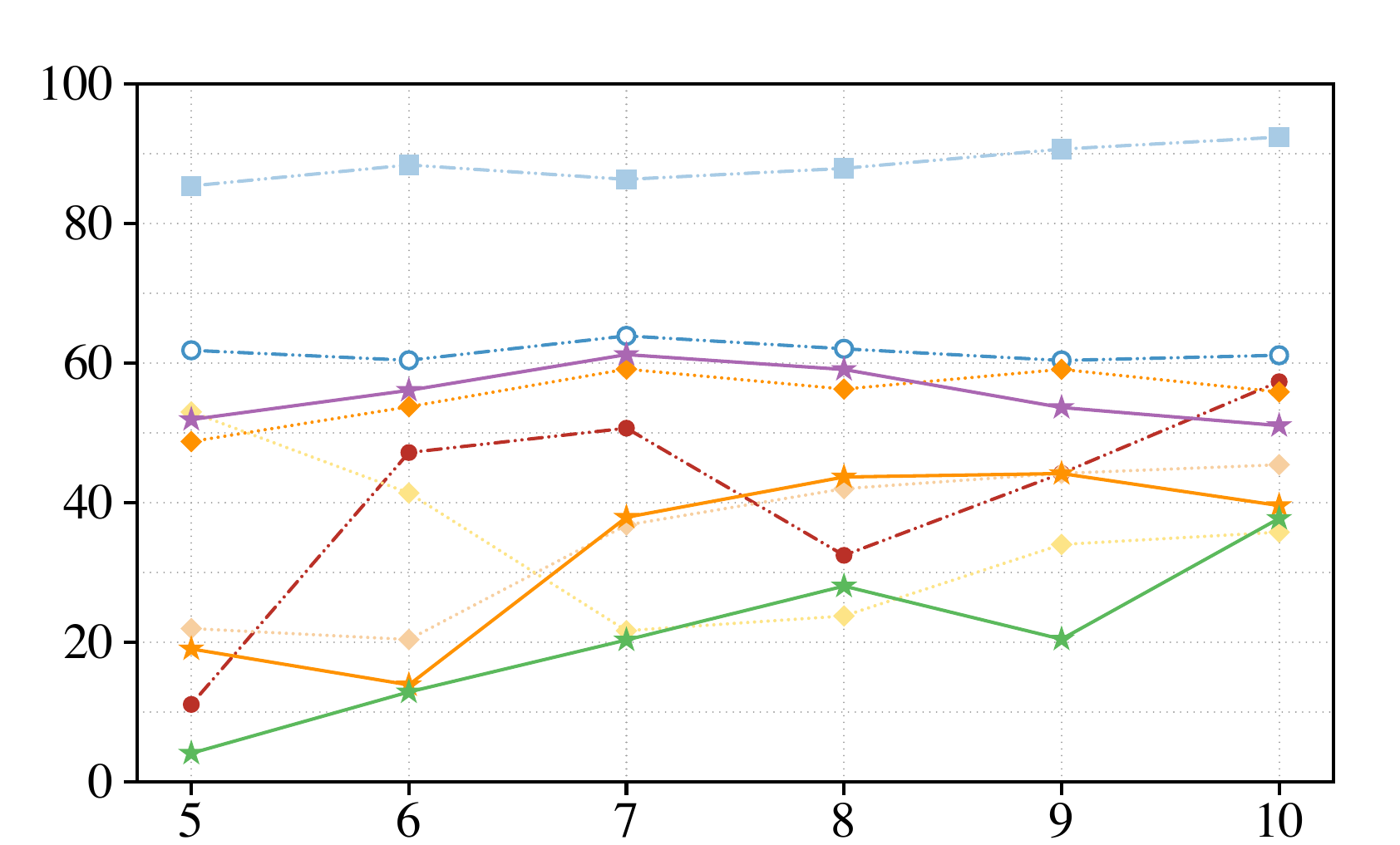}
            \put(0, 140){\textbf{(d)} Linux}
        \end{overpic}
        \label{fig:linux_results_arg}
    \end{subfigure}

    \vspace{1.2em} 

    \begin{subfigure}[b]{0.48\textwidth}
        \centering
        \begin{overpic}[width=\textwidth]{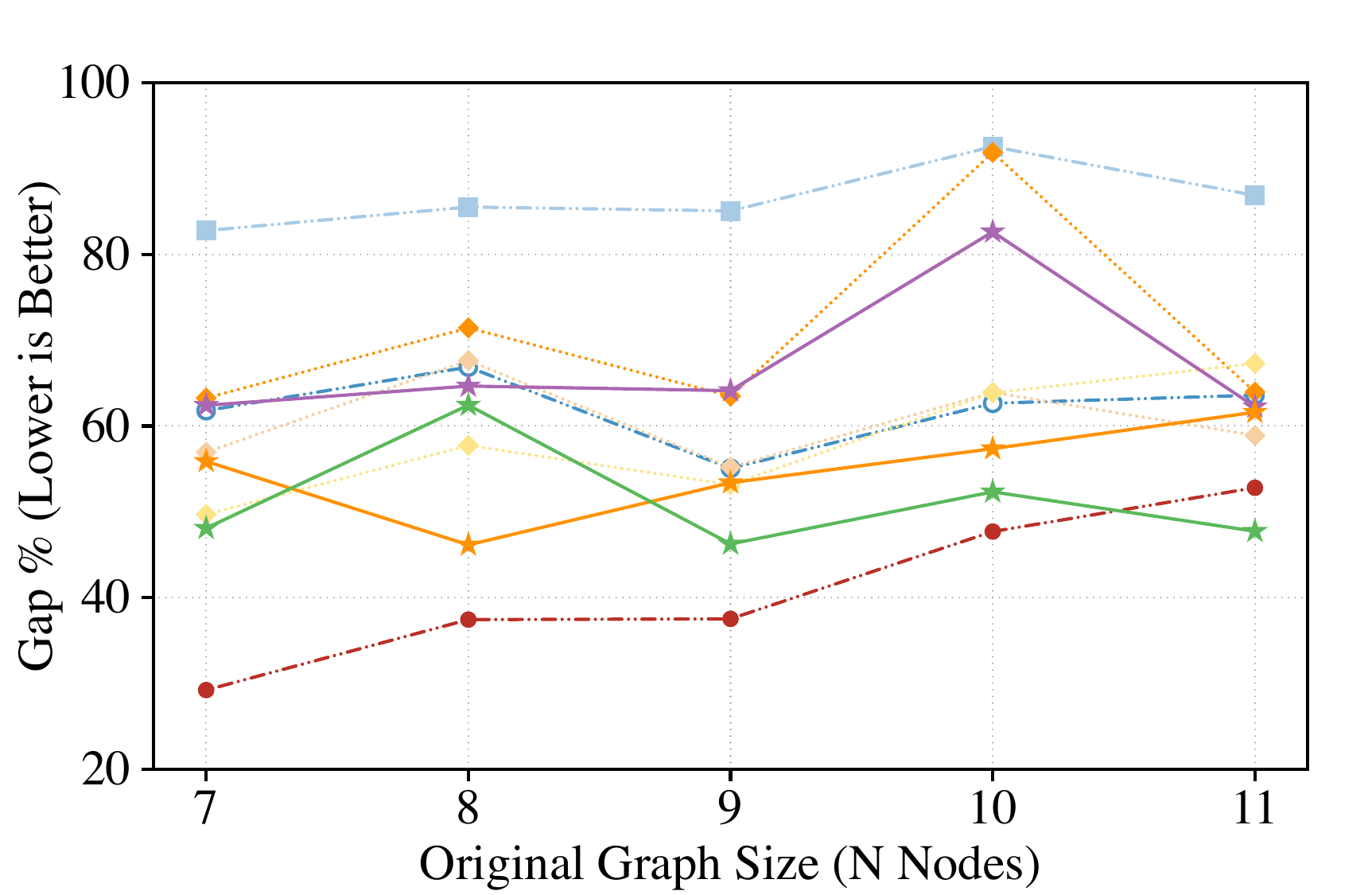}
            \put(0, 155){\textbf{(e)} IMDb}
        \end{overpic}
        \label{fig:imdb_results_arg}
    \end{subfigure}

    \caption{ARG analysis on Regular (3-regular and SK) and Real-world Graph Datasets (AIDS, Linux, and IMDb). Lower ARG values indicate closer proximity to the optimal solution. The results demonstrate that DO-QAOA consistently matches or outperforms the FrozenQubits approach for equivalent partition sizes ($m$).}
    \label{fig:additional_result}
\end{figure*}

\begin{table}[t]
    \centering
    \caption{Summary of graph benchmarks used in evaluation.}
    \label{tab:benchmarks}
    \begin{tabular}{l l c c}
        \hline
        \hline
        Graph Type & \# Instances & Avg. Nodes \\
        \hline
        Power-law  & $\sim$2{,}000 & 4-20 \\
        Regular graphs        & $\sim$2{,}000 & 4-20 \\
        SK model           & $\sim$2{,}000 & 5-12 \\
        \hline
        
        AIDS       & 700            & 4-10  \\
        Linux          & 1{,}000        & 5-10 \\
        IMDb     & 1{,}500        & 7-11  \\
        \hline
        \hline
    \end{tabular}
\end{table}

To further validate performance on practical workloads, we incorporated real-world graph datasets spanning multiple domains. These include 700 molecular graphs from the AIDS dataset provided by the National Cancer Institute (average graph size of 8 nodes)~\cite{aids_riesen2008iam}, 1{,}000 function call graphs extracted from the Linux kernel (average size of 10 nodes)~\cite{linux_graph}, and 1{,}500 collaboration graphs derived from the IMDb movie database (average size of 6 nodes)~\cite{imdb}.
This diverse dataset selection allows us to evaluate our methods across various domains, offering insights into their scalability and adaptability to different graph structures. Table~\ref{tab:benchmarks} summarizes the characteristics of benchmark graph datasets used in our experiments.

\begin{table*}[!htbp]
\centering
\caption{Summary of the comparison between DO-QAOA and state-of-the-art approaches across various metrics and datasets from Table~\ref{tab:benchmarks}, averaged over all node instances. Note that the Total Shots metric is reported in units of $10^6$ (millions) to improve readability.}
\label{tab:additional_comparison}
\begin{tabular}{llcccccc}
\toprule
\toprule
Metric & Method & Power-Law & Regular & SK & AIDS & Linux & IMDb \\ 
\midrule

\multirow{9}{*}{ARG} 
& Baseline & 61 & 68 & 67 & 61 & 61 & 61 \\
& HEA & 45 & 40 & 37 & 37 & 40 & 40 \\
& Red-QAOA & 90 & 87 & 83 & 84 & 88 & 86 \\
& FrozenQubits ($m$=1) & 52 & 57 & 63 & 51 & 55 & 70 \\
& FrozenQubits ($m$=2) & 44 & 48 & 59 & 42 & 35 & 60 \\
& FrozenQubits ($m$=3) & 43 & 45 & 57 & 41 & 34 & 58 \\
& DO-QAOA ($m$=1) & 52 & 56 & 63 & 49 & 55 & 67 \\
& DO-QAOA ($m$=2) & 37 & 49 & 61 & 44 & 33 & 54 \\
& DO-QAOA ($m$=3) & 26 & 42 & 57 & 35 & 20 & 51 \\ 
\midrule

\multirow{9}{*}{\shortstack{Total Shots\\($\times 10^6$)}} 
& Baseline & 8.19 & 8.19 & 8.19 & 8.19 & 8.19 & 8.19 \\
& HEA & 8.19 & 8.19 & 8.19 & 8.19 & 8.19 & 8.19 \\
& Red-QAOA & 8.19 & 8.19 & 8.19 & 8.19 & 8.19 & 8.19 \\
& FrozenQubits ($m$=1) & 16.38 & 16.38 & 16.38 & 16.38 & 16.38 & 16.38 \\
& FrozenQubits ($m$=2) & 32.77 & 32.77 & 32.77 & 32.77 & 32.77 & 32.77 \\
& FrozenQubits ($m$=3) & 65.54 & 65.54 & 65.54 & 65.54 & 65.54 & 65.54 \\
& DO-QAOA ($m$=1) & 0.09 & 0.09 & 0.09 & 0.09 & 0.09 & 0.09 \\
& DO-QAOA ($m$=2) & 0.13 & 0.11 & 0.19 & 0.16 & 0.18 & 0.19 \\
& DO-QAOA ($m$=3) & 0.17 & 0.23 & 0.23 & 0.21 & 0.20 & 0.23 \\ 
\midrule

\multirow{9}{*}{Time (s)} 
& Baseline & 298 & 149 & 140 & 47 & 46 & 53 \\
& HEA & 6,094 & 2,051 & 552 & 360 & 364 & 434 \\
& Red-QAOA & 50 & 37 & 55 & 45 & 46 & 46 \\
& FrozenQubits ($m$=1) & 442 & 158 & 126 & 90 & 88 & 87 \\
& FrozenQubits ($m$=2) & 721 & 213 & 235 & 196 & 186 & 184 \\
& FrozenQubits ($m$=3) & 1,055 & 557 & 434 & 382 & 367 & 358 \\
& DO-QAOA ($m$=1) & 73 & 25 & 27 & 15 & 15 & 21 \\
& DO-QAOA ($m$=2) & 80 & 20 & 41 & 16 & 26 & 25 \\
& DO-QAOA ($m$=3) & 101 & 82 & 67 & 16 & 35 & 33 \\ 
\midrule

\multirow{9}{*}{CNOTs} 
& Baseline & 44 & 72 & 138 & 28 & 27 & 106 \\
& HEA & 12 & 12 & 8 & 7 & 6 & 9 \\
& Red-QAOA & 8 & 19 & 47 & 7 & 7 & 36 \\
& FrozenQubits ($m$=1) & 11 & 30 & 54 & 7 & 7 & 37 \\
& FrozenQubits ($m$=2) & 6 & 24 & 41 & 4 & 2 & 27 \\
& FrozenQubits ($m$=3) & 3 & 7 & 30 & 1 & 0 & 19 \\
& DO-QAOA ($m$=1) & 11 & 30 & 54 & 7 & 7 & 37 \\
& DO-QAOA ($m$=2) & 6 & 24 & 41 & 4 & 2 & 27 \\
& DO-QAOA ($m$=3) & 3 & 7 & 30 & 1 & 0 & 19 \\ 
\midrule

\multirow{9}{*}{Depths} 
& Baseline & 42 & 53 & 87 & 34 & 30 & 122 \\
& HEA & 16 & 16 & 12 & 11 & 11 & 13 \\
& Red-QAOA & 13 & 19 & 36 & 12 & 11 & 44 \\
& FrozenQubits ($m$=1) & 13 & 25 & 40 & 13 & 12 & 31 \\
& FrozenQubits ($m$=2) & 9 & 20 & 33 & 9 & 7 & 25 \\
& FrozenQubits ($m$=3) & 6 & 11 & 28 & 6 & 4 & 21 \\
& DO-QAOA ($m$=1) & 13 & 25 & 40 & 13 & 12 & 31 \\
& DO-QAOA ($m$=2) & 9 & 20 & 33 & 9 & 7 & 25 \\
& DO-QAOA ($m$=3) & 6 & 11 & 28 & 6 & 4 & 21 \\ 
\bottomrule
\bottomrule
\end{tabular}%
\end{table*}

\textbf{Noise Model.}
All experiments were conducted under realistic noisy conditions using a device-level noise model derived from the \texttt{FakeBrisbane} backend provided by IBM Quantum. This noise model captures key NISQ characteristics, including gate infidelities, readout errors, and decoherence effects, allowing us to simulate hardware-constrained execution faithfully. The noise parameters were obtained through the IBM fake provider and runtime infrastructure, ensuring consistency with contemporary superconducting quantum devices~\cite{design_and_architecture_of_ibm_quantum_engine, qiskit_backend_specifications}.

\textbf{Baselines.}
We compared DO-QAOA against four representative baselines: standard QAOA, the Hardware-Efficient Ansatz (HEA)~\cite{hea_qaoa}, Red-QAOA~\cite{red_qaoa}, and FrozenQubits~\cite{FrozenQubits_Ayanzadeh2023}. To compile each circuit, we used a popular quantum software stack to ensure a fair comparison. In particular, each quantum circuit was compiled using IBM’s Qiskit toolchain with transpilation optimization level~2, which prioritizes reduced compilation time while preserving circuit structure. All methods were evaluated under identical noise models and execution settings. We also used the optimization level 2 to balance compilation overhead and circuit fidelity, avoiding aggressive gate rewriting that may alter the effective circuit depth~\cite{qiskit_compiler, comprehensive_review_of_quantum_circuit_optimization}.

\textbf{Hardware Environment.}
All experiments were conducted on a high-performance workstation equipped with an AMD Ryzen Threadripper PRO 5955WX 16-Core CPU, 512~GB of RAM, and three NVIDIA RTX~4090 GPUs. This configuration enabled large-scale parallelization of the classical simulation and optimization components required by the experiments.

\subsection{Per-Dataset Benchmark Results}\label{app:per_dataset_analysis}

\textbf{DO-QAOA on Regular Graphs.}
We study DO-QAOA on 3-regular graphs, and fully connected graphs (or SK model) provide a complementary benchmark characterized by uniform degree distributions and reduced structural asymmetry~\cite{regular_sk_graphs, harrigan2021, solve_spin_glass}. As shown in Table~\ref{tab:additional_comparison}, Fig.~\ref{fig:additional_result}(a) and Fig.~\ref{fig:additional_result}(b), this homogeneity moderates both the benefits and limitations of divide-and-conquer strategies.

Baseline QAOA again exhibits significant degradation under noise (ARG $\approx 68\%$ on 3-regular, and $\approx 67\%$ on SK graphs), As shown in Table~\ref{tab:additional_comparison}, baseline QAOA has the highest CNOT counts and circuit depths, while HEA yields moderate improvement (ARG $\approx 40\%$ on 3-regular, and $\approx 37\%$ on SK graphs), but their execution increase by one to two orders of magnitude compared to DO-QAOA. Red-QAOA remains suboptimal (ARG $\approx 87\%$ and $\approx 83\%$ ), consistent with the loss of essential connectivity information under aggressive reduction.

FrozenQubits improves approximation quality with increasing $m$, achieving an ARG of $45\%$ on 3-regular, and $57\%$ on SK graphs at $m=3$. DO-QAOA matches and slightly improves upon this trend, further reducing the ARG to $42\%$ on 3-regular, and $57\%$ on SK graphs at $m=3$. While the absolute accuracy gain over FrozenQubits is smaller than in power-law graphs, the computational savings remain substantial. Specifically, DO-QAOA reduces total quantum shots from $65.5\times10^{6}$ to $0.23\times10^{6}$ and execution time from $557$~s to $82$~s.

These observations indicate that even in the absence of pronounced hub structures, the Landscape Similarity Hypothesis remains valid. The more uniform topology leads to fewer distinct landscape clusters, enabling efficient parameter reuse with minimal fine-tuning.

\textbf{DO-QAOA on Real-world Graph Datasets.}
We further evaluate DO-QAOA on real-world graphs drawn from diverse application domains, including molecular interaction networks (AIDS), software call graphs (Linux), and collaboration networks (IMDb)~\cite{aids_riesen2008iam, linux_graph, imdb}, as shown in Table~\ref{tab:additional_comparison}, Fig.~\ref{fig:additional_result}(c), Fig.~\ref{fig:additional_result}(d) and Fig.~\ref{fig:additional_result}(e). These datasets exhibit heterogeneous sizes, densities, and noise sensitivities, providing a stringent test of robustness.

Across all real-world benchmarks, baseline QAOA consistently yields large approximation gaps (ARG $\approx 61\%$), while HEA improves stability but plateaus between $37\%$ and $40\%$. Red-QAOA again underperforms, confirming that aggressive reduction strategies are ill-suited for structured, domain-specific graphs.

FrozenQubits shows gradual improvement with increasing $m$, particularly on Linux graphs where freezing high-degree nodes is effective. However, the exponential growth in total shots (up to $65.5\times10^{6}$) and execution time (approximately $380$~s) severely limits scalability.

DO-QAOA consistently matches or outperforms FrozenQubits while dramatically reducing training cost. On the Linux dataset, DO-QAOA with $m=3$ achieves an ARG of $20\%$, compared to $34\%$ for FrozenQubits, while reducing total shots from $65.5\times10^{6}$ to $0.20\times10^{6}$ and runtime from $367$~s to $35$~s. Similar trends are observed for the AIDS and IMDb datasets, where DO-QAOA achieves comparable or superior accuracy with two to three orders of magnitude fewer quantum executions.

Notably, the CNOT counts and circuit depths for DO-QAOA are identical to those of FrozenQubits at the same $m$ (See in Table~\ref{tab:additional_comparison}), confirming that the gains arise exclusively from eliminating redundant optimization loops rather than modifying circuit structure.

%

\end{document}